\newtheorem{theorem}{Theorem}[section]
\newtheorem{lemma}[theorem]{Lemma}
\newtheorem{proposition}[theorem]{Proposition}
\newtheorem{definition}{Definition}[section]
\newtheorem{remark}{Remark}[section]
\newtheorem{problem}{Problem}[section]
\newcommand{\pp}{\mathbb{P}}
\newcommand{\ie}{\textit{i.e.}}
\newcommand{\one}{\mathbbm{1}}
\newcommand{\image}{\textnormal{Im}}
\newcommand{\etal}{\textit{et al.}}
\newcommand{\Span}{\textnormal{span}}
\newcommand{\NP}{\mathsf{NP}}
\newcommand{\Hom}{\textnormal{Hom}}
\newcommand{\Herm}{\textnormal{Herm}}
\newcommand{\Conv}{\mathop{\scalebox{1.5}{\raisebox{-0.2ex}{$\ast$}}}}
\newcommand{\prob}[1]{\textnormal{\texttt{#1}}}
\newcommand{\Tr}{\textnormal{Tr}}
\title[Topics in Non-local Games]{Topics in Non-local Games: Synchronous Algebras, Algebraic Graph Identities, and Quantum NP-hardness Reductions 
}
\author[E. He]{Entong He$^{1}$} 
\address[1]{Department of Computer Science, The University of Hong Kong, Pokfulam, Hong Kong SAR, China}
\email{entong\_he@connect.hku.hk}
\begin{document}

\begin{abstract}
    We review the correspondence between synchronous games and their associated $*$-algebra. Building upon the work of (Helton \etal, New York J. Math. 2017), we propose results on algebraic and locally commuting graph identities. Based on the noncommutative Nullstellens\"atze (Watts, Helton and Klep, Annales Henri Poincar\'e 2023), we build computational tools that check the non-existence of perfect $C^*$ and algebraic strategies of synchronous games using Gr\"obner basis methods and semidefinite programming. We prove the equivalence between the hereditary and $C^*$ models questioned in (Helton \etal, New York J. Math. 2019). We also extend the quantum-version $\NP$-hardness reduction $\texttt{3-SAT}^* \leq_p \texttt{3-Coloring}^*$ due to (Ji, arXiv 2013) by exhibiting another instance of such reduction $\texttt{3-SAT}^* \leq_p \texttt{Clique}^*$.
\end{abstract}

\maketitle

\tableofcontents

\section{Introduction}
A generic version of a two-player non-local game involved two spatially separated players and a referee. Players attempt to convince the referee of the existence of a certain property by conducting several rounds of interactions with the referee. A Boolean function, called a predicate, evaluates each pair of inputs and outputs to indicate whether the players win or lose in the game. A comprehensive introduction can be found in \cite{Cleve2014}. Generally, when the players share classical randomness, they can win perfectly if and only if the underlying Boolean constraint system (BCS) is satisfiable. However, shared quantum resources generalize the classical strategy space, yielding quantum strategies. There are examples of games where quantum strategies admit non-zero advantage \cite{Bell1964, Mermin1990, Peres1990}. BCS non-local games also motivate new topics in complexity theory. Ji suggested the satisfiability problems with quantum operator assignment and lifted several NP-hardness reductions to their quantum versions in \cite{Ji2013}. In \cite{atserias2017generalized}, it is proved that quantum advantages only exist for certain types of constraint satisfaction problems (CSP).
\par The synchronous game is a special subclass of non-local game, where two players share the same input and output set, and the predicate requires a `consistency check' \cite{helton2023synchronousvaluesgames}. In \cite{BeneWattsNullstellensatze2023}, the theory of noncommutative (NC) real algebraic geometry is developed which bears on the perfectness of synchronous games. Graph homomorphism games are a subclass of synchronous games, and from certain perfect strategies for the homomorphism game, we can abstract new graph identities. A detailed discussion is provided in \cite{Ortiz2016}. Helton, Meyer, Paulsen, and Satriano \cite{helton2019algebras} investigate the existence of various types of operator-valued graph homomorphisms by examining properties of the associated $*$-algebra of the associated synchronous games. They also suggested two special types of algebra, the hereditary and locally commuting algebra. Some preliminary results are presented from the literature, as well as questions for further exploration.
\par This report introduces the background of non-local games, summarizes some important results, outlines implementations of some computational tools, and contributes some new results. We leave some problems open, hoping that subsequent works will fill these gaps.

\section{Preliminaries and Notations}
\subsection{Non-local games and synchronous games} 
A \textbf{two-player finite input-output game} for players Alice ($A$) and Bob ($B$) is specified by a tuple $\mathcal{G} = (I_A, I_B, O_A, O_B, \lambda)$ where $|I_A|, |I_B|, |O_A|, |O_B| < \infty$ and a predicate $\lambda: I_A \times I_B \times O_A \times O_B \to \{0, 1\}$ indicating the win or lose outcome of a round. A \textbf{strategy} for the players is the collection of probabilities $\left\{\pp(a, b | i, j)\right\}_{i, j, a, b}$ where $(i, j) \in I_A \times I_B$ and $(a, b) \in O_A \times O_B$. It can be established as follows: Two players agree on a measure space $(\Omega, \mathcal{F}, \pp)$ and assign measurable functions $f_i$, $g_j$ for Alice and Bob on each index $i \in I_A$, $j \in I_B$ respectively before the game starts. The functions $f_i: \Omega \to O_A, g_j: \Omega \to O_B$ indicate the strategies. A strategy $\{ f_i, g_j \}_{i \in I_A, j \in I_B}$ is said \textbf{perfect} if 
\begin{equation}
\forall (i, j) \in I_A \times I_B, \quad \pp \left\{ \omega \in \Omega:~ \lambda(i, j, f_i(\omega), g_j(\omega)) = 1 \right\} = 1.
\end{equation}
\par A non-local game is a \textbf{synchronous game} provided that the players are sharing the same input and output spaces, \ie, $I_A = I_B = I$ and $O_A = O_B = O$, plus the predicate satisfies the synchronous condition, \ie, for any $i \in I$, $\lambda(i, i, a, b) = \delta_{a, b}$ for any $a, b \in O$.

\subsection{Basics of groups and $*$-algebras}
Let $S$ be an arbitrary set, $\mathcal{F}_S$ is the free group with generating set $S$, \ie, $\mathcal{F}_S = \{ \prod_{i = 1}^{ n \in \mathbb{N} \cup \{0\} } x_{i}:~x_i \in S \cup S^{-1} \}$. A unital algebra $\mathcal{A}$ (\ie, $1_{\mathcal{A}} \in \mathcal{A}$) is a \textbf{$*$-algebra} (over a field $\mathbb{F}$ with an involution) if it is equipped with an \textit{involution} $x \mapsto x^*$, such that
\textbf{(i)} $(1_{\mathcal{A}})^* = 1_{\mathcal{A}}$ and $(a^*)^* = a$ for any $a \in \mathcal{A}$; \textbf{(ii)} for any $a, b \in \mathcal{A}$, $(ab)^* = b^*a^*$; \textbf{(iii)} for any $a, b \in \mathcal{A}$, $\lambda, \mu \in \mathbb{F}$, $(\lambda a + \mu b)^* = \lambda^* a^* + \mu^* b^*$. A $*$-homomorphism $\pi: \mathcal{A} \to \mathcal{B}$ between two $*$-algebras $\mathcal{A}$ and $\mathcal{B}$ is an algebra homomorphism such that $\forall a \in \mathcal{A}$, $\pi(a^*) = \pi(a)^*$. We define the \textbf{free noncommutative $*$-algebra} on a generating set $S$ by $\mathbb{F}^*\left<S\right> := \{ \sum_{i=1}^{n \in \mathbb{N} \cup \{0\} } \lambda_i \prod_{ j = 1 }^{m_i} x_{i, j}: ~m_i \geq 1,~\lambda_i \in \mathbb{F},~x_{i, j} \in S \cup S^* \}$.
Both groups and $*$-algebras can be defined by \textbf{presentations}. We define $\left<S: R\right>$ as the group quotient $\mathcal{F}_{S} / \left<R\right>$ where $R \subseteq \mathcal{F}_S$ and $\left<R\right> \unlhd \mathcal{F}_{S}$, and define $\mathbb{F}^*\left<S: R\right>$ as the quotient of algebra $\mathbb{F}^*\left<S\right> / \left<\left<R\right>\right>$. For two groups $G = \left<S_G:R_G\right>$, $H = \left<S_H:R_H\right>$, the \textbf{group free product} $G * H = \left< S_G \cup S_H : R_G \cup R_H \right>$. Analogously, we can define the \textbf{algebra free product} of two algebras $G = \mathbb{F}^*\left<S_G: R_G\right>$ and $H = \mathbb{F}^*\left<S_H: R_H\right>$ as $G * H = \mathbb{F}^*\left<S_G \cup S_H: R_G \cup R_H \right>$. For two group or algebra elements $a, b$, we use $[a, b] = ab - ba$ to specify their \textbf{commutator}.
\par The \textbf{sum-of-square cone} $SOS_{\mathcal{A}} := \{ \sum_i a_i^* a_i:~a_i \in \mathcal{A} \}$ is a hermitian subset of $\mathcal{A}$. A partial ordering on $\mathcal{A}$ can be induced by setting $x \leq y$ if $x - y \in SOS_{\mathcal{A}}$ for any $x, y \in \mathcal{A}$. A \textbf{state} on $\mathcal{A}$ is a linear functional $\psi: \mathcal{A} \to \mathbb{F}$ $\in \{\mathbb{R}, \mathbb{C}\}$ such that $\psi(1_{\mathcal{A}}) = 1$, $\psi(a^*) = \psi(a)^*$ and $\psi(a^* a) \geq 0$ for all $a \in \mathcal{A}$, and it is called \textit{tracial} if $\forall a, b \in \mathcal{A}$, $\psi(ab) = \psi(ba)$. 
Intuitively, for any state $\psi$ on $\mathcal{A}$, $\psi(SOS_{\mathcal{A}}) \subseteq \mathbb{R}_{\geq 0}$. We denote $\textnormal{Herm}(\mathcal{A}) = \{a \in \mathcal{A}:~a^* = a \}$ as the hermitian elements in $\mathcal{A}$, and $\mathcal{A}_{bdd}$ as the \textbf{$*$-subalgebra of bounded elements} in $\mathcal{A}$, which is defined by
$$
\mathcal{A}_{bdd} := \left\{a \in \mathcal{A}:~\exists B \in \mathbb{R}_{\geq 0},~a^* a \leq R \cdot 1 \right\}.
$$
We say $\mathcal{A}$ is \textit{Archimedean} if $\mathcal{A} = \mathcal{A}_{bdd}$.

\subsection{Synchronous algebras, and a hierarchy of strategies}
\label{sec:AlgebraDefinition}
The key idea of non-local games is that the shared quantum resources allow a greater feasible set of correlations than the classical 
setting. The existence of perfect strategy for synchronous games can be examined by studying their algebraic properties.
Each synchronous game $\mathcal{G} = (I, O, \lambda)$ is associated with a $*$-algbera, called the \textbf{synchronous algebra}, which is the quotient $\mathbb{C}[\mathcal{F}(|I|, |O|)] / \mathcal{I}(\mathcal{G})$, and is denoted as $\mathcal{A}(\mathcal{G})$. Here $\mathcal{F}(|I|, |O|)$ is the free product of $|I|$ unitary groups, each of order $|O|$. Alternatively, the generators $\{x_{i, a} \}_{i \in I, a \in O}$ abstract the algebraic properties of projection-valued measures (PVMs), and the $*$-closed, two-sided ideal $\mathcal{I}(\mathcal{G})$ is generated by the following relations
\begin{equation} \label{eqn:NonlocalGameRelations}
\begin{array}{ll}
    \bullet~x_{i, a}^2 - x_{i, a}, \quad \forall i \in I, a \in O & \bullet~ x_{i, a}^* - x_{i, a}, \quad \forall i \in I, a \in O \\
    \bullet~ \sum_{a \in O} x_{i, a} - 1, \quad \forall i \in I & \bullet~ x_{i, a} x_{j, b}, \quad \forall (i, j, a, b) \in \lambda^{-1}(\{0\}).
\end{array}
\end{equation}
Based on the existence of unital $*$-homomorphism from $\mathcal{A}(\mathcal{G})$ to certain unital $*$-algebras $\mathcal{B}$, we introduce the concept of \textbf{perfect $t$-strategy} (or equivalently, $t$-satisfiability).  Certain choices of $t$ are $\{loc, q, qa, qc, C^*, hered, alg \}$, and we focus only on a few of them in this work. Specifically, we write $\mathcal{A}(\mathcal{G}) \to \mathcal{B}$ if there is a unital $*$-homomorphism $\rho: \mathcal{A}(\mathcal{G}) \to \mathcal{B}$. \cite{Cleve2014, Paulsen_2016} suggest the following definitions:
\begin{align*}
&\text{1. $\mathcal{G}$ has a perfect $loc$-strategy if $\mathcal{A}(\mathcal{G}) \to \mathbb{C}$.}        \\       &\text{2. $\mathcal{G}$ has a perfect $q$-strategy if $\mathcal{A}(\mathcal{G}) \to \mathbb{C}^{d\times d}$, $d \in \mathbb{N}$.}                                                          \\
&\text{3. $\mathcal{G}$ has a perfect $qa$-strategy if $\mathcal{A}(\mathcal{G}) \to  \mathcal{R}^{\mathcal{U}}$}\footnotemark.  \\
&\text{4. $\mathcal{G}$ has a perfect $qc$-strategy if $\mathcal{A}(\mathcal{G}) \to (\mathcal{T}, \tau)$ for a unital $C^*$-algebra $\mathcal{T}$ with a} \\
&\quad \text{~tracial state $\tau$.}   \\
&\text{5. $\mathcal{G}$ has a perfect $C^*$-strategy if $\mathcal{A}(\mathcal{G}) \to \mathcal{B}(\mathcal{H})$ for a non-trivial Hilbert space $\mathcal{H}$.} \\
&\text{6. $\mathcal{G}$ has a perfect $alg$-strategy if $\mathcal{A}(\mathcal{G}) \neq (0)$.}
\end{align*}
\footnotetext{$\mathcal{R}^{\mathcal{U}}$ is an ultrapower of the hyperfinite $\text{II}_{1}$ factor $\mathcal{R}$, which is beyond the scope of our discussion. For detailed preliminaries see \cite{Ozawa2013}.}
The definition of $hered$-strategy is postponed to Section \ref{sec:heredEqualsCstar}. We refer readers to \cite{helton2019algebras, harris2023universalitygraphhomomorphismgames, Ortiz2016} for detailed information on their characteristics. Denote $\mathcal{M}_{t}$ as the set of $t$-models, \ie, the existence of perfect $t$-strategies of all synchronous games, the following hierarchical order holds:
\begin{equation}
\mathcal{M}_{loc} \subsetneq \mathcal{M}_{q} \subseteq \mathcal{M}_{qa} \subseteq \mathcal{M}_{qc} \subsetneq \mathcal{M}_{C^*} \subseteq \mathcal{M}_{hered} \subsetneq \mathcal{M}_{alg}.
\end{equation}
There are certain constructions of synchronous games where the consecutive neighboring models differ. For instance, $\mathcal{M}_{loc} \neq \mathcal{M}_q$ \cite{Mariia2024quantumindependencechromaticnumbers} and $\mathcal{M}_{qc} \neq \mathcal{M}_{C^*}$ \cite{paddock2023satisfiability}. It is proved in \cite{Ozawa2013} that deciding whether $\mathcal{M}_{qa} = \mathcal{M}_{qc}$ is equivalent to Connes embedding conjecture, which is announced false by Ji, Natarajan, Vidick, Wright, and Yuen in \cite{ji2022mipre} as a by-product of the $\prob{MIP}^* = \prob{RE}$ result. We will prove $\mathcal{M}_{C^*} = \mathcal{M}_{hered}$ for synchronous games.

\section{Algebraic and locally commuting clique number}

We are interested in two special graph identities, \ie, the algebraic clique number, and the locally commuting clique number, which are introduced by Helton, Meyer, Paulsen, and Satriano in \cite{helton2019algebras}. They are derived from the following definition of the graph homomorphism game.
\begin{definition}
   Given two graphs $G, H$, the \textbf{graph homomorphism game} is the synchronous game $\mathcal{G} = (I, O, \lambda)$, such that $I = V(G)$, $O = V(H)$, and predicate $\lambda$ satisfies
   $$
   \lambda(a, a, u, v) = \delta_{u, v}, \quad \lambda(a, b, u, v) = \left\{ 
    \begin{aligned}
        &1 &(a \sim_{G} b \land u \sim_H v) \lor (a \not \sim_{G} b) \\
        &0 &\text{otherwise}.
    \end{aligned}
   \right. 
   $$
   Throughout the report, we use the notation $\Hom(G, H)$ to represent the above graph homomorphism game between $G$ and $H$. We denote $G \overset{alg}{\longrightarrow} H$ if $\mathcal{A}(\Hom(G, H)) \neq (0)$. The relation of algebraic graph homomorphism is transitive, in view of the following theorem.

\begin{proposition}
\label{thm:algTransitivity}
    $G \overset{alg}{\longrightarrow} H$ and $H \overset{alg}{\longrightarrow} K$ implies $G \overset{alg}{\longrightarrow} K$ for any graphs $G, H, K$.

\end{proposition}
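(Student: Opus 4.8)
The plan is to construct a nonzero $*$-algebra homomorphism $\mathcal{A}(\Hom(G,K)) \to \mathcal{A}(\Hom(G,H)) \otimes \mathcal{A}(\Hom(H,K))$, or more precisely into a suitable quotient/tensor-type object built from the two given nonzero synchronous algebras, so that the target being nonzero forces $\mathcal{A}(\Hom(G,K)) \ne (0)$. Concretely, write $x_{u,w}$ for the generators of $\mathcal{A}(\Hom(G,K))$ (indexed by $u \in V(G)$, $w \in V(K)$), write $y_{u,v}$ for the generators of $\mathcal{A}(\Hom(G,H))$ ($u \in V(G)$, $v \in V(H)$), and $z_{v,w}$ for those of $\mathcal{A}(\Hom(H,K))$ ($v \in V(H)$, $w \in V(K)$). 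The natural guess, mimicking the composition of quantum graph homomorphisms, is to send
\[
  x_{u,w} \;\longmapsto\; \sum_{v \in V(H)} y_{u,v} \otimes z_{v,w}
\]
inside $\mathcal{A}(\Hom(G,H)) \otimes \mathcal{A}(\Hom(H,K))$. First I would check that this assignment respects all four families of defining relations in \eqref{eqn:NonlocalGameRelations}: self-adjointness is immediate since $y_{u,v}$ and $z_{v,w}$ are projections; the row-sum relation $\sum_w x_{u,w} \mapsto \sum_{v} y_{u,v} \otimes (\sum_w z_{v,w}) = \sum_v y_{u,v}\otimes 1 = 1 \otimes 1$ uses the row-sum relations on both factors; idempotency $x_{u,w}^2 \mapsto \sum_{v,v'} y_{u,v}y_{u,v'} \otimes z_{v,w}z_{v',w}$ collapses to $\sum_v y_{u,v}\otimes z_{v,w}$ because $y_{u,v}y_{u,v'} = \delta_{v,v'} y_{u,v}$ (orthogonality of a PVM, which follows from idempotency plus the row-sum relation in a synchronous algebra — I would either cite this standard fact from \cite{helton2019algebras, Ortiz2016} or derive it); and the orthogonality relation $x_{u,w}x_{u',w'} = 0$ whenever $u \sim_G u'$ and $w \not\sim_K w'$ must be shown to map to $0$.

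The last point is the crux and I expect it to be the main obstacle: I need that whenever $u \sim_G u'$ and $w \not\sim_K w'$, the image $\sum_{v,v'} y_{u,v} y_{u',v'} \otimes z_{v,w} z_{v',w'}$ vanishes. The relations available are: in $\mathcal{A}(\Hom(G,H))$, $y_{u,v}y_{u',v'} = 0$ when $u \sim_G u'$ and $v \not\sim_H v'$; and in $\mathcal{A}(\Hom(H,K))$, $z_{v,w}z_{v',w'} = 0$ when $v \sim_H v'$ and $w \not\sim_K w'$. So in the double sum, a term $(v,v')$ survives only if ($v \sim_H v'$ or $v = v'$) from the first factor's relation and ($v \not\sim_H v'$ or ... ) — wait, more carefully: the $(v,v')$ term is killed by the first tensor factor unless $v \sim_H v'$ or $v=v'$; it is killed by the second factor (given $w\not\sim_K w'$) unless $v \not\sim_H v'$. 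Since $v \sim_H v'$ and $v \not\sim_H v'$ are incompatible, the only possibly-surviving terms have $v = v'$; but then $z_{v,w}z_{v,w'} = 0$ because $x_{v,w}x_{v,w'}$ with $w \ne w'$ (which holds since $w \not\sim_K w'$, and in a simple graph non-adjacent distinct vertices — note $w \not\sim_K w'$ includes the case $w = w'$ only if there are no loops; I need $w \ne w'$ here, so I should treat $w = w'$ separately, but then $x_{u,w}x_{u',w} $ maps into terms $z_{v,w}z_{v,w} = z_{v,w}$ and I use instead that... actually if $w = w'$ then $w \not\sim_K w'$ fails for loopless graphs, so this case is vacuous) corresponds to the synchronous/consistency relation $\lambda(v,v,w,w') = \delta_{w,w'} = 0$, i.e. $z_{v,w}z_{v,w'} = 0$ is a defining relation of $\mathcal{A}(\Hom(H,K))$. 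Hence every term vanishes. I would also double-check the degenerate possibility $u = u'$ (then $u \sim_G u'$ fails for loopless $G$, vacuous) so the relation set of $\Hom(G,K)$ is fully covered.

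Having verified the homomorphism is well-defined, I would finish as follows: $\mathcal{A}(\Hom(G,H)) \ne (0)$ and $\mathcal{A}(\Hom(H,K)) \ne (0)$ are given. It remains to know that the algebra free-product-type tensor construction of two nonzero synchronous algebras is itself nonzero, and that $\rho$ is not the zero map — the latter because $\rho(1) = 1 \ne 0$ in the target (as the target is a unital algebra and $\rho$ is unital). For the target being nonzero: each synchronous algebra admits a nonzero $*$-representation — indeed if $\mathcal{A}(\mathcal{G}) \ne (0)$ then, being a quotient of a free product of finite-dimensional group algebras, it has a nonzero finite-dimensional or at least a nonzero $*$-algebra quotient, and one can form the algebraic tensor product of two such representations to get a nonzero unital $*$-algebra receiving both. (If one prefers to avoid representation theory, one may instead take $\mathcal{A}(\Hom(G,H)) \otimes_{\mathbb{C}} \mathcal{A}(\Hom(H,K))$ directly, which is nonzero whenever both factors are, since the tensor product of nonzero vector spaces is nonzero and the multiplication is the obvious one.) Composing $\rho$ with the quotient/identity, the image is a nonzero algebra that is a homomorphic image of $\mathcal{A}(\Hom(G,K))$, so $\mathcal{A}(\Hom(G,K)) \ne (0)$, i.e. $G \overset{alg}{\longrightarrow} K$. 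I would remark that this argument is the algebraic shadow of the familiar fact that graph homomorphisms (and their quantum/$C^*$ analogues) compose, with the PVM orthogonality relations doing the combinatorial bookkeeping that adjacency does classically.
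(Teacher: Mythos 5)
Your proposal is correct and follows essentially the same route as the paper: both define the candidate generators by $x_{g,k}\mapsto\sum_{h}x_{g,h}\otimes x_{h,k}$ in the tensor product of the two given nonzero synchronous algebras, verify the four families of relations (with the same case analysis on $v=v'$ versus $v\sim_H v'$ for the orthogonality relations), and conclude from $1\otimes 1\neq 0$ that the image, hence $\mathcal{A}(\Hom(G,K))$, is nonzero.
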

\begin{proof}
The proof closely resembles the proof in the $lc$-case in Lemma 8.1 of \cite{helton2019algebras}. 
   For any $g \in V(G), k \in V(K)$, we define $x_{g, k} := \sum_{h \in V(H)} x_{g, h} \otimes x_{h, k}$. With this construction, we claim that $\mathbb{C}\left< x_{g, k} \right>$ is non-trivial, since if $x_{g, k} = 0$ for any $(g, k)$, then $0 = \sum_{k \in V(K)} x_{g, k} = \sum_{k \in V(K)} \sum_{h \in V(H)} x_{g, h} \otimes x_{h, k} = \sum_{h \in V(H)} x_{g, h} \otimes \sum_{k \in V(K)} x_{h, k} = 1$, then one of $\mathcal{A}(\Hom(G, H))$ and $\mathcal{A}(\Hom(H, K))$ is forced to be trivial, a contradiction. By
    $$
    \begin{aligned}
        \sum_{k \in V(K)} x_{g, k} &= \sum_{k \in V(K)} \sum_{h \in V(H)} x_{g, h} \otimes x_{h, k} = \sum_{h \in V(H)} x_{g, h} \otimes \sum_{k \in V(K)} x_{h, k} = 1; \\
        x_{g, k}^2 &= \sum_{h, h' \in V(H)} {x_{g, h} x_{g, h'}} \otimes x_{h, k} x_{h', k} = \sum_{h \in V(H)} x_{g, h}^2 \otimes x_{h, k}^2 \\
        &= \sum_{h \in V(H)} x_{g, h} \otimes x_{h, k} = x_{g, k}; \\
        x_{g, k} x_{g', k'} &= \sum_{h, h' \in V(H)} x_{g, h} x_{g', h'} \cdot \one_{g \sim_G g' \land h \sim_H h'} \otimes x_{h, k} x_{h', k'} \cdot \one_{h \sim_H h' \land k \sim_K k'} \\
        &= x_{g, k} x_{g', k'} \cdot \one_{k \sim_K k' \land g \sim_G g'}; \\
        x_{g, k} x_{g, k'} &= \sum_{h, h' \in V(H)} x_{g, h} x_{g, h'} \otimes x_{h, k} x_{h', k'} = \sum_{h \in V(H)} x_{g, h}^2 \otimes x_{h, k} x_{h, k'} = x_{g, k}^2 \cdot \delta_{k, k'},
    \end{aligned}
    $$
    $\mathbb{C}\left<x_{g, k}\right>$ vanishes on $\mathcal{I}(\Hom(G, K))$, resulting in the desired $\mathcal{A}(\Hom(G, K))$.
\end{proof}

  For any graph $G$, we define the \textbf{algebraic clique number} $\omega_{alg}(G)$ and the \textbf{locally commuting clique number} $\omega_{lc}(G)$ by
   $$
   \begin{aligned}
   \omega_{alg}(G) &= \max\left\{ n \in \mathbb{N}~|~K_n \overset{alg}{\longrightarrow} G  \right\}, \\
   \omega_{lc}(G) &= \max \left\{ n \in \mathbb{N}~|~\mathcal{A}(\Hom(K_n, G)) / \left< [x_{i, u}, x_{j, v}]:~u \sim_G v, \forall i, j \in [n] \right> \neq (0)  \right\}.
   \end{aligned}
   $$
\end{definition}
   They are analogs of the algebraic chromatic number $\chi_{alg}(G)$ and the locally commuting chromatic number $\chi_{lc}(G)$ discussed in \cite{helton2019algebras} (these chromatic numbers are defined by replacing $\Hom(K_n, G)$ by $\Hom(G, K_n)$ in the above definitions). We present two results, as the follow-up of Helton \etal's result on algebraic chromatic numbers and some problems they left open.
   \par It is shown in \cite{helton2019algebras} that there is an isomorphism between each game $*$-algebra $\mathbb{C}[\mathcal{F}(|I|, |O|)] / \mathcal{I}(\mathcal{G})$ to a free unital algebra $\mathbb{C}\left< x_{i, a}:i \in I, a \in O \right> / \mathcal{I}(|I|, |O|)$, where in the latter algebra all relations are preserved except for $x_{i, a}^* - x_{i, a}$. Thus, in the latter context, we do not distinguish between these two algebras since their mutual interpretation is natural. Besides, it is proved in the same literature that $\mathcal{A}_{\mathbb{F}}(\mathcal{G}) = \mathcal{A}_{\mathbb{Q}}(\mathcal{G}) \otimes_{\mathbb{Q}} \mathbb{F}$ for any field $\mathbb{F}$ containing $\mathbb{Q}$, which enables we to analyze the algebraic graph identities using a noncommutative analog of the Gr\"obner basis (GB) method. We herein briefly review the definition of GB and its fundamental properties.

\begin{definition}
    Given an NC multivariate polynomial ring $\mathbb{C}\left<x\right>$ with a fixed monomial order, and $\mathcal{I}$ is an ideal of $\mathbb{C}\left<x\right>$. A subset $S \subseteq \mathcal{I}$ is a \textbf{Gr\"obner basis} of $\mathcal{I}$ if $\left<\textnormal{LT}(S)\right> = \mathcal{I}$, where $\textnormal{LT}(S)$ is the leading terms of members of $S$. If $\mathcal{I}$ admits finite generating relations $\{h_i\}_{i=1}^m$, its NC GB can be obtained by applying the NC version of Buchberger algorithm on the initial set $\{h_1, \dots, h_m\}$ (for details see \cite{Mora1986}).
\end{definition}

\begin{theorem} \cite{Mora1986}
    For any polynomial $f \in \mathbb{C}\left<x\right>$, $\mathcal{I} \subseteq \mathbb{C}\left<x\right>$ is a (left) ideal. If $S$ is a GB of $\mathcal{I}$, then the remainder $r$ of $f$ being (left) divided by $S$ is unique (independent of the order of division). We denote the procedure as $f \longrightarrow_{S} r$, where $r = 0$ if and only if $f \in \mathcal{I}$.
\end{theorem}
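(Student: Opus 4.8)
The statement is the standard uniqueness-of-normal-form theorem for Gr\"obner bases, here in the one-sided (left) noncommutative setting, so the plan is to make the left-division procedure precise, establish that it terminates, deduce uniqueness of the remainder from the defining property of $S$, and finally read off the membership criterion.

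First I would fix the monomial order to be a well-order on the word monoid underlying $\mathbb{C}\langle x\rangle$ (for concreteness, a degree-lexicographic order); this is the standard hypothesis in NC Gr\"obner basis theory and is exactly what forces termination below. Call a polynomial $p$ \emph{$S$-reduced} if no monomial occurring in $p$ has some $\mathrm{LT}(s)$, $s \in S$, as a suffix, i.e.\ no monomial of $p$ is left-divisible by a leading term of $S$. The left-division algorithm on input $f$ repeatedly inspects the largest monomial $m$ (with coefficient $c$) of the current working polynomial $g$: if $m = w \cdot \mathrm{LT}(s)$ for a word $w$ and some $s \in S$ with leading coefficient $c'$, it replaces $g$ by $g - \frac{c}{c'}\, w\, s$, which cancels $m$ and introduces only monomials strictly smaller than $m$; otherwise it moves the term $c\,m$ into an accumulator $r$ and deletes it from $g$. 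In either case the largest monomial still awaiting processing strictly decreases, so by the well-ordering the algorithm halts, producing $f = h + r$ with $r$ being $S$-reduced and $h$ lying in the left ideal generated by $S$; since $S$ is a Gr\"obner basis of $\mathcal{I}$, in particular $\langle S\rangle = \mathcal{I}$, we get $h \in \mathcal{I}$.

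For uniqueness, suppose two runs of the algorithm (with possibly different reduction choices along the way) output $f = h_1 + r_1 = h_2 + r_2$ with $h_1, h_2 \in \mathcal{I}$ and $r_1, r_2$ both $S$-reduced. Then $r_1 - r_2 = h_2 - h_1 \in \mathcal{I}$, and since every monomial of $r_1 - r_2$ already occurs in $r_1$ or $r_2$, the difference $r_1 - r_2$ is again $S$-reduced. If $r_1 - r_2 \neq 0$, then $\mathrm{LT}(r_1 - r_2)$ is the leading term of a nonzero element of $\mathcal{I}$, so by the Gr\"obner basis property (the leading terms of $S$ generate, as a left ideal, the leading-term ideal of $\mathcal{I}$) it is left-divisible by some $\mathrm{LT}(s)$ --- contradicting $S$-reducedness. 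Hence $r_1 = r_2$, so the remainder depends only on $f$ and $S$ and the notation $f \longrightarrow_S r$ is justified. Running the same argument with $f$ itself in place of $r_1 - r_2$ yields the final equivalence: if $f \in \mathcal{I}$ then $r = f - h_1 \in \mathcal{I}$ is $S$-reduced, hence $r = 0$; conversely $r = 0$ gives $f = h_1 \in \mathcal{I}$.

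The only genuinely delicate point I expect is termination: one must insist that the chosen monomial order really is a well-order on words --- which does exist for NC polynomial rings despite there being infinitely many monomials --- and organize the division so that monomials are consumed in strictly decreasing order; this is where Mora's original treatment in \cite{Mora1986} invests its care. Once termination is granted, both the uniqueness of $r$ and the criterion that $r = 0$ precisely when $f \in \mathcal{I}$ are formal consequences of the definition of a Gr\"obner basis. (In the notation of the preceding definition, the condition $\langle \mathrm{LT}(S)\rangle = \mathcal{I}$ is to be read as $\langle \mathrm{LT}(S)\rangle = \langle \mathrm{LT}(f) : f \in \mathcal{I}\rangle$, the leading-term ideal.)
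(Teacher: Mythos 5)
The paper gives no proof of this theorem---it is quoted from \cite{Mora1986}---and your argument is the standard one from that setting: well-ordered multiplicative monomial order for termination, $S$-reducedness of $r_1-r_2$ versus the leading-term-ideal property for uniqueness, and the same dichotomy applied to $f-h_1$ for the membership criterion. Your closing remark is also the right reading of the paper's definition, whose condition ``$\langle\mathrm{LT}(S)\rangle=\mathcal{I}$'' should indeed be ``$\langle\mathrm{LT}(S)\rangle=\langle\mathrm{LT}(f):f\in\mathcal{I}\rangle$''; without that correction the uniqueness step would have nothing to contradict.
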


With the aid of Gr\"obner basis, the following result is shown:
\begin{theorem}
    \cite{helton2019algebras} Every graph is algebraically $4$-colorable.
\end{theorem}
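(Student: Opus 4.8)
The plan is to reduce, using transitivity of $\overset{alg}{\longrightarrow}$, to the case where both graphs are complete, and then to certify that the resulting synchronous algebra is nonzero by a noncommutative Gr\"obner basis computation. For the reduction: if $G$ has $m:=|V(G)|$ vertices, labelling them $1,\dots,m$ gives an injective --- hence proper --- $m$-coloring $\phi$ of $G$ by $V(K_m)$, and the deterministic assignment $x_{v,u}\mapsto\delta_{u,\phi(v)}$ satisfies all the relations \eqref{eqn:NonlocalGameRelations} for $\Hom(G,K_m)$, giving a unital $*$-homomorphism $\mathcal{A}(\Hom(G,K_m))\to\mathbb{C}$ and hence $G\overset{alg}{\longrightarrow}K_m$. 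By \cref{thm:algTransitivity} it then suffices to prove $\mathcal{A}(\Hom(K_n,K_4))\neq(0)$ for every $n$. This is immediate for $n\le4$ (already $K_n\to K_4$ classically), so the whole content lies in the range $n\ge5$. Here no $C^*$-strategy can exist: for each color $u$ the images of $x_{1,u},\dots,x_{n,u}$ are pairwise orthogonal projections, so $q_u:=\sum_a x_{a,u}$ is a projection, and $\sum_{u=1}^4 q_u=\sum_{a=1}^n\sum_{u=1}^4 x_{a,u}=n\cdot1$ would force $n\cdot1\le4\cdot1$ in $\mathcal{B}(\mathcal{H})$, which is absurd. The algebra $\mathcal{A}(\Hom(K_n,K_4))$ must therefore be highly non-Archimedean, and there is no shortcut via an operator model --- one has to argue syntactically.

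For the algebraic step I would work in the free unital algebra $\mathbb{C}\langle x_{a,u}:a\in[n],\,u\in[4]\rangle$ modulo the ideal $\mathcal{I}=\mathcal{I}(\Hom(K_n,K_4))$, namely the ideal generated by $x_{a,u}^2-x_{a,u}$, by $x_{a,u}x_{a,v}$ for $u\neq v$, by $\sum_u x_{a,u}-1$, and by $x_{a,u}x_{b,u}$ and $x_{b,u}x_{a,u}$ for $a\neq b$. First eliminate $x_{a,4}=1-x_{a,1}-x_{a,2}-x_{a,3}$ using the summation relation; a short computation shows that every relation involving $x_{a,4}$ then becomes a formal consequence of ``$x_{a,1},x_{a,2},x_{a,3}$ are pairwise orthogonal idempotents'', so that the single-vertex subalgebra is just a copy of $\mathbb{C}^4$ and all the difficulty is carried by the cross relations $x_{a,u}x_{b,u}=x_{b,u}x_{a,u}=0$. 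Then fix a monomial order, run the noncommutative Buchberger procedure to obtain a Gr\"obner basis $S$ of $\mathcal{I}$, and apply the cited theorem of Mora: $\mathcal{A}(\Hom(K_n,K_4))=(0)$ if and only if $1\longrightarrow_S 0$, i.e. if and only if some leading monomial of $S$ equals $1$. So the target reduces to the assertion that no chain of S-polynomial formations and reductions ever produces a nonzero scalar, equivalently $1\notin\mathcal{I}$.

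I expect the main obstacle to be that $n$ is unbounded, so Buchberger cannot simply be executed on a fixed instance; one needs a description of the Gr\"obner basis that is uniform in $n$. I would try to exploit the $S_n$-symmetry permuting the vertex blocks: within a single block the S-polynomials close up to the $\mathbb{C}^4$-relations, and the only monomial overlaps between two blocks come from the pairs $x_{a,u}x_{b,u}$ and their mirror images, whose S-polynomials ought to reduce to $0$ or to generators already present. The genuinely delicate point --- the one where a careless argument breaks --- is to show that combining the cross relations with the consequences of the summation relation (for instance $x_{a,4}x_{b,4}=(1-\sum_{v<4}x_{a,v})(1-\sum_{v<4}x_{b,v})$) across many vertices never telescopes down to $1$; this behaviour is sensitive to the number of available colors, so the fourth color has to be used in an essential way (and indeed this is what separates $K_4$ from $K_3$ as a target). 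If controlling the Gr\"obner basis uniformly in $n$ proves too painful, the fallback is to bypass it and exhibit directly a nonzero unital algebra $\mathcal{B}$ carrying families $(x_{a,u})$ satisfying all the relations --- built, say, from amalgamated free products of copies of $\mathbb{C}^4$ or from a Pauli-type model of anticommuting order-two unitaries grouped into four-outcome measurements --- together with a linear functional certifying $1_{\mathcal{B}}\neq0$; I would regard that explicit-algebra route as the more robust one, with the Gr\"obner computation serving as the algorithmic check for small $n$ and as the natural engine for the accompanying non-colorability results.
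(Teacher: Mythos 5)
Your framing is exactly the paper's: reduce to showing $\mathcal{A}(\Hom(K_n,K_4))\neq(0)$ for all $n$ and certify this syntactically with a noncommutative Gr\"obner basis. The reduction itself is fine --- the identity labelling gives $G\overset{alg}{\longrightarrow}K_m$ via a perfect $loc$-strategy, and \cref{thm:algTransitivity} (or, even more simply, the surjection $\mathcal{A}(\Hom(G,K_4))\twoheadrightarrow\mathcal{A}(\Hom(K_m,K_4))$ coming from $\mathcal{I}(\Hom(G,K_4))\subseteq\mathcal{I}(\Hom(K_m,K_4))$) finishes it --- and your observation that no $C^*$-strategy exists for $n\ge 5$, so the argument must be syntactic, is correct and important. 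But the proposal stops precisely where the theorem's content begins: you write that "one needs a description of the Gr\"obner basis that is uniform in $n$" and then only conjecture how the S-polynomials "ought to" close up. That uniform description is the whole proof. It is supplied in this paper by \cref{thm:algCliqueNum}, which exhibits the Gr\"obner basis of $\mathcal{I}(\Hom(K_n,K_m))$ in closed form for all $n$ and $m\ge 4$ (under a graded lex order) and reads off that no element has leading term $1$, hence $1\notin\mathcal{I}(\Hom(K_n,K_4))$. Without producing that basis (or an equivalent normal-form/confluence argument), the claim $1\notin\mathcal{I}$ is unproved; note that the analogous statement fails for $K_3$ as target, so nothing soft can substitute for the computation.

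Your fallback route is also not viable as described. Since you yourself showed that $\mathcal{A}(\Hom(K_n,K_4))$ admits no $*$-representation on a nontrivial Hilbert space for $n\ge 5$, any "Pauli-type model of anticommuting order-two unitaries" or any other concrete operator construction is ruled out from the start. An amalgamated free product of copies of $\mathbb{C}^4$ does not impose the cross relations $x_{a,u}x_{b,u}=x_{b,u}x_{a,u}=0$ for $a\neq b$, and once you quotient by them you are back to the universal algebra whose nontriviality is the thing to be proved. A certifying linear functional with $\varphi(1)\neq 0$ (not a state --- positivity would again contradict non-representability) is in effect what the Gr\"obner normal form provides, which is why the literature, and this paper, take the computer-assisted Gr\"obner basis route as the primary argument rather than the fallback.
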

The proof of the above theorem is computer-assisted, by examining the explicit expression of the NC GB of the ideal and checking whether $1 \in \mathcal{I}(\Hom(G, K_4))$. By a similar argument, we can work out the general expression for the NC GB of the ideal $\mathcal{I}(\Hom(K_n, K_m))$ and show a similar argument for the algebraic clique number.

\begin{theorem}
\label{thm:algCliqueNum}
    Let $K_m$ be the complete graph on $m$ vertices. Then $\omega_{alg}(K_m) = m$ for $m \in \{1, 2, 3\}$, and $\omega_{alg}(K_m) \geq n$ for any $n \in \mathbb{N}$ and $m \geq 4$.
\end{theorem}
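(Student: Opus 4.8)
The plan is to treat the two regimes of $m$ separately. For every $m \geq 1$ the bound $\omega_{alg}(K_m) \geq m$ is immediate: the identity map is a graph homomorphism $K_m \to K_m$, every genuine homomorphism induces a perfect $loc$-strategy, and $\mathcal{M}_{loc} \subseteq \mathcal{M}_{alg}$, so $\mathcal{A}(\Hom(K_m, K_m)) \neq (0)$, i.e. $K_m \overset{alg}{\longrightarrow} K_m$. For $m \geq 4$ I want the stronger statement $\omega_{alg}(K_m) \geq n$ for all $n$: fix $n$, note $K_n \overset{alg}{\longrightarrow} K_4$ since every graph is algebraically $4$-colorable, note $K_4 \overset{alg}{\longrightarrow} K_m$ since the subgraph inclusion $K_4 \hookrightarrow K_m$ is a genuine homomorphism when $m \geq 4$, and conclude $K_n \overset{alg}{\longrightarrow} K_m$ by Proposition \ref{thm:algTransitivity}.

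It remains, for $m \in \{1,2,3\}$, to prove the matching upper bound; it suffices to show $\mathcal{A}(\Hom(K_{m+1}, K_m)) = (0)$, since if $K_n \overset{alg}{\longrightarrow} K_m$ held for some $n > m$ then the subgraph inclusion $K_{m+1} \hookrightarrow K_n$ together with transitivity would force $K_{m+1} \overset{alg}{\longrightarrow} K_m$. The cases $m = 1, 2$ collapse immediately from the relations \eqref{eqn:NonlocalGameRelations}: for $m = 1$ the relation $\sum_a x_{i,a} - 1$ forces $x_{i,1} = 1$ while the edge of $K_2$ forces $x_{1,1}x_{2,1} = 0$, hence $1 = 0$; for $m = 2$, with $p_i := x_{i,1}$, the edge relations of $K_3$ give $p_ip_j = 0$ and $(1-p_i)(1-p_j) = 0$, so $p_i + p_j = 1$ for all $i \neq j$, forcing $p_i = \tfrac12 \cdot 1$, which contradicts $p_i^2 = p_i$.

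The case $m = 3$, i.e. $\mathcal{A}(\Hom(K_4, K_3)) = (0)$, is the main obstacle, because unlike the previous cases the relations do not visibly collapse. My approach is a discrete-Fourier change of variables. Set $\omega = e^{2\pi i/3}$ and, for each vertex $i \in \{1,2,3,4\}$ of $K_4$, let $v_i = x_{i,1} + \omega x_{i,2} + \omega^2 x_{i,3}$. Orthogonality of the $\{x_{i,a}\}_a$ gives $v_i^3 = 1$ (so $v_i^{-1} = v_i^2$), and the edge relations $x_{i,a}x_{j,a} = 0$ assemble into $\sum_a x_{i,a}x_{j,a} = \tfrac13\bigl(1 + v_iv_j^{-1} + v_i^2 v_j^{-2}\bigr) = 0$, which after right-multiplication by $v_j^2$ reads
\[
v_i^2 + v_i v_j + v_j^2 = 0 \qquad \text{for every pair } i \neq j .
\]
Subtracting this identity from its transpose under $i \leftrightarrow j$ gives $v_i v_j = v_j v_i$: the four $v_i$ pairwise commute. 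Hence the unital subalgebra they generate inside $\mathcal{A}(\Hom(K_4, K_3))$ — which contains $1 = v_1^3$ — is a quotient of the commutative ring $B = \mathbb{C}[X_1, \dots, X_4]/\bigl(X_i^3 - 1,\ X_i^2 + X_iX_j + X_j^2 : i \neq j \bigr)$. Since $(X_i - X_j)(X_i^2 + X_iX_j + X_j^2) = X_i^3 - X_j^3$, the relation $X_i^2 + X_iX_j + X_j^2 = 0$ holds on cube roots of unity exactly when $X_i \neq X_j$; so a point of the variety $V(B)$ would be a $4$-tuple of pairwise-distinct cube roots of unity, which is impossible. Thus $V(B) = \emptyset$, the weak Nullstellensatz gives $B = (0)$, and therefore $1 = 0$ in $\mathcal{A}(\Hom(K_4, K_3))$, as required. (Alternatively, as in the proof of algebraic $4$-colorability, one may run the noncommutative Buchberger algorithm on $\mathcal{I}(\Hom(K_4, K_3))$ and check that its reduced Gr\"obner basis is $\{1\}$; the computation above is the conceptual reason this occurs.)
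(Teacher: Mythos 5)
Your proof is correct, but it takes a genuinely different route from the paper's. The paper establishes the lower bound for $n > m \geq 4$ by exhibiting an explicit noncommutative Gr\"obner basis of $\mathcal{I}(\Hom(K_n, K_m))$ and reading off $1 \notin \mathcal{I}(\Hom(K_n, K_m))$, and it establishes the upper bound for $m \leq 3$ by a computer-assisted check that $\mathcal{I}(\Hom(K_{m+1}, G))$ is the whole algebra for every proper subgraph $G \subset K_4$; you replace both computations with conceptual arguments. For the lower bound you combine algebraic $4$-colorability of $K_n$ with the genuine inclusion $K_4 \hookrightarrow K_m$ and Proposition~\ref{thm:algTransitivity} --- exactly the observation the paper records in the remark following the theorem, so this part is fully consistent with the paper's framework. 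The real added value is the case $m = 3$: the change of variables $v_i = \sum_a \omega^{a-1} x_{i,a}$ turns the PVM relations into $v_i^3 = 1$ and the edge relations into $v_i^2 + v_i v_j + v_j^2 = 0$, whose antisymmetrization forces the $v_i$ to commute, and the resulting commutative quotient is killed by the ordinary weak Nullstellensatz because four pairwise distinct cube roots of unity do not exist. Each step checks out: the identity $\sum_a x_{i,a} x_{j,a} = \tfrac13\bigl(1 + v_i v_j^{-1} + v_i^2 v_j^{-2}\bigr)$ follows from $\sum_{a,b} x_{i,a} x_{j,b} = 1$ and the orthogonality of character sums, the commutativity derivation is a one-line subtraction, and the homomorphism $B \to \mathcal{A}(\Hom(K_4, K_3))$ is unital so $B = (0)$ indeed forces $1 = 0$. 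What the paper's approach buys is a uniform closed-form Gr\"obner basis valid for all $n > m$, which is reusable data beyond this theorem; what yours buys is a human-verifiable proof of the key triviality $\mathcal{A}(\Hom(K_4, K_3)) = (0)$ that the paper only asserts via computation, together with a clean reduction of all remaining cases to cited results.
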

\begin{proof}
    We first prove the second assertion. The result is immediate when $n \leq m$, by the inequality $\omega_{alg}(K_{m}) \geq \omega(K_m) = m \geq n$. It suffices to show that the assertion holds for $n > m \geq 4$. Suppose we assign natural indices to vertices of $K_n$ and $K_m$, specified by $[n] = \{1, \dots, n \}$ and $[m] = \{1, \dots, m\}$. Under the natural graded lexicographic monomial ordering
    $$
    \forall i, j \in [n],~u, v \in [m], \quad x_{i, u} \prec x_{j, v} \quad  \Longleftrightarrow \quad i < j \lor (i = j \land u < v), 
    $$
    the Gr\"obner basis of $\mathcal{I}(\Hom(K_n, K_m))$ consists of the following relations
    \begin{equation}
    \begin{aligned}
        &\sum_{v=1}^{m} x_{i, v} - 1, \quad \forall i \in [n]; \quad x_{i, v}^2 - x_{i, v}, \quad \forall i \in [n],~v \in [m-1]; \\
        &x_{i, v} x_{j, v}, \quad \forall i \neq j \in [n],~v \in [m-1]; \quad x_{i, u} x_{i, v}, \quad \forall i \in [n],~u \neq v \in [m-1]; \\
        &1 - \sum_{u=1}^{m-1} \left(x_{i, u} + x_{j, u}\right) + \sum_{ \substack{ u \neq v \in [m-1] } } \left(x_{i, u} x_{j, v} + x_{j, u} x_{i, v}\right), \quad \forall i \neq j \in [n]; \\
        &1 - \sum_{u=1}^{m-1} \left(x_{i, u} + x_{j, u} \right) + \sum_{ \substack{ u \neq v \in [m-1] } } \left(x_{i, u} x_{j, v} + x_{j, u} x_{i, v}\right) - \sum_{ \substack{ u, v, w \in [m-1] \\ u, v, w \text{ distinct} \\ w \neq m - 2 } } x_{i, u} x_{j, v} x_{i, w} \\
        &- \sum_{\substack{ u \in [m-3] \\ v \in [m-1] \setminus \{u\} }} x_{i, u} x_{j, v} x_{i, u} + \sum_{u \in [m-3]} x_{i, m-1} x_{j, u} x_{i, m-2}, \quad \forall i \neq j \in [n]; \\
        &2 - \sum_{u=1}^{m-1} \left(c_{i, u} x_{i, u} + c_{j, u} x_{j, u} + c_{k, u} x_{k, u} \right) \\
        &+ \sum_{u \neq v \in [m-1]} \left[(\one_{c_{i, u} = c_{j, u} = 2} + 1) \cdot x_{i, u} x_{j, u} + (\one_{c_{i, u} = c_{k, u} = 2} + 1) \cdot x_{i, u} x_{k, u}  \right] \\
        &- \sum_{ \substack{ u, v, w \in [m-1] \\ u, v, w \text{ distinct} \\ w \neq m - 2 } } x_{i, u} x_{j, v} x_{k, w} - \sum_{u \neq v \in [m-1]} x_{i, u} x_{j, v} x_{k, u} + \sum_{u \in [m-3]} x_{i, m-1} x_{j, u} x_{k, m-2}, \\
        &\forall \text{ distinct } i, j, k \in [n], \quad \forall t \in [n], \ell \in [m-1],~c_{t, \ell} \in \{1, 2\}, \\
        &\forall \ell, \ell' \in [m-3],\sum_{t \in \{i, j, k\} } \one_{c_{t, \ell} = 2} = 2,~c_{t, \ell} = c_{t, \ell'}; \\
        &\forall \ell \in \{m-1, m-2\},\sum_{t \in \{i, j, k\}} \one_{c_{t, \ell} = 2} = 1.
    \end{aligned}
    \end{equation}
   It follows that $1 \not \in \mathcal{I}(\Hom(K_n, K_m))$ when $m \geq 4$. Meanwhile, computation shows that $K_4$ is the \textit{minimal} graph satisfying $\omega_{alg}(G) = \infty$, and $\mathcal{I}(\Hom(K_{|V(G)|+1}, G))$ are indeed trivial for any proper subgraph $G \subset K_4$. Take $G = K_m$, inequality $m = \omega(K_m) = \omega_{loc}(K_m) \leq \omega_{alg}(K_m) < m+1$ indicates that $\omega_{alg}(K_m) = m$ for $m \in \{1, 2, 3\}$.
\end{proof}

\begin{remark}
    The transitivity of algebraic homomorphism shown in Proposition \ref{thm:algTransitivity} allows us to extend the result in Theorem \ref{thm:algCliqueNum}. If a graph $G$ admits a $4$-clique, then $K_n \overset{alg}{\longrightarrow} G$ for any $n \in \mathbb{N}$ since $K_n \overset{alg}{\longrightarrow} K_4$ and $K_4  \overset{alg}{\longrightarrow} G$ (by $K_4 \longrightarrow G)$. Consequently, it forces $\omega_{alg}(G) = \infty$ as well.
\end{remark}

\begin{theorem}
    $\omega_{C^*}(K_4) = \omega(K_4) = 4$.
\end{theorem}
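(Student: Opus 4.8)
My plan is to prove the two inequalities $\omega_{C^*}(K_4)\geq 4$ and $\omega_{C^*}(K_4)\leq 4$ separately. Here, in parallel with the definitions of $\omega_{alg}$ and $\omega_{lc}$, I take $\omega_{C^*}(G)$ to be the largest $n$ for which $\Hom(K_n,G)$ has a perfect $C^*$-strategy, that is, for which $\mathcal{A}(\Hom(K_n,G))$ admits a unital $*$-homomorphism into $\mathcal{B}(\mathcal{H})$ for some non-trivial Hilbert space $\mathcal{H}$. The lower bound is formal; the upper bound is where the Hilbert-space (positivity) structure does the work, and it is essentially a two-line pigeonhole.

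For the lower bound I would invoke the hierarchy $\mathcal{M}_{loc}\subseteq\mathcal{M}_{C^*}$ together with the fact that the identity map is a graph homomorphism $K_4\to K_4$, so $\Hom(K_4,K_4)$ already has a perfect $loc$-strategy. Hence $\omega_{C^*}(K_4)\geq\omega_{loc}(K_4)=\omega(K_4)=4$.

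For the upper bound I would show that $\Hom(K_5,K_4)$ has no perfect $C^*$-strategy. Suppose, toward a contradiction, that $\pi:\mathcal{A}(\Hom(K_5,K_4))\to\mathcal{B}(\mathcal{H})$ is a unital $*$-homomorphism with $\mathcal{H}\neq\{0\}$, and set $P_{i,a}:=\pi(x_{i,a})$ for $i\in[5]$, $a\in[4]$. The relations \eqref{eqn:NonlocalGameRelations} give: each $P_{i,a}$ is a self-adjoint idempotent, hence an orthogonal projection; $\sum_{a=1}^{4}P_{i,a}=\pi(1)=I$ for every $i$; and, since every pair $i\neq j$ is an edge of $K_5$ while $K_4$ has no loop, $P_{i,a}P_{j,a}=0$ whenever $i\neq j$. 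Thus for each fixed color $a$ the projections $P_{1,a},\dots,P_{5,a}$ are pairwise orthogonal, so $R_a:=\sum_{i=1}^{5}P_{i,a}$ is itself an orthogonal projection, and in particular $0\leq R_a\leq I$ in the operator order. Summing over colors and interchanging the order of summation,
$$
4I\;\geq\;\sum_{a=1}^{4}R_a\;=\;\sum_{i=1}^{5}\sum_{a=1}^{4}P_{i,a}\;=\;\sum_{i=1}^{5}I\;=\;5I,
$$
so $I\leq 0$; with $I\geq 0$ this forces $I=0$, contradicting $\mathcal{H}\neq\{0\}$. Hence no such $\pi$ exists, $\omega_{C^*}(K_4)\leq 4$, and combining the bounds yields $\omega_{C^*}(K_4)=\omega(K_4)=4$. (If one prefers to avoid the operator order, the same contradiction comes from testing against a unit vector $\psi\in\mathcal{H}$: $5=\sum_{i=1}^{5}\langle\psi,\psi\rangle=\sum_{a=1}^{4}\langle R_a\psi,\psi\rangle\leq\sum_{a=1}^{4}\langle\psi,\psi\rangle=4$.)

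I do not anticipate a real obstacle: the heart of the matter is the pigeonhole estimate $R_a\leq I$, which is available precisely because $\mathcal{B}(\mathcal{H})$ carries the L\"owner order coming from its positive cone. The point worth flagging is the sharp contrast with Theorem \ref{thm:algCliqueNum}: in the purely algebraic setting there is no positivity forcing $R_a\leq I$, and indeed $\mathcal{A}(\Hom(K_5,K_4))\neq(0)$ there, so $\omega_{alg}(K_4)=\infty$ while $\omega_{C^*}(K_4)$ collapses to $4$. The identical computation in fact yields $\omega_{C^*}(K_m)=m$ for every $m$; $K_4$ is singled out only because it is there that the $alg$/$C^*$ gap is most dramatic.
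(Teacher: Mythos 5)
Your proof is correct, but it takes a genuinely different route from the paper. The paper's upper bound $\omega_{C^*}(K_4)<5$ is computer-assisted: invoking the noncommutative Positivstellensatz (Theorem \ref{thm:NCPsatzSyncGame}), it runs a semidefinite program to find an explicit degree-$1$ refutation certificate of the form $1+[x]_1^*S[x]_1=\sum_i f_ih_i+\sum_i h_i^*g_i$ with $S\succeq 0$ and $h_i$ the generators of $\mathcal{I}(\Hom(K_5,K_4))$, which by the Nullstellensatz rules out a perfect $C^*$-strategy; the lower bound is the same trivial $\omega(K_4)\leq\omega_{C^*}(K_4)$ you use. You replace the SDP search with a direct operator-theoretic pigeonhole: for each color $a$ the relations force $P_{1,a},\dots,P_{5,a}$ to be pairwise orthogonal projections, so $R_a=\sum_i P_{i,a}$ is itself a projection with $R_a\leq I$, and $5I=\sum_a R_a\leq 4I$ annihilates the Hilbert space. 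Your argument is more elementary and self-contained, needs no solver, and yields $\omega_{C^*}(K_m)=m$ for every $m$ uniformly (it is essentially the Ortiz--Paulsen bound $\omega_{C^*}(G)\leq\vartheta(\overline{G})$ specialized to $G=K_4$, where $\vartheta(\overline{K_4})=4$). What the paper's route buys instead is a demonstration of the algorithmic pipeline of its Section 4 and an explicit SOS certificate; the two are consistent, since your positivity inequality $4I-\sum_a R_a\geq 0$ is precisely the fact that the degree-$1$ certificate encodes algebraically.
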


\begin{proof}
    It suffices to prove $\alpha_{C^*}(K_4) < 5$. By Theorem \ref{thm:NCPsatzSyncGame}, the $5$-clique game on graph $K_4$ admits a perfect $C^*$-strategy if there exists a positive semidefinite matrix $S$ with degree bound $d \in \mathbb{N}$ such that $1 + [x]_d^* S [x]_d \in \mathcal{I}(\Hom(K_5, K_4))$. With the aid of an SDP, we obtained a refutation when $d = 1$ (see document \texttt{Hom(K5, K4)CstarRefutation.txt}\footnote{Specifically, the document contains the value of $\{f_i\}, \{g_i\}$ under basis $[x]_d$ and matrix $S$ such that $1 + [x]_d^* S [x]_d = \sum_{i} f_i h_i + \sum_i h_i^* g_i$, where $h_i$ are the generating relations of $\mathcal{I}(\Hom(K_5, K_4))$.} for their expressions). Thus, from $4 = \omega(K_4) \leq \omega_{C^*}(K_4) < 5$ the result follows. The result yields another strict separation between $C^*$-satisfiability and algebraic satisfiability.
\end{proof}

Lov\'asz introduced the theta function $\vartheta(\cdot)$ \cite{Lovasz}, which bounds the Shannon capacity of a graph. The Lov\'asz sandwich theorem says that for any graph $G$, 
$
\omega(G) \leq \vartheta(\overline{G}) \leq \chi(G)
$. Ortiz and Paulsen \cite{Ortiz2016} strengthened this inequality to $\omega_{C^*}(G) \leq \vartheta(\overline{G}) \leq \chi_{C^*}(G)$. It is questioned in \cite{helton2019algebras} that whether $\omega_{t}(G) \leq \vartheta(\overline{G}) \leq \chi_{t}(G)$ holds for $t \in \{ lc, hered \}$. We propose a method to prove inequality $\omega_{lc}(G) \leq \vartheta(\overline{G})$ for a family of graphs, and postpone the proof of sandwich theorem when $t = hered$ to Section \ref{sec:heredEqualsCstar}. For preciseness, we use $\mathcal{A}_{lc}(\Hom(G, H))$ to denote the locally commuting graph homomorphism game algebra with ideal $\mathcal{I}_{lc}(\Hom(G, H))$ being $\mathcal{I}(\Hom(G, H))$ combined with the locally commuting conditions.

\begin{lemma}
        \label{lemma:neighborIsomorphism}
        \cite{helton2019algebras} For a graph $G$ and $n \geq 2$, for any $(n-1)$-clique $S$ in $G$, and $N_S = \left\{u \in V(G):~\forall v \in S,~u \sim_G v\right\}$ being the fully-connected neighborhood of $S$, then
        \begin{equation}
        \mathcal{A}_{lc}(\Hom(K_n, G)) \cong \bigoplus_{ \substack{ S \subseteq G \\ S \text{ is a $(n-1)$-clique} } } \mathbb{C}^{|N_S|(n-1)!}.
        \end{equation}
    \end{lemma}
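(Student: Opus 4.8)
The plan is to analyze the structure of $\mathcal{A}_{lc}(\Hom(K_n, G))$ directly from the defining relations. The algebra is generated by $\{x_{i,u}\}_{i \in [n], u \in V(G)}$ subject to: each row $\sum_u x_{i,u} = 1$ with $x_{i,u}$ idempotent and orthogonal within a row; the non-edge constraints $x_{i,u}x_{j,v} = 0$ whenever $i \neq j$ and $u \not\sim_G v$ or $u = v$ (the synchronous and homomorphism conditions, noting $K_n$ has all edges present so for every $i \neq j$ the constraint on outputs is $u \sim_G v$); and the locally commuting relations $[x_{i,u}, x_{j,v}] = 0$ whenever $u \sim_G v$. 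First I would observe that since any two generators $x_{i,u}, x_{j,v}$ with $i \neq j$ either commute (if $u \sim_G v$) or multiply to zero in both orders (if $u \not\sim_G v$, including $u = v$), the whole algebra is commutative. Hence it is a commutative algebra generated by idempotents, which is therefore isomorphic to a direct sum of copies of $\mathbb{C}$ indexed by the nonzero multiplicative $*$-characters, i.e.\ by the consistent joint assignments of values in $\{0,1\}$ to the $x_{i,u}$.

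Next I would enumerate these characters. A character sends each $x_{i,u}$ to $0$ or $1$; the row relations force exactly one $u$ with $x_{i,u} \mapsto 1$, so a character is a function $c : [n] \to V(G)$, $i \mapsto c(i)$. The non-edge relations force: for $i \neq j$, $x_{i,c(i)} x_{j,c(j)} \neq 0$ requires $c(i) \sim_G c(j)$; in particular $c$ is injective and its image is an $n$-clique in $G$. Conversely any such $c$ (equivalently, an ordered $n$-clique) gives a well-defined character: all relations are satisfied, and the locally commuting relations are automatic since the algebra is already commutative. So the characters are in bijection with ordered $n$-cliques of $G$, and $\mathcal{A}_{lc}(\Hom(K_n,G)) \cong \mathbb{C}^{(\text{number of ordered } n\text{-cliques})}$.

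To match the stated form I would reorganize ordered $n$-cliques by first choosing the image as an $(n-1)$-clique $S$ together with one additional vertex $u \in N_S$ (since the $n$-th element of the clique must be adjacent to all of $S$), and then counting the orderings: an ordered $n$-clique with underlying vertex set $S \cup \{u\}$ corresponds to a choice of which of the $n$ positions is to be discarded — wait, more cleanly: fix the unordered $(n-1)$-clique $S$ and the extra vertex $u$; the ordered $n$-cliques on $S \cup \{u\}$ number $n!$, but summing $n!$ over pairs $(S, u)$ overcounts each unordered $n$-clique $n$ times (once for each choice of which vertex plays the role of $u$). Instead I would directly set up the bijection: an ordered $n$-clique is the same as a choice of an $(n-1)$-clique $S$, a bijection from $[n-1]$ to $S$ (that is the restriction to the first $n-1$ coordinates after relabeling — here one must be slightly careful), hence the count $|N_S|(n-1)!$ appears naturally if one parametrizes by (ordered $(n-1)$-clique on $[n-1]$) $\times$ (vertex in its common neighborhood for coordinate $n$), and then notes this indexing is exactly $\bigoplus_S \mathbb{C}^{|N_S|(n-1)!}$ once $S$ ranges over unordered $(n-1)$-cliques and the $(n-1)!$ accounts for orderings. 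The combinatorial bookkeeping in this last paragraph — getting the counting convention to line up exactly with the displayed direct sum — is the only real obstacle; the algebraic content (commutativity, hence semisimplicity, hence a direct sum indexed by characters) is immediate from the defining relations.
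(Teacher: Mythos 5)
Your proposal is correct, and it is worth noting that the paper itself offers no proof of this lemma at all --- it is quoted from Helton, Meyer, Paulsen and Satriano, so your argument is a genuine reconstruction rather than a variant of something in the text. The key observations are all right: every pair of generators either commutes by the locally commuting relation (when $u \sim_G v$) or annihilates in both orders (when $i \neq j$ and $u \not\sim_G v$, or $i = j$ and $u \neq v$), so the algebra is commutative --- the paper's own Remark 3.2 confirms this; it is finite-dimensional and spanned by products of distinct generators, which are orthogonalizable idempotents, hence it is reduced and splits as $\mathbb{C}^N$ with $N$ the number of characters; and the characters are exactly the injective maps $c : [n] \to V(G)$ with clique image, i.e.\ ordered $n$-cliques. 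The counting that you worried about does line up: the bijection sending an ordered $n$-clique $(v_1,\dots,v_n)$ to the pair consisting of the ordered $(n-1)$-clique $(v_1,\dots,v_{n-1})$ and the vertex $v_n \in N_{\{v_1,\dots,v_{n-1}\}}$ gives exactly $\sum_S (n-1)!\,|N_S|$ over unordered $(n-1)$-cliques $S$, which is the displayed dimension (equivalently, $n!$ times the number of unordered $n$-cliques). Two small points to tighten: state explicitly why the commutative idempotent-spanned algebra is semisimple (the minimal idempotents $\prod_{i \in T} e_i \prod_{j \notin T}(1-e_j)$ are orthogonal and span, so there are no nilpotents), and delete the false start about overcounting by $n$ in the last paragraph --- your final parametrization is the correct one and needs no correction factor.
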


\begin{lemma}
\label{lemma:isoSubgraph}
        For any graph $G$, $n \in \mathbb{N}$, and $Q \subseteq G$ is the subgraph obtained by excluding all vertices and edges that are not in any $n$-clique of $G$. Then $\mathcal{A}_{lc}(K_n, G) \cong \mathcal{A}_{lc}(K_n, Q)$.
    \end{lemma}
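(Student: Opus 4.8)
The plan is to show that the natural surjection $\mathcal{A}_{lc}(\Hom(K_n,G)) \to \mathcal{A}_{lc}(\Hom(K_n,Q))$ induced by the inclusion $Q \subseteq G$ is in fact an isomorphism, by exhibiting an inverse. The key observation is that the generators $x_{i,u}$ of $\mathcal{A}_{lc}(\Hom(K_n,G))$ corresponding to vertices $u \in V(G) \setminus V(Q)$ are already forced to vanish. Indeed, if $u$ lies in no $n$-clique of $G$, then in any assignment of PVMs $\{x_{i,u}\}_{u \in V(G)}$ satisfying the homomorphism-game relations for $\Hom(K_n,G)$, the element $x_{i,u}$ must be zero: an $n$-coloring-type consistency argument shows that a nonzero $x_{i,u}$ would produce, together with the nonzero projections at the other $n-1$ indices (which are mutually adjacent in $K_n$ and hence must map to a clique in $G$), an $n$-clique of $G$ containing $u$. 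I would make this precise using Lemma \ref{lemma:neighborIsomorphism}: decomposing $\mathcal{A}_{lc}(\Hom(K_n,G))$ as a direct sum over $(n-1)$-cliques $S$ with blocks indexed by $N_S$, a vertex $u \in N_S$ contributes a nonzero block only when $S \cup \{u\}$ is an $n$-clique, so every vertex and edge actually ``seen'' by the algebra lies in some $n$-clique, i.e.\ lies in $Q$.

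Concretely, the first step is to verify that $Q$ is a well-defined induced subgraph and that every $n$-clique of $G$ is an $n$-clique of $Q$ and vice versa; hence the $(n-1)$-cliques $S$ appearing with nonzero blocks in Lemma \ref{lemma:neighborIsomorphism} are the same for $G$ and for $Q$, and for each such $S$ the fully-connected neighborhoods satisfy $N_S^{(G)} \cap V(Q) = N_S^{(Q)}$, while any $w \in N_S^{(G)} \setminus V(Q)$ would make $S \cup \{w\}$ an $n$-clique of $G$, contradicting $w \notin V(Q)$. Therefore $N_S^{(G)} = N_S^{(Q)}$ for every relevant $S$. The second step is to apply Lemma \ref{lemma:neighborIsomorphism} to both sides:
\begin{equation*}
\mathcal{A}_{lc}(\Hom(K_n,G)) \cong \bigoplus_{S \text{ is an }(n-1)\text{-clique of }G} \mathbb{C}^{|N_S^{(G)}|(n-1)!} = \bigoplus_{S \text{ is an }(n-1)\text{-clique of }Q} \mathbb{C}^{|N_S^{(Q)}|(n-1)!} \cong \mathcal{A}_{lc}(\Hom(K_n,Q)),
\end{equation*}
where the middle equality uses that the index sets and the block dimensions agree by the first step. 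One must also check the degenerate cases $n=1$ (both algebras are $\mathbb{C}^{|V(G)|}$ versus $\mathbb{C}^{|V(Q)|}$, and here $Q$ keeps all vertices since each vertex is a $1$-clique, so the statement is trivially an equality) and the case where $G$ has no $n$-clique at all (then $Q$ is empty and both algebras are $(0)$, using the convention that the empty homomorphism game has trivial algebra); I would handle $n \geq 2$ via the lemma as above and dispatch $n \leq 1$ and the clique-free case separately.

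The main obstacle, and the only point requiring care, is making rigorous the claim that the isomorphism of Lemma \ref{lemma:neighborIsomorphism} is natural with respect to the inclusion $Q \hookrightarrow G$ — that is, that the abstract isomorphism of direct sums is compatible with the canonical quotient map $\mathcal{A}_{lc}(\Hom(K_n,G)) \twoheadrightarrow \mathcal{A}_{lc}(\Hom(K_n,Q))$ coming from adjoining the relations $x_{i,u} = 0$ for $u \notin V(Q)$. If one does not want to track naturality through the proof of Lemma \ref{lemma:neighborIsomorphism}, an alternative is a direct algebraic argument: show $x_{i,u} = 0$ in $\mathcal{A}_{lc}(\Hom(K_n,G))$ for every $u \notin V(Q)$ by exhibiting it inside the ideal, using that the local commutativity relations let one expand $x_{i,u}$ in terms of products $\prod_{j} x_{j, u_j}$ over colorings whose color classes must form a clique, and every such product with a factor at $u$ is killed because $u$ has no clique-completion. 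Once every out-of-$Q$ generator is zero, the presentations of the two ideals coincide verbatim on the surviving generators, giving the isomorphism. I expect the direct-sum/Lemma \ref{lemma:neighborIsomorphism} route to be the cleanest to write.
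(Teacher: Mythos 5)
Your proposal is correct and rests on the same key ingredient as the paper's proof, namely Lemma \ref{lemma:neighborIsomorphism}: you match the direct-sum decompositions of $\mathcal{A}_{lc}(\Hom(K_n,G))$ and $\mathcal{A}_{lc}(\Hom(K_n,Q))$ by observing that the $(n-1)$-cliques contributing nonzero blocks, and their fully-connected neighborhoods, coincide for $G$ and $Q$. The paper reaches the same conclusion by iteratively deleting one offending edge at a time and checking the decomposition is unchanged at each step, whereas your one-shot global comparison is, if anything, slightly cleaner; and since the lemma asserts only an abstract isomorphism, the naturality concern you raise can be safely set aside.
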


    \begin{proof}
        It suffices to consider those edges with endpoints in $n$-cliques, but the edge itself is not included in any $n$-cliques. Suppose $e = \{u, v\}$ is such an edge in a $2 \leq m < n$ clique (since the edge itself is a $2$-clique). Employing Lemma \ref{lemma:neighborIsomorphism}, we have
        $$
        \begin{aligned}
        &\mathcal{A}_{lc}(K_n, G) \cong \bigoplus_{\substack{S \subseteq G \\ S \text{ an $(n-1)$-clique}}} \mathbb{C}^{|N_S|(n-1)!} \\
        &\cong \bigoplus_{\substack{S \subseteq G, u \in N_S, v \not \in N_S \\ S \text{ an $(n-1)$-clique}}} \mathbb{C}^{|N_S|(n-1)!} \oplus \bigoplus_{\substack{S \subseteq G, v \in N_S, u \not \in N_S \\ S \text{ an $(n-1)$-clique}}} \mathbb{C}^{|N_S|(n-1)!} \oplus \bigoplus_{\substack{S \subseteq G, u, v \not \in N_S \\ S \text{ an $(n-1)$-clique}}} \mathbb{C}^{|N_S|(n-1)!},
        \end{aligned}
        $$
        while the case $u, v \in N_S$ is excluded by the assumption. Above algebra is equivalent to $\mathcal{A}_{lc}(K_n, G \setminus \{e\})$, by the fact that amid the choices for $(n-1)$-cliques in $G \setminus \{e\}$, exactly one of the three cases holds. Iteratively deleting these edges and vertices by composing the previous results gives rise to the desired subgraph $Q$.
    \end{proof}

\begin{lemma}
    \label{lemma:sumToIdentity}
        In the above subgraph $Q$ and a vertex $v$ connecting all the remaining vertices, if $Q$ is connected, it holds that $\sum_{i \in [n]} x_{i, v} = 1$.
    \end{lemma}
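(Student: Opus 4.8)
The plan is to read everything off Lemma~\ref{lemma:neighborIsomorphism}, which makes $\mathcal{A}_{lc}(\Hom(K_n,Q))$ a finite direct sum of copies of $\mathbb{C}$; in particular this algebra is commutative and embeds into the product of $\mathbb{C}$ taken over all of its characters, so the relation $\sum_{i\in[n]}x_{i,v}=1$ holds as soon as every character maps $\sum_i x_{i,v}$ to $1$. Hence the first step is to describe the characters. If $\phi$ is a character, idempotency of $x_{i,u}$ forces $\phi(x_{i,u})\in\{0,1\}$, the relation $\sum_{u\in V(Q)}x_{i,u}=1$ then forces exactly one value $f(i)$ of $u$ with $\phi(x_{i,f(i)})=1$, and the vanishing relations $x_{i,u}x_{j,w}=0$ for $i\ne j$ with $u$ not adjacent to $w$ in $Q$ (in particular for $u=w$, and every non-edge of $Q$ supplies such a relation since $K_n$ is complete) force the map $f\colon[n]\to V(Q)$ to be injective with clique image. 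Thus $\phi=\phi_f$ for an ordered $n$-clique $f$ of $Q$, and $\phi_f\bigl(\sum_i x_{i,v}\bigr)=\#\{i:f(i)=v\}=\one_{v\in f([n])}$ because $f$ is injective. Since every $n$-clique of $Q$ occurs as $f([n])$ for such an $f$, the lemma reduces to the purely combinatorial assertion that \emph{every $n$-clique of $Q$ contains $v$}.

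It is worth recording the algebraic identity that yields the conclusion once that assertion is in hand, as it is likely the most transparent way to present the proof. The elements $x_{1,v},\dots,x_{n,v}$ are pairwise orthogonal by the clique relation $x_{i,v}x_{j,v}=0$ ($i\ne j$), so in the expansion of $\prod_{i=1}^n(1-x_{i,v})$ every product of two or more distinct factors already vanishes, giving $\prod_{i=1}^n(1-x_{i,v})=1-\sum_{i=1}^n x_{i,v}$; expanding the same product via $1-x_{i,v}=\sum_{u\in V(Q)\setminus\{v\}}x_{i,u}$ gives
\[
1-\sum_{i=1}^n x_{i,v}=\sum_{\substack{(u_1,\dots,u_n)\\ u_i\in V(Q)\setminus\{v\}}}x_{1,u_1}x_{2,u_2}\cdots x_{n,u_n}.
\]
Each monomial on the right is killed by every character $\phi_f$ unless $u_i=f(i)$ for all $i$, in which case $\{u_1,\dots,u_n\}=f([n])$ would be an $n$-clique of $Q$ avoiding $v$; so once no $n$-clique of $Q$ avoids $v$, every monomial is $0$ and $\sum_i x_{i,v}=1$.

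The combinatorial assertion is where I expect the only real difficulty — and the real hypothesis — to lie. Suppose $T$ is an $n$-clique of $Q$ with $v\notin T$. Since $v$ is adjacent in $Q$ to every vertex of $V(Q)\setminus\{v\}$, hence to every vertex of $T$, the set $T\cup\{v\}$ is a clique of size $n+1$ in $Q$, and therefore in $G$. This is a contradiction precisely when $G$ has no $(n+1)$-clique, which is the regime in which the lemma is meant to be used: there $n$ is the critical value, i.e.\ the largest integer with $\mathcal{A}_{lc}(\Hom(K_n,G))\ne(0)$, and Lemma~\ref{lemma:neighborIsomorphism} applied with $n+1$ in place of $n$ shows that an $(n+1)$-clique of $G$ would force $\mathcal{A}_{lc}(\Hom(K_{n+1},G))\ne(0)$, against criticality. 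An assumption of this sort is genuinely needed: for $G=K_{n+1}$ one has $Q=K_{n+1}$, in which $v$ is universal but the $n$-clique $V(Q)\setminus\{v\}$ avoids it, and indeed $\sum_i x_{i,v}\ne1$ there. Finally I would flag two harmless loose ends: connectedness of $Q$ is automatic once $v$ is adjacent to every other vertex of $Q$, so it is only bookkeeping here; and the argument uses only the fact that $\mathcal{A}_{lc}(\Hom(K_n,Q))$ is a direct sum of copies of $\mathbb{C}$, not the precise multiplicities $|N_S|(n-1)!$ in Lemma~\ref{lemma:neighborIsomorphism}.
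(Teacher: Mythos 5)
Your argument is correct and, at its core, parallel to the paper's: both proofs hinge on the combinatorial fact that every $n$-clique of $Q$ contains $v$, and both extract the conclusion by expanding a product of resolutions of the identity and observing that a monomial $x_{1,u_1}\cdots x_{n,u_n}$ survives only when $(u_1,\dots,u_n)$ is an ordered $n$-clique. The difference is in how that observation is justified. The paper works directly with the defining relations, expanding $\prod_{i=1}^n\bigl(x_{i,v}+\sum_{w\neq v}x_{i,w}\bigr)=1$ and regrouping the surviving clique monomials so as to factor out $\sum_i x_{i,v}$; you instead invoke Lemma \ref{lemma:neighborIsomorphism} to see that $\mathcal{A}_{lc}(\Hom(K_n,Q))$ is a finite direct sum of copies of $\mathbb{C}$, classify its characters as ordered $n$-cliques, and test $\sum_i x_{i,v}$ against them. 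Your identity $\prod_i(1-x_{i,v})=1-\sum_i x_{i,v}$ is a cleaner packaging of the paper's two-way expansion, at the cost of leaning on the structure theorem where the paper's computation is self-contained. More valuable is the hypothesis you flag: the paper's parenthetical ``else we will have an $(n+1)$-clique $\{v\}\cup S$'' treats the existence of an $(n+1)$-clique as an absurdity without saying why, and your example $G=Q=K_{n+1}$ is a genuine counterexample to the lemma as literally stated (already visible for $n=2$, $G=K_3$, where $\mathcal{A}_{lc}(\Hom(K_2,K_3))\cong\mathbb{C}^6$ and $x_{1,v}+x_{2,v}$ is a proper idempotent). So the lemma needs the standing assumption that $Q$ contains no $(n+1)$-clique — e.g.\ that $n$ is critical for $G$ — and making that explicit would repair both your proof and the paper's.
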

    \begin{proof}
        We notice that the product term $\prod_{j=1}^m \sum_{w} x_{j, w} \neq 0$ if and only if there is a $m$-clique among the vertices involved in the sum over $w$. Since $v$ is in every $n$-clique of $Q$ (else we will have an $(n+1)$-clique $\{v\} \cup S$ for an $n$-clique $S \subseteq Q$). By the locally commuting conditions,
        $$
        \prod_{i=1}^n \left(x_{i, v} + \sum_{ w \in Q \setminus \{v\} } x_{i, w} \right) = \sum_{ \substack{ S \subseteq Q \\ S \text{ an $n$-clique }  }  } \prod_{ \substack{i \in [n] \\ u_i \in S } } x_{i, u_i} = \sum_{i=1}^n x_{i, v} \sum_{ \substack{ S \subseteq Q \setminus\{v\} \\ S \text{ an $(n-1)$-clique }  }  } \prod_{ \substack{ j \in [n] \setminus \{i\} \\ w_j \in S }} x_{j, w_j}. 
        $$
        Here the choices of $u_i$ over $S$ are mutually distinct, \ie, if $S = \{v_1, \dots, v_n\}$, then $u_i = v_{\sigma(i)}$ where $\sigma$ is a permutation of the indices $[n]$.
        Notice that for any $i \in [n]$,
        $$
        \begin{aligned}
        &x_{i, v} \prod_{j \in [n] \setminus \{i\}} \left(x_{j, v} + \sum_{ w \in Q \setminus \{v\} } x_{j, w} \right) = x_{i, v} \sum_{ \substack{ S \subseteq Q \\ S \text{ an $(n-1)$-clique} } } \prod_{\substack{ j \in [n] \setminus \{i\} \\ w_{j} \in S }  } x_{j, w_j} \\
        &= x_{i, v} \cdot \left[ \sum_{ \substack{ S \subseteq Q, v \not \in S \\ S \text{ an $(n-1)$-clique}} } \prod_{\substack{ j \in [n] \setminus \{i\} \\ w_{j} \in S }  } x_{j, w_j} + \sum_{ \substack{ S \subseteq Q, v \in S \\ S \text{ an $(n-1)$-clique} } } \prod_{\substack{ j \in [n] \setminus \{i\} \\ w_{j} \in S }  } x_{j, w_j}  \right] \\
        &= x_{i, v} \cdot \sum_{ \substack{ S \subseteq Q \setminus\{v\} \\ S \text{ an $(n-1)$-clique } }  } \prod_{ \substack{ j \in [n] \setminus \{i\} \\ w_j \in S }} x_{j, w_j}.
        \end{aligned}
        $$
        The second term in the third equality vanishes since $x_{i, v} x_{j, w_j}$ evaluates to $0$ whenever $w_j = v$. It follows that summing over $i$ in the above expression coincides with the previous equation, \ie, 
        \begin{equation}
        \sum_{i=1}^n x_{i, v} \cdot \underbrace{\prod_{j \in [n] \setminus \{i\}} \left(x_{j, v} + \sum_{ w \in Q \setminus \{v\} } x_{j, w} \right)}_{= 1^{n-1} = 1} = \underbrace{\prod_{i=1}^n \left(x_{i, v} + \sum_{ w \in Q \setminus \{v\} } x_{i, w} \right)}_{=1^n = 1},
        \end{equation}
        which directly indicates that $\sum_{i \in [n]} x_{i, v} = 1$. 
    \end{proof}

\begin{theorem}
\label{thm:generatingSpace}
    Let $\{ x_{i, v} \}_{i \in [n], v \in V(G)}$ be the generators of $\mathcal{A}_{lc}(K_n, G)$, then
    \begin{equation} \label{eqn:generatingRelations}
    \Span \left\{ \sum_{i=1}^n x_{i, v} + \mathcal{I}_{lc}(\Hom(K_n, G)) :~v \in V(G)  \right\} = \Span\left\{ e_v:~v \in V(G) \right\} / \mathcal{W},
    \end{equation}
    where $e_v$ is a basis element corresponding to each $v \in V(G)$. Denote $Q \subseteq G$ as the subgraph of $G$ obtained by excluding all the edges and vertices that are not in any $n$-cliques of $G$. Define the neighborhood set $N_S = \{w \in G:~\exists v \in S,~w \sim_G v\}$, the subspace $\mathcal{W}$ is spanned by the following linear relations
    $$
    \begin{aligned}
    &\left\{ e_v:~\forall \text{$(n-1)$-clique $S \subseteq G$, $v \not \in N_S$} \right\} \cup \left\{ e_v - 1:~v \sim_{Q} w,~\forall w \in V(Q)  \setminus \{v\}  \right\} \\
    \cup &\left\{ \sum_{v \in H} e_v - m:~H \subseteq V(Q) \text{ being minimal},~\forall \text{$n$-clique $C \subseteq Q$}, |C \cap H| = m \right\}.
    \end{aligned}
    $$
    For those vertex subset $H$ being not `minimal', it admits a decomposition $H = H_1 \sqcup H_2$ such that $\sum_{v \in H_1} e_v = m_1$, $\sum_{v \in H_2} e_v = m_2$ and it follows that $m = m_1 + m_2$. Thus, we are interested in the minimal $H$s as they induces those linear relations that are building blocks of the subspace.
\end{theorem}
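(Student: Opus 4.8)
The plan is to rephrase both sides of~\eqref{eqn:generatingRelations} through the linear surjection $\Phi\colon \Span\{e_v : v\in V(G)\}\to \mathcal{A}_{lc}(\Hom(K_n,G))$ sending $e_v\mapsto y_v:=\sum_{i=1}^n x_{i,v}+\mathcal{I}_{lc}(\Hom(K_n,G))$, so that the left-hand side of~\eqref{eqn:generatingRelations} is $\image\Phi$ and the theorem becomes the single statement $\ker\Phi=\mathcal{W}$. A first useful remark is that $\sum_{v\in V(G)} y_v=\sum_{i=1}^n\sum_v x_{i,v}+\mathcal{I}_{lc}=n\cdot 1$, which shows $1\in\image\Phi$ and lets us read the inhomogeneous-looking generators $e_v-1$ and $\sum_{v\in H}e_v-m$ of $\mathcal{W}$ as honest vectors $e_v-\tfrac1n\sum_{v'}e_{v'}$ and $\sum_{v\in H}e_v-\tfrac mn\sum_{v'}e_{v'}$ in $\Span\{e_v : v\in V(G)\}$.

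The next step is to make $\Phi$ explicit by passing to $Q$. By Lemma~\ref{lemma:isoSubgraph} there is an isomorphism $\mathcal{A}_{lc}(\Hom(K_n,G))\cong\mathcal{A}_{lc}(\Hom(K_n,Q))$ carrying $x_{i,v}\mapsto x_{i,v}$ for $v\in V(Q)$; for $v$ outside $V(Q)$, that is, for a vertex lying in no $n$-clique (this is the first family of generators of $\mathcal{W}$), Lemma~\ref{lemma:neighborIsomorphism} already forces $y_v=0$, since every minimal idempotent annihilates $x_{i,v}$ unless $v$ belongs to the corresponding $n$-clique. Applying Lemma~\ref{lemma:neighborIsomorphism} to $Q$, the coordinates of the $S$-summand of $\mathcal{A}_{lc}(\Hom(K_n,Q))\cong\bigoplus_S\mathbb{C}^{|N_S|(n-1)!}$ are indexed by pairs $(\pi,w)$ with $\pi\colon[n-1]\xrightarrow{\sim}S$ and $w\in N_S$, and one checks that $x_{i,v}$ equals $1$ in the $(S,\pi,w)$-coordinate exactly when $i\le n-1$ and $\pi(i)=v$, or $i=n$ and $w=v$. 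Summing over $i$, the generator $y_v$ equals $1$ in that coordinate exactly when $v\in S\cup\{w\}$, that is, when $v$ lies in the associated $n$-clique $C=S\cup\{w\}$; crucially, this value depends only on $C$ and not on $(\pi,w)$, so $\image\Phi$ is contained in the subalgebra of tuples that are constant on the fibres of $(S,\pi,w)\mapsto C$, which is isomorphic to $\bigoplus_{C}\mathbb{C}$ with one summand per $n$-clique $C$ of $Q$, and there $\Phi(e_v)=(\one_{v\in C})_{C}$. Consequently $\ker\Phi=\{\sum_v\lambda_v e_v : \sum_{v\in C}\lambda_v=0\text{ for every $n$-clique }C\subseteq Q\}$. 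The inclusion $\mathcal{W}\subseteq\ker\Phi$ is then immediate: any generator $\sum_{v\in H}e_v-\tfrac mn\sum_{v'}e_{v'}$ with $|C\cap H|=m$ for all $n$-cliques $C$ has clique-sums $m-\tfrac mn\cdot n=0$; the case $H=\{v\}$, $m=1$ is precisely $e_v-1$ for $v$ adjacent in $Q$ to every other vertex (such a $v$ lies in every $n$-clique of $Q$, by Lemma~\ref{lemma:sumToIdentity}), and $H=\{v\}$, $m=0$ is the case of $v$ in no $n$-clique, while the non-minimal $H$ split as $H=H_1\sqcup H_2$ with each $H_j$ balanced and $m=m_1+m_2$ and hence add nothing.

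The remaining direction, $\ker\Phi\subseteq\mathcal{W}$, is the heart of the proof. Given $\lambda$ with $\sum_{v\in C}\lambda_v=0$ on every $n$-clique $C\subseteq Q$, I would argue with ``exchange moves'': whenever two $n$-cliques differ in exactly one vertex, $C=S\cup\{u\}$ and $C'=S\cup\{u'\}$, subtracting the two relations gives $\lambda_u=\lambda_{u'}$, so $\lambda$ is constant on the blocks of the equivalence relation on $V(Q)$ generated by all such pairs; and, applying the same observation to $\sum_{v\in H}e_v-\tfrac mn\sum_{v'}e_{v'}$, every balanced $H$ must be a union of blocks. If the ``swap graph'' on the $n$-cliques of $Q$ (two $n$-cliques adjacent iff they share $n-1$ vertices) is connected, each block is itself balanced and a short rank computation shows the block-indicator relations already span $\ker\Phi$; in general one would run this componentwise on the swap graph and recombine the per-component minimal balanced sets. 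I expect this spanning argument, and in particular the disconnected-swap-graph bookkeeping together with the verification that \emph{minimal} balanced $H$ suffice (the content of the statement's last paragraph), to be the main obstacle; everything preceding it is routine bookkeeping over Lemmas~\ref{lemma:isoSubgraph}--\ref{lemma:sumToIdentity}.
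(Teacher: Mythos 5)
Your argument for the containment $\mathcal{W}\subseteq\ker\Phi$ is correct but takes a genuinely different route from the paper. The paper verifies each family of relations by direct manipulation inside the algebra: the first family via the direct-sum decompositions of Lemma~\ref{lemma:neighborIsomorphism} applied to a vertex missing from every $(n-1)$-clique neighborhood, the second via Lemmas~\ref{lemma:isoSubgraph} and~\ref{lemma:sumToIdentity}, and the third by expanding $\sum_i\sum_{v\in H}x_{i,v}\prod_{j\neq i}(\sum_u x_{j,u})$ and counting how often each $n$-clique product appears. You instead pass once and for all to the character description of $\mathcal{A}_{lc}(\Hom(K_n,Q))$ coming from Lemma~\ref{lemma:neighborIsomorphism}, observe that $\Phi(e_v)$ evaluates to $\one_{v\in C}$ at the character attached to an $n$-clique $C$, and read off $\ker\Phi$ as the clique-balanced vectors; this is cleaner and subsumes all three families in one computation (just be careful that the $N_S$ of Lemma~\ref{lemma:neighborIsomorphism} is the \emph{common} neighborhood, whereas the theorem statement uses the union of neighborhoods, and that the identification of characters with bijections $[n]\to C$ must be taken from the explicit isomorphism in \cite{helton2019algebras}). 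As for the direction $\ker\Phi\subseteq\mathcal{W}$ that you flag as the heart of the proof and leave as a sketch: the paper's own proof does not address it at all --- it stops after checking that the listed relations hold and then ``takes the mapping $\sum_i x_{i,v}\mapsto e_v$'' --- so you have not missed anything the paper supplies; your exchange-move/swap-graph plan is a reasonable outline for closing a spanning argument that the paper leaves implicit, though as written it is not yet a proof (the claim that minimal balanced sets $H$ generate all of $\ker\Phi$ still needs the rank computation you defer).
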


\begin{proof}
    To prove the first relation, we first examine the case when $v \not \in N_S$ for any $(n-1)$-clique $S$ in $G$. The trivial case is when $v$ is isolated, then $\forall u \in G \setminus \{v\},~i, j \in [n]$, $x_{i, u} x_{j, v} = 0$. By $x_{j, v}^2 = x_{j, v}$ and $x_{i, v}x_{j, v} = \delta_{ij} x_{i, v}^2$, this implies that $\mathcal{A}_{lc}(K_n, G) = \mathcal{A}_{lc}(K_{n-1}, G \setminus \{v\}) \oplus \bigoplus_{j=1}^n \mathcal{A}_{lc}(K_{n-1}, G \setminus \{v\}) x_{j, v} \cong \mathcal{A}_{lc}(K_{n-1}, G \setminus \{v\})$ by the locally commuting condition. Thus every $x_{i, v}$ vanishes and $\sum_{i \in [n]} x_{i, v} = \sum_{i \in [n]} 0 = 0$ when $n$ is enumerable.
    
    Suppose $S \subseteq G$ is the maximal $m$-clique in $G$ such that $m < n - 1$ and $v \in N_S$. Without loss of generality, we assume that $G$ is connected, and for disconnected graphs $G$ it suffices to consider the analog of the proof on each subalgebra of $\Conv_{j} \mathcal{A}_{lc}(\Hom(K_n, \mathcal{C}_j))$, where $\mathcal{C}_j$ are the connected components of $G$. Denote $H_v := \left[V(G) \setminus \left(S \cup \{v\}\right)\right] \cap N_v$, by Lemma \ref{lemma:neighborIsomorphism}, we have
    $$
    \mathcal{A}_{lc}(K_{n}, G) \cong A_{lc}\left(K_{n-1}, S \sqcup H_v\right) \oplus \bigoplus_{y \neq v} \mathcal{A}_{lc}(K_{n-1}, N_y).
    $$
    We claim that in $S \sqcup H_v$ there is no $(n-1)$-clique, since by a contradictory argument, if there is, then $S \sqcup H_v$ will be the maximal clique where $v \in N_{S \sqcup H_v}$ and $|S \sqcup H_v| = n - 1 > m$, contradicting the minimality assumption. Thus $A_{lc}\left(K_{n-1}, S \sqcup H_v\right) = (0)$. As for the second subalgebra, 
    $$
    \begin{aligned}
    \bigoplus_{y \neq v} \mathcal{A}_{lc}(K_{n-1}, N_y) &= \bigoplus_{y \in S \sqcup H_v} \mathcal{A}_{lc}(K_{n-1}, N_y) \oplus \bigoplus_{y \in V(G) \setminus (N_v \cup \{v\})} \mathcal{A}_{lc}(K_{n-1}, N_y) \\
    &\cong \bigoplus_{y \in V(G) \setminus (N_v \cup \{v\})} \mathcal{A}_{lc}(K_{n-1}, N_y),
    \end{aligned}
    $$
    where the second isomorphism arises from the fact that there is no $(n-2)$-clique in $N_y$ for any $y \in S \sqcup H_v$, since if there is, together with $\{y, v\}$ they will form a $n$-clique. Thus, $\mathcal{A}_{lc}(K_{n}, G) \cong \bigoplus_{y \in V(G) \setminus (N_v \cup \{v\})} \mathcal{A}_{lc}(K_{n-1}, N_y)$, and $x_{i, v}$ annihilates since $v \not \in N_y$ for any $y \in V(G) \setminus (N_v \cup \{v\})$. It follows again that $\sum_{i \in [n]} x_{i, v} = 0$. 
    
    
    The second linear relation can be obtained from Lemma \ref{lemma:isoSubgraph} and \ref{lemma:sumToIdentity}, and the last relation can be analogously derived. Suppose $H = \{v_1, \dots, v_{|H|}\}$, each $n$-clique in $Q$ has exactly $m \geq 1$ members in $H$. Then by an analogous statement, suppose the $n$-cliques in $Q$ are listed as $\{C_{\ell}\}_{\ell=1}^{k}$, and each $v \in H$ falls into cliques $\{C_{\ell}\}_{\ell \in I_v}$, where $I_v \subseteq [k]$. We have 
    $$
    \begin{aligned}
    \sum_{i \in [n]} \sum_{v \in H} x_{i, v}  \prod_{j \in [n] \setminus \{i\} } \left(\sum_{u \in Q } x_{j, u} \right) &= \sum_{i \in [n]} \sum_{v \in H} x_{i, v}  \sum_{\ell \in I_v } \sum_{w_j \in C_{\ell} \setminus \{v\}} \prod_{ j \in [n] \setminus \{i\} } x_{j, w_j} \\
    &= \sum_{v \in H} \sum_{\ell \in I_v} \sum_{ w_j \in C_{\ell} } \prod_{j \in [n]} x_{j, w_j} = m \cdot \underbrace{\sum_{\ell = 1}^k \sum_{w_j \in C_{\ell}} \prod_{j \in [n]} x_{j, w_j}}_{ = \prod_{i=1}^n \left(\sum_{v \in Q} x_{i, v} \right) = 1  } = m.
    \end{aligned}
    $$
    where the second last equality is due to $\biguplus_{v \in H, \ell \in I_v} V(C_{\ell}) = \biguplus_{i\in[m], \ell \in [k]} V(C_{\ell})$, \ie, traversing vertices $v \in H$ and summing the exponent-free product corresponds to the $n$-cliques that $v$ falls in results in an $m$-copy of all the products of $n$-cliques in $Q$. Here $\biguplus$ stands for the multiset.
    \par The case for $Q$ being disconnected is immediate, since if there are connected components $\mathcal{C}_1, \mathcal{C}_2, \dots, \mathcal{C}_r$ and such set $H$ yields the form $H = \bigsqcup_{j=1}^r H_j$ where $H_j$ is subset of each component $\mathcal{C}_j$. Each $n$-clique of $Q$ falls into exactly one of $\{\mathcal{C}_j\}_{j \in [r]}$, and its vertices will appear only in one of the vertex sets $\{H_j\}_{j\in[r]}$. By an analogous statement we have $\sum_{v \in H} \sum_{i \in [n]} x_{i, v} = \sum_{j \in [t]} \sum_{v \in H_j} \sum_{i \in [n]} x_{i, v} = m$, and we omit the details here.
    \par Finally, by composing all the statements above, we take mapping $\sum_{i \in [n]} x_{i, v} \mapsto e_v$ for each $v \in V(G)$ and the result follows immediately.
\end{proof}

\begin{remark}
    The relation $\{e_v - 1:~\forall u \in V(Q) \setminus \{v\},~u \sim_Q v  \}$ in (\ref{eqn:generatingRelations}) can be combined with the latter one. Since by taking $H = \{v\}$, $C \cap H = \{v\}$ holds for any $n$-clique $C$ in $Q$, whence $m = 1$.
\end{remark}

Theorem \ref{thm:generatingSpace} allows us to mimic Sobchuk's constructive proof of $\alpha_{q}(G) \leq \vartheta(G)$ in  \cite{Mariia2024quantumindependencechromaticnumbers} (section 6.4), and give a proof of $\omega_{lc}(G) \leq \vartheta(\overline{G})$ for a certain family of graph $G$. Prior to our construction, we present a preliminary result on the Lov\'asz theta function.

\begin{theorem}
\label{thm:Lovasz}
    For any graph $G$, denote $\bm{1}$ as the all-one vector in $\mathbb{C}^{|V(G)|}$, then
    \begin{equation} \label{eqn:LovaszSDP}
    \vartheta(G) = \left\{
    \begin{array}{cl}
        \sup\limits_{\mathscr{S}} & \left< \mathscr{S}, \bm{1} \bm{1}^T \right> \\
        \textnormal{s.t.} & \mathscr{S}_{u, v} = 0, \quad \forall u, v \in V(G),~u \sim_G v; \\
        &\textnormal{Tr}(\mathscr{S}) = 1; \quad \mathscr{S} \in \mathbb{S}_{+}^{|V(G)|}\footnotemark;
    \end{array} \right\}.
    \end{equation}
    \footnotetext{we use $\mathbb{S}_{+}^{\ell}$ to specify the cone of symmetric positive semidefinite matrices with dimension $\ell \times \ell$.}
\end{theorem}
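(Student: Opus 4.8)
The plan is to obtain \eqref{eqn:LovaszSDP} by dualizing a more classical description of $\vartheta$. Recall from \cite{Lovasz} (see also the standard treatments of the sandwich theorem) that $\vartheta(G)$ equals the minimum of $\lambda_{\max}(M)$ over all real symmetric $|V(G)|\times|V(G)|$ matrices $M$ with $M_{uv} = 1$ whenever $u = v$ or $u \not\sim_G v$, so that the only free entries of $M$ sit on the edges of $G$. Writing $A_{uv} := e_u e_v^{T} + e_v e_u^{T}$ for an edge $uv \in E(G)$ and $J := \bm{1}\bm{1}^{T}$, every such $M$ has the form $M = J - \sum_{uv \in E(G)} z_{uv} A_{uv}$ with the $z_{uv} \in \mathbb{R}$ free, and $\lambda_{\max}(M) \le t$ is equivalent to $tI - M \succeq 0$; hence
\[
\vartheta(G) = \min\Bigl\{\, t \;:\; t\in\mathbb{R},\ z\in\mathbb{R}^{E(G)},\ tI + \textstyle\sum_{uv \in E(G)} z_{uv} A_{uv} - J \succeq 0 \,\Bigr\},
\]
a semidefinite program in ``dual form''; call it $(P)$.

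I would then compute the semidefinite dual of $(P)$. Pairing the matrix inequality with a variable $\mathscr{S} \succeq 0$, the Lagrangian is $t - \langle \mathscr{S},\, tI + \sum_{uv} z_{uv} A_{uv} - J\rangle = t\bigl(1 - \Tr \mathscr{S}\bigr) - 2\sum_{uv} z_{uv} \mathscr{S}_{uv} + \langle \mathscr{S}, J\rangle$, using $\langle \mathscr{S}, A_{uv}\rangle = 2\mathscr{S}_{uv}$ and $\langle \mathscr{S}, I\rangle = \Tr(\mathscr{S})$. Minimizing over the free variables $t$ and $z$ forces $\Tr(\mathscr{S}) = 1$ and $\mathscr{S}_{uv} = 0$ for every $uv \in E(G)$, leaving the objective $\langle \mathscr{S}, J\rangle = \langle \mathscr{S}, \bm{1}\bm{1}^{T}\rangle$; this is precisely the maximization on the right-hand side of \eqref{eqn:LovaszSDP}. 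To conclude, I would invoke strong duality: the dual program is strictly feasible, since $\mathscr{S} = |V(G)|^{-1} I$ is positive definite, has trace $1$, and vanishes off the diagonal, hence meets every edge constraint; Slater's condition then gives zero duality gap with the minimum in $(P)$ attained, so $\vartheta(G)$ equals the optimal value of \eqref{eqn:LovaszSDP}. One may additionally note that $(P)$ is itself strictly feasible --- take $M = J$ and $t = |V(G)| + 1$, whence $tI - J \succ 0$ --- so the supremum in \eqref{eqn:LovaszSDP} is attained as written.

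I do not anticipate a genuine obstacle: the argument is routine convex duality, and the real care is in the bookkeeping. The two points to get right are (i) the complementation convention, i.e.\ checking that the eigenvalue formula with free entries \emph{on the edges} computes $\vartheta(G)$ and not $\vartheta(\overline{G})$ --- this is pinned down by testing $G = K_n$ (value $1$) and the edgeless graph (value $|V(G)|$) against both \eqref{eqn:LovaszSDP} and the formula --- and (ii) correctly matching each dual variable to the primal constraint it is conjugate to. If a fully self-contained proof were preferred, one would instead start from Lovász's orthonormal-representation definition $\vartheta(G) = \min_{c, \{u_i\}} \max_i \langle c, u_i\rangle^{-2}$ and prove the two inequalities against the value of \eqref{eqn:LovaszSDP} directly, constructing an orthonormal representation from a Gram factorization of an optimal $\mathscr{S}$ for one direction and a feasible $\mathscr{S}$ from an optimal representation for the other; this avoids citing an external duality theorem at the cost of length.
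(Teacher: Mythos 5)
Your derivation is correct: starting from Lovász's $\lambda_{\max}$ characterization, the conic parametrization, the Lagrangian bookkeeping ($\langle\mathscr{S},A_{uv}\rangle=2\mathscr{S}_{uv}$ forcing the edge constraints, the coefficient of $t$ forcing $\Tr\mathscr{S}=1$), and the Slater certificates $\mathscr{S}=|V(G)|^{-1}I$ and $(t,M)=(|V(G)|+1,J)$ are all sound, and your sanity check on $K_n$ and the edgeless graph settles the complementation convention. Note, however, that the paper offers no proof of this theorem at all --- it is quoted as a known SDP formulation of $\vartheta$ (essentially Theorem 4 of Lovász's paper, via the citation preceding the statement) --- so there is nothing to compare against; your argument stands as a legitimate self-contained justification of the cited fact.
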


\begin{proposition}
\label{prop:lovasz}
    For a graph $G$ such that for any such vertex subset $H$ we have $|C \cap H| = 1$ ($m = 1$) for any $n$-clique of $C$ of subgraph $Q$, there exists a feasible decision variable $\mathscr{S}$ in the SDP (\ref{eqn:LovaszSDP}) associated with $\vartheta(\overline{G})$, such that $ \left<\mathscr{S}, \bm{1}\bm{1}^T\right> / \textnormal{Tr}(\mathscr{S}) = n$.
\end{proposition}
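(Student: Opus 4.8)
The plan is to exhibit a single feasible point of the semidefinite program (\ref{eqn:LovaszSDP}) for $\vartheta(\overline{G})$ whose objective‑to‑trace ratio is exactly $n$; since that ratio is scale invariant, this already certifies $\vartheta(\overline{G})\ge n$. The standing assumption makes $\mathcal{A}_{lc}(\Hom(K_n,G))$ nonzero, and by Lemma~\ref{lemma:neighborIsomorphism} this algebra is a finite direct sum of copies of $\mathbb{C}$, hence commutative and equipped with a state $\tau$ (automatically tracial); concretely $\tau$ is a convex combination of point evaluations, each of which selects an $n$-clique of $Q$. Writing $X_v:=\sum_{i\in[n]}x_{i,v}$ for the image in $\mathcal{A}_{lc}(\Hom(K_n,G))$ of the $v$-th column sum of the PVMs, I would set
\[
\mathscr{S}_{u,v}:=\tfrac{1}{n}\,\tau\!\left(X_uX_v\right),\qquad u,v\in V(G),
\]
which is the Sobchuk-type Gram matrix of the vectors $\{X_v\Omega_\tau\}$ in the GNS space of $\tau$, rescaled by $1/n$; equivalently, $\mathscr{S}$ is a convex combination of the normalized indicator matrices $\tfrac1n\bm{1}_C\bm{1}_C^{T}$ of $n$-cliques $C$ of $Q$.

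Next I would verify the four requirements in turn. Symmetry and positive semidefiniteness are immediate from the Gram-matrix description together with $X_v^*=X_v$ (for real $c$, $\sum_{u,v}c_uc_v\tau(X_uX_v)=\tau(Y^*Y)\ge 0$ with $Y:=\sum_uc_uX_u=Y^*$). The support condition is where the game relations (\ref{eqn:NonlocalGameRelations}) enter: if $u\ne v$ and $u\not\sim_G v$, then $x_{i,u}x_{j,v}$ lies in the defining ideal for every $i,j\in[n]$ --- for $i\ne j$ because $(i,j,u,v)\in\lambda^{-1}(\{0\})$ in the $\Hom(K_n,G)$ game, and for $i=j$ because $(i,i,u,v)\in\lambda^{-1}(\{0\})$ as well (PVM orthogonality, $u\ne v$) --- so $X_uX_v=0$ and hence $\mathscr{S}_{u,v}=0$. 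For the trace, idempotency and orthogonality of the $x_{i,v}$ give $X_v^2=X_v$, so $\Tr\mathscr{S}=\tfrac1n\tau(\sum_vX_v)=\tfrac1n\tau(\sum_i\sum_vx_{i,v})=\tfrac1n\tau(n\cdot 1)=1$. Finally, $\langle\mathscr{S},\bm{1}\bm{1}^{T}\rangle=\tfrac1n\tau\big((\sum_uX_u)(\sum_vX_v)\big)=\tfrac1n\tau(n^{2}\cdot 1)=n$, again using $\sum_vX_v=n\cdot 1$. This gives the desired ratio $\langle\mathscr{S},\bm{1}\bm{1}^{T}\rangle/\Tr(\mathscr{S})=n$.

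The hypothesis that every minimal $H$ has $m=1$ is what lets one read this construction off the combinatorial picture of Theorem~\ref{thm:generatingSpace} instead of from an abstract state: when $m=1$ the relations cutting out $\mathcal{W}$ have only the forms $e_v=0$, $e_v=1$, and $\sum_{v\in H}e_v=1$, and unwinding Lemmas~\ref{lemma:isoSubgraph}--\ref{lemma:sumToIdentity} shows that $Q$ is a subgraph of a complete $n$-partite graph with the sets $H$ as its parts, whose $n$-cliques are exactly the transversals --- the regime in which the matrices $\tfrac1n\bm{1}_C\bm{1}_C^{T}$ respect the $\overline{G}$-support pattern and in which the diagonal $\tfrac1n\tau(X_v)$ sums to $1$ with the factor $n$ coming out "unquotiented". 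I expect the main obstacle to be precisely this matching step: confirming that the state-theoretic $\mathscr{S}$ above coincides with (or may be replaced by) the matrix assembled from the $H$-partition of $V(Q)$, and that feasibility persists when $Q$ is only a proper subgraph of the associated $n$-partite graph rather than all of it. The cases $m\ge 2$, where a minimal $H$ meets every $n$-clique with multiplicity, are exactly what this method fails to control, which is why the statement is restricted to the $m=1$ family.
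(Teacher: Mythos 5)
Your proposal is correct, but it takes a genuinely different route from the paper's. The paper gives the Sobchuk-style explicit vector construction: using Theorem \ref{thm:generatingSpace} it fixes the largest admissible subset $H_{\max}$ (of size $h$), assigns $\{0,1\}$-vectors in $\mathbb{C}^h$ to the symbols $e_v$ (standard basis vectors on $H_{\max}$, sums $\sum_{i\in I_j}$ of them on the other subsets $H$, and $\bm{0}$ off $Q$), and takes $\mathscr{S}$ to be the resulting Gram matrix with $\Tr(\mathscr{S})=nh$ and $\left<\mathscr{S},\bm{1}\bm{1}^T\right>=n^2h$; the hypothesis $m=1$ is used there precisely to make the index sets $I_j=N_{v_{t_j}}\cap H_{\max}$ pairwise disjoint, so the assignment is consistent and non-adjacent vertices get orthogonal vectors. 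You instead work directly in the algebra: any state $\tau$ on the nonzero finite-dimensional abelian algebra $\mathcal{A}_{lc}(\Hom(K_n,G))$ (which exists by Lemma \ref{lemma:neighborIsomorphism}) yields $\mathscr{S}_{u,v}=\tfrac1n\tau(X_uX_v)$ with $X_v=\sum_i x_{i,v}$. All your feasibility checks are sound: positive semidefiniteness is the Gram/GNS argument; the support condition holds because $x_{i,u}x_{j,v}\in\mathcal{I}(\Hom(K_n,G))$ for \emph{all} $i,j$ once $u\neq v$ and $u\not\sim_G v$, so $X_uX_v=0$; and $X_v^2=X_v$ together with $\sum_v X_v=n\cdot 1$ gives $\Tr(\mathscr{S})=1$ and $\left<\mathscr{S},\bm{1}\bm{1}^T\right>=n$. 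This is essentially the Ortiz--Paulsen argument for $\omega_{C^*}(G)\le\vartheta(\overline{G})$ specialized to the $lc$-algebra, and — as you half-suspect in your closing paragraph — it nowhere uses $m=1$: the ``matching step'' you flag as the main obstacle is not actually needed, since your state-theoretic $\mathscr{S}$ is feasible on its own (equivalently, each character of the algebra selects an $n$-clique $C$ and $\tfrac1n\bm{1}_C\bm{1}_C^T$ is already feasible with ratio $n$). Your argument therefore proves the bound $\omega_{lc}(G)\le\vartheta(\overline{G})$ for all graphs, which in particular resolves affirmatively the $m\ge 2$ question the paper poses immediately after this proposition. The one hypothesis you should state explicitly is the non-triviality of $\mathcal{A}_{lc}(\Hom(K_n,G))$ (which the paper assumes in applying the proposition with $n=\omega_{lc}(G)$) — without it there is no state to start from.
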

\begin{proof}
    Suppose $H = \{H_1, \dots, H_r\}$ is the set of all such vertex subsets. Take $H_{\max} = \mathop{\arg\max}_{i \in [r]} |H_i|$, where $H_{\max} = \{v_{k_1}, \dots, v_{k_h}\}$, we define a linear map $\tau: \{1\} \cup \{e_v:v \in V(G)\} \to \mathbb{C}^h$ by
    $$
    \tau(1) = \bm{1}^h, \quad \tau(e_{v_{k_i}}) = e_{i} \in \mathbb{C}^h: \quad [e_{i}]_j = \delta_{ij}.
    $$
     As for any vertex subset $H \neq H_{\max}$, we keep $\tau$ to be consistent on $H_{\max} \cap H$. For vertices in $H \setminus H_{\max} := \{v_{t_1}, \dots, v_{t_p}\}$, the index set $[h]$ admits a decomposition $[h] = \bigsqcup_{j=1}^p I_j$, where $I_j$ is the neighboring vertices of $v_{t_j}$ in $H_{\max}$, \ie, $I_j = N_{v_{t_j}} \cap H_{\max}$. Note that $\{I_j\}_{j=1}^p$ are disjoint sets, since if $I_{j_1} \cap I_{j_2} \neq \varnothing$ for some $j_1, j_2 \in [p]$ indicates that $v_{t_{j_1}}$ and $v_{t_{j_2}}$ are in the same $n$-clique $C$, making $|C \cap H| \geq 2$, a contradiction. Then we define $\tau(e_{v_{t_j}}) = \sum_{i \in I_j} \tau(e_{v_{k_i}})$ for each $j \in [p]$. Finally, for those vertices $w \in V(G) \setminus V(Q)$, we set $\tau(e_w) = \bm{0}^h$.
    \par If we set up a matrix $\mathscr{S} \in \mathbb{C}^{|V(Q)|}$, such that $\mathscr{S}_{u, v} = \left<\tau(e_u), \tau(e_v)\right>$. Then if $u \not \sim_Q v$, they must fall into different $n$-cliques in $Q$, and by our construction, $u$ and $v$ are neighbors of two vertices in $H_{\max}$, and thus $\tau(e_u)$ and $\tau(e_v)$ admit no common non-zero entries in any dimensions. Then it follows immediately that $\mathscr{S}_{u, v} = 0$ for all $u \not \sim_Q v$.

    Since the image of $\tau$ are $\{0, 1\}$-vectors, it holds that $\|\tau(e_v)\|^2 = \left<\tau(e_v), \bm{1}^h\right>$. Thus,
    $$
    \begin{aligned}
    \textnormal{Tr}(\mathscr{S}) &= \sum_{v \in V(G)} \left<\tau(e_v), \tau(e_v)\right> = \sum_{v \in V(Q)} \left<\tau(e_v), \tau(e_v)\right> = \sum_{v \in V(Q)} \left<\tau(e_v), \bm{1}^h\right> \\
    &= \left<\tau\left(\sum_{v \in V(Q)} e_v \right), \bm{1}^h\right> = \left<\tau(n \cdot 1), \bm{1}^h\right> = \left<n \cdot \bm{1}^h, \bm{1}^h\right> = nh,
    \end{aligned}
    $$
    and
    $$
    \begin{aligned}
    \left< \mathscr{S}, \bm{1}^h {\bm{1}^h}^T \right> &= \sum_{u \in V(G)} \sum_{v \in V(G)} \left<\tau(e_u), \tau(e_v)\right> = \left< \tau\left(\sum_{u \in V(G)} e_u \right), \tau\left(\sum_{v \in V(G)} e_v \right) \right> \\
    &= \left<\tau(n \cdot 1), \tau(n \cdot 1)\right> = \left< n \cdot \bm{1}^h, n \cdot \bm{1}^h \right> = n^2 h.
    \end{aligned}
    $$
    From the construction, $\mathscr{S}$ is a Gram matrix, and is thus in the cone $\mathbb{S}_{+}^{|V(G)|}$. Finally, $\left<\mathscr{S}, \bm{1}\bm{1}^T\right> / \textnormal{Tr}(\mathscr{S}) = n^2h / nh = n $.
\end{proof}

By Proposition \ref{prop:lovasz}, whenever the quotient algebra $\mathcal{A}_{lc}(\Hom(K_n, G))$ is non-trivial, we can construct the linear map $\tau$ on $\{1\} \cup \{x_{i, v}\}_{i \in [n], v \in V(G)}$, then further build a feasible SDP decision variable from the image of $\tau$. By Theorem \ref{thm:Lovasz}, take $n = \omega_{lc}(G)$, it follows that $\vartheta(\overline{G}) \geq n = \omega_{lc}(G)$.


\begin{problem}
    Can we generalize the above construction to the graphs on which $m \geq 2$? Examples include the cycle graph $C_k$ with $k \in 2 \mathbb{N} + 1$ (or graphs containing $C_k$ as an induced subgraph and has clique number $2$), the only subset $H$ is $H = V(C_k)$ with $m = 2$.
\end{problem}

\begin{remark}
    $\mathcal{A}_{lc}(\Hom(K_n, G))$ is in fact Abelian, since for any $u, v \in V(G)$ and $i, j \in [n]$, $[x_{i, u}, x_{j, v}] = \one_{i = j, u = v} \cdot [x_{i, u}, x_{i, u}] + \one_{i \neq j, u = v} \cdot [0, 0] + \one_{i \neq j, u \not \sim_G v} \cdot [0, 0] + \one_{i \neq j, u \sim_G v} \cdot [x_{i, u}, x_{j, v}] = 0$.
\end{remark}

\section{Noncommutative Nullstellens\"atze and its algorithmic implementation}
In \cite{BeneWattsNullstellensatze2023}, Watts, Helton, and Klep discuss the algebraic characterization of general non-local games. They utilized a result in the NC algebraic geometry, known as the NC Nullstellens\"atz, to transform the decision of the existence of perfect $C^*$-strategy and perfect commuting operator ($qc$-) strategy into the decision of noncommutative semialgebraic set membership. Their main result in the paper is summarized as follows:

\begin{theorem} \cite{BeneWattsNullstellensatze2023}
\label{thm:NCPsatzSyncGame}
    Let $\mathcal{A}$ be a $*$-algebra, and $\mathcal{L} \subseteq \mathcal{A}$ is a left ideal. Define the set of positive and tracial terms in $\mathcal{A}$ as
    $$\widetilde{SOS}_{\mathcal{A}} := \left\{ a \in \mathcal{A}:~\exists~ b \in SOS_{\mathcal{A}},~a = b + \sum_{i} [x_i, y_i],~x_i, y_i \in \mathcal{A} \right\}, $$ then
    \begin{itemize}
        \item If $SOS_{\mathcal{A}}$ is Archimedean, or equivalently, $\mathcal{A}$ is Archimedean, the following statements are equivalent:
        \begin{itemize}
            \item There exists a $*$-representation $\mathcal{A} \to \mathcal{B}(\mathcal{H})$ and a non-trivial state $\psi \in \mathcal{H}$, such that $\pi(\mathcal{L}) \psi = \{0\}$.
            \item $-1 \not \in SOS_{\mathcal{A}} + \mathcal{L} + \mathcal{L}^*$.
        \end{itemize}

        \item If $\widetilde{SOS}_{\mathcal{A}}$ is Archimedean, the following statements are equivalent:
        \begin{itemize}
            \item There exists a $*$-representation $\mathcal{A} \to \mathcal{B}(\mathcal{H})$ and a non-trivial \textbf{tracial} state $\psi \in \mathcal{H}$, such that $\pi(\mathcal{L}) \psi = \{0\}$.
            \item $-1 \not \in \widetilde{SOS}_{\mathcal{A}} + \mathcal{L} + \mathcal{L}^*$.
        \end{itemize}
    \end{itemize}
\end{theorem}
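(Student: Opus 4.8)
The plan is to prove this by a \textbf{GNS construction fed by a Hahn--Banach (Eidelheit) separation}, treating both bullets uniformly: they differ only in which convex cone is separated from $-1$ and in one extra linear constraint on the resulting functional.

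First I would dispatch the ``easy'' implications. If $\pi\colon\mathcal A\to\mathcal B(\mathcal H)$ and a unit vector $\psi$ satisfy $\pi(\mathcal L)\psi=\{0\}$, let $\phi(a)=\langle\psi,\pi(a)\psi\rangle$. Were $-1=b+\ell+m^*$ with $b\in SOS_{\mathcal A}$ and $\ell,m\in\mathcal L$, applying $\phi$ would give $-1=\phi(b)+\phi(\ell)+\phi(m^*)\ge 0$, since $\phi(SOS_{\mathcal A})\subseteq\mathbb R_{\ge 0}$ while $\phi(\ell)=\langle\psi,\pi(\ell)\psi\rangle=0$ and $\phi(m^*)=\langle\pi(m)\psi,\psi\rangle=0$; contradiction, so $-1\notin SOS_{\mathcal A}+\mathcal L+\mathcal L^*$. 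When $\psi$ is tracial, $\phi$ additionally annihilates every commutator, which absorbs the extra summand defining $\widetilde{SOS}_{\mathcal A}$.

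The substantive direction is the converse. Assume $-1\notin SOS_{\mathcal A}+\mathcal L+\mathcal L^*$ (resp.\ $-1\notin\widetilde{SOS}_{\mathcal A}+\mathcal L+\mathcal L^*$). Inside the real vector space $\Herm(\mathcal A)$ I would form the convex cone $C=SOS_{\mathcal A}+\left((\mathcal L+\mathcal L^*)\cap\Herm(\mathcal A)\right)$ (resp.\ with $\widetilde{SOS}_{\mathcal A}$), which by hypothesis misses $-1$. The Archimedean hypothesis is exactly what places $1$ in the \emph{algebraic interior} of $C$: every Hermitian $a$ satisfies $N\cdot 1\pm a\in SOS_{\mathcal A}$ for some $N$, so $1\pm\varepsilon a\in SOS_{\mathcal A}\subseteq C$ for small $\varepsilon>0$, using $\delta\cdot 1=((\sqrt\delta)\,1)^*((\sqrt\delta)\,1)\in SOS_{\mathcal A}$ for $\delta\ge 0$. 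Eidelheit's separation theorem then produces a nonzero real-linear functional $\phi_0$ on $\Herm(\mathcal A)$ that is nonnegative on $C$ with $\phi_0(-1)\le 0$; interiority of $1$ forces $\phi_0(1)>0$, and I normalise $\phi_0(1)=1$. Because $SOS_{\mathcal A}\subseteq C$ and $(\mathcal L+\mathcal L^*)\cap\Herm(\mathcal A)$ is a \emph{linear subspace} contained in $C$, $\phi_0$ is positive and annihilates that subspace; in the tracial case the span of Hermitian commutators also sits in $\widetilde{SOS}_{\mathcal A}\cap\Herm(\mathcal A)\subseteq C$, so $\phi_0$ kills it too. Complexifying by $\phi(a)=\phi_0(\tfrac{a+a^*}{2})+i\,\phi_0(\tfrac{a-a^*}{2i})$ yields a state $\phi$ on $\mathcal A$ that vanishes on $\mathcal L$ — for $\ell\in\mathcal L$ both $\tfrac{\ell+\ell^*}{2}$ and $\tfrac{\ell-\ell^*}{2i}=\tfrac{1}{2i}\ell+(\tfrac{1}{2i}\ell)^*$ lie in $(\mathcal L+\mathcal L^*)\cap\Herm(\mathcal A)$ — and which is tracial in the tracial case. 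A standard GNS step then finishes: the kernel $N=\{a:\phi(a^*a)=0\}$ of the form $\langle a,b\rangle=\phi(a^*b)$ is a left ideal by Cauchy--Schwarz, $\pi(c)(a+N)=ca+N$ is a well-defined left action, Archimedeanity (pick $M$ with $M\cdot 1-c^*c\in SOS_{\mathcal A}$) bounds each $\pi(c)$ so that it extends to the completion $\mathcal H$, $\psi=1+N$ is a cyclic unit vector, and $\|\pi(\ell)\psi\|^2=\phi(\ell^*\ell)=0$ since $\ell^*\ell\in\mathcal L$; the vector state recovers $\phi$, tracial exactly when required.

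The step I expect to be the main obstacle is the interaction between Archimedeanity and the separation/GNS machinery: one must verify carefully that $1$ genuinely is an algebraic interior point of the (non-closed) cone $C$, since the topological Hahn--Banach theorem is unavailable here, and that the GNS representation acts by \emph{bounded} operators. The tracial bullet is the more delicate of the two on this last point, and I would handle it under the observation that in all intended applications — synchronous game algebras in particular, where each generator satisfies $1-x_{i,a}=(1-x_{i,a})^*(1-x_{i,a})\in SOS_{\mathcal A}$, so $\mathcal A$ is itself Archimedean — the boundedness estimate goes through verbatim.
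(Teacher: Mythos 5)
The paper does not actually prove this theorem --- it is quoted from Watts--Helton--Klep as a black box --- so there is no in-paper proof to compare against line by line. That said, your separation-plus-GNS strategy is the standard proof of this result, and it is precisely the machinery the paper itself deploys later for Theorem \ref{thm:CstarEqualsHered}: Eidelheit--Kakutani separation of $-1$ from a convex cone whose algebraic interior contains $1$ by the Archimedean/Combes property, normalisation to a state killing the hermitian part of $\mathcal{L}+\mathcal{L}^*$, complexification, and a GNS representation whose boundedness comes from $\lambda\cdot 1 - a^*a \in SOS_{\mathcal{A}}$. Your first bullet is complete and correct as sketched (the only unstated step is that $-1\in SOS_{\mathcal A}+\mathcal L+\mathcal L^*$ iff $-1\in SOS_{\mathcal A}+\bigl((\mathcal L+\mathcal L^*)\cap\Herm(\mathcal A)\bigr)$, which follows by averaging a witness with its adjoint).

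The one genuine soft spot is the one you flag yourself: in the tracial bullet the hypothesis is only that $\widetilde{SOS}_{\mathcal A}$ is Archimedean, so the bound $M\cdot 1-c^*c$ lies in $\widetilde{SOS}_{\mathcal A}$ rather than $SOS_{\mathcal A}$, and $x^*(M\cdot 1-c^*c)x$ then contains terms $x^*[u,v]x$ which are not commutators; a tracial $\varphi$ gives $\varphi(x^*[u,v]x)=\varphi(u[v,xx^*])$ up to rearrangement, which need not vanish, so the boundedness estimate does \emph{not} go through verbatim. Watts--Helton--Klep handle this with a separate argument, and restricting to the case where $SOS_{\mathcal A}$ itself is Archimedean (as for synchronous game algebras, which are quotients of group algebras) sidesteps the issue exactly as you propose. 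So: correct and essentially the canonical route for the first bullet, with an honestly acknowledged but unresolved technical step in the fully general tracial bullet.
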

The above theorem allows us to transform the decision problem for the non-existence of perfect strategy of certain non-local games into a search problem for specific polynomials in noncommutative $*$-algebra. Certain toolkits can be applied to conduct the computation, and the hermitian sum-of-square terms are frequently encoded by positive-semidefinite variables. The searching procedure thus turns into a semidefinite programming (SDP) \cite{BoydSDP}. We start by presenting a theorem on deciding whether a semialgebraic set (\ie, a set defined by multivariate polynomial equalities and inequalities) is empty, and provide detailed steps of its reformulation into an SDP.

\begin{theorem} \cite{PabloPhD}
\label{thm:psatz}
    Let $\{f_i\}_{i=1}^r$, $\{g_j\}_{j=1}^m$ and $\{h_{\ell}\}_{\ell=1}^s$ be finite families of polynomials on the commutative ring $\mathbb{R}[x_1, \dots, x_n]$. Define $P\left(\left\{f_i\right\}_{i=1}^r \right)$, $M\left( \{g_j\}_{j=1}^m \right)$ and $I\left( \{h_{\ell}\}_{\ell=1}^s \right)$ as follows:
    $$
    \left\{ 
    \begin{aligned}
        &P\left(\left\{f_i\right\}_{i=1}^r \right) = \left\{ p + \sum_{i=1}^r q_i b_i:~p, q_i \in SOS_{\mathbb{R}[x_1, \dots, x_n]},~b_i \in M(\{f_i\}_{i=1}^r)  \right\} \\
        &M\left(\{g_j\}_{j=1}^m\right) = \left\{ \prod_{ i \in J } g_i:~J \subseteq \{1, \dots, m\} \right\} \\
        & I\left(\{h_\ell\}_{\ell=1}^s\right) = \left<h_1, \dots, h_s\right> = \left\{ \sum_{\ell = 1}^s  t_{\ell} h_{\ell} :~t_{\ell} \in \mathbb{R}[x_1, \dots, x_n] \right\},
    \end{aligned}
    \right.
    $$
    where we prescribe $\prod_{j \in \varnothing} g_i = 1$. Then the following statements are equivalent:
    \begin{enumerate}
        \item $\left\{ x \in \mathbb{R}^n:~f_i(x) \geq 0,~g_j(x) \neq 0,~h_{\ell}(x) = 0,~\forall i \in [r],~j \in [m],~\ell \in [s]  \right\} = \varnothing$.
        \item $\exists f \in P(\{f_i\}_{i=1}^r)$, $g \in M(\{g_j\}_{j=1}^m)$ and $h \in I(\{h_\ell\}_{\ell=1}^s)$, such that $f + g^2 + h = 0$.
    \end{enumerate}
\end{theorem}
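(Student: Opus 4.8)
The implication $(2)\Rightarrow(1)$ is an immediate substitution, so I will record it briefly and then concentrate on $(1)\Rightarrow(2)$, which is Stengle's Positivstellensatz; I would prove the latter along the classical lines of a Zorn's lemma construction of an ordering followed by the Tarski transfer principle. For $(2)\Rightarrow(1)$: if a point $x$ lay in the set $\{f_i\ge 0,\ g_j\ne 0,\ h_\ell=0\}$, then writing $f=p+\sum_i q_i b_i\in P(\{f_i\})$ with $p,q_i$ sums of squares and $b_i$ a product of some of the $f_k$, we would have $f(x)\ge 0$; for $g=\prod_{j\in J}g_j\in M(\{g_j\})$ we would have $g(x)^2>0$ since each $g_j(x)\ne 0$; and $h(x)=0$ since $h\in I(\{h_\ell\})$. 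Then $f(x)+g(x)^2+h(x)>0$, contradicting the identity $f+g^2+h=0$, so the set is empty.

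For $(1)\Rightarrow(2)$ I argue by contraposition: assuming no triple $(f,g,h)$ satisfies $f+g^2+h=0$, I produce a point of the set. Write $\mathbb{R}[x]$ for $\mathbb{R}[x_1,\dots,x_n]$. Here $P:=P(\{f_i\})$ is (readily checked to be) the preordering of $\mathbb{R}[x]$ generated by $f_1,\dots,f_r$, $M:=M(\{g_j\})$ consists of the products of subsets of the $g_j$ (with $1\in M$), and $I:=I(\{h_\ell\})$ is the ideal they generate. Passing to $A:=\mathbb{R}[x]/I$ and letting $\bar P,\bar M$ be the images of $P,M$, the non-existence of a certificate says exactly that $-\bar g^{\,2}\notin\bar P$ for every $\bar g\in\bar M$; in particular $-1\notin\bar P$, so $\bar P$ is a proper preordering of $A$ and $A\ne 0$.

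Now let $\mathcal{T}$ be the family of preorderings $T'$ of $A$ with $\bar P\subseteq T'$ and $-\bar g^{\,2}\notin T'$ for all $\bar g\in\bar M$. It contains $\bar P$ and is closed under unions of chains, so Zorn's lemma supplies a maximal element $T_\ast$. The core step is the prime-cone lemma: maximality of $T_\ast$ forces $\mathfrak p:=T_\ast\cap(-T_\ast)$ to be a prime ideal disjoint from $\bar M$, and forces $T_\ast$ to induce a total ordering $\le$ on the fraction field $K$ of the domain $A/\mathfrak p$ under which $\bar f_i\ge 0$ for all $i$, $\bar h_\ell=0$ for all $\ell$, and $\bar g_j\ne 0$ for all $j$. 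Let $R$ be a real closure of $(K,\le)$. The ring map $\mathbb{R}[x]\to A\to A/\mathfrak p\hookrightarrow K\hookrightarrow R$ sends $x_i$ to some $\xi_i\in R$ with $f_i(\xi)\ge 0$, $g_j(\xi)\ne 0$, $h_\ell(\xi)=0$, so the system has a solution over the real closed field $R$. Since $\mathbb{R}$ embeds in $R$ as an ordered subfield and all coefficients of the $f_i,g_j,h_\ell$ lie in $\mathbb{R}$, the first-order sentence asserting solvability is a sentence over $\mathbb{R}$ holding in $R$; by model completeness of the theory of real closed fields it holds in $\mathbb{R}$ as well, producing a genuine point of the set and contradicting $(1)$.

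The step I expect to be the real work is the prime-cone lemma — extracting a total ordering from a maximal proper preordering that must avoid every $-\bar g^{\,2}$ — together with the bookkeeping relating the multiplicative data $M$ (most naturally handled by localizing at the monoid it generates) to the support ideal $\mathfrak p$. A convenient way to organize it is to establish first the special cases, namely the Real Nullstellensatz (only the $h_\ell=0$ constraints) and the Positivstellensatz for a single inequality, both of which reduce fairly directly to the ordering/transfer scheme above, and then to assemble the general statement by the standard reductions: replacing the family $\{f_i\}$ by the finitely many products $\prod_{i\in J}f_i$, rewriting each $g_j\ne 0$ as $g_j^2>0$ and absorbing it through the localization, and carrying the equalities through $I$. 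The transfer step and the easy direction are routine by comparison, so essentially all of the difficulty is concentrated in this order-theoretic core.
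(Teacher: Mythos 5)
The paper does not prove this statement at all: it is imported verbatim from the cited reference (Parrilo's thesis) as a known form of Stengle's Positivstellensatz, so there is no in-paper argument to compare yours against. Your easy direction $(2)\Rightarrow(1)$ is complete and correct, including the observation that $g(x)^2>0$ covers the case $J=\varnothing$. Your hard direction $(1)\Rightarrow(2)$ is the standard proof strategy for the Positivstellensatz (maximal proper preordering via Zorn, prime cone / total ordering on the residue field, real closure, Tarski--Seidenberg transfer back to $\mathbb{R}$), and the reduction you describe — quotienting by $I$, localizing at the monoid generated by the $g_j$, and noting that a certificate exists iff $-\bar g^{\,2}$ lies in the image of the preordering for some $\bar g$ — is the right bookkeeping. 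The one caveat is that, as you acknowledge, the prime-cone lemma is only named rather than proved, so what you have is a correct and well-organized proof sketch rather than a self-contained proof; since that lemma is precisely where all the difficulty of the theorem lives, a referee grading this as an original proof would want it filled in, but as a reconstruction of the standard argument the outline is sound and consistent with the source the paper cites.
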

We restrict the reformulation to a special case that $m = 0$, and the presented certificate degenerates to the Positivstellens\"atz (P-satz), \ie, $-1 = f + h$. Searching for sums of square terms can be efficiently done when all the involved polynomials admit a degree bound. Given a fixed degree bound $k \in \mathbb{N}$, let $[x]_k$ denote the vector of monomials on $\mathbb{R}[x_1, \dots, x_n]$ with degree bound $k$, \ie, 
$$
[x]_k = \begin{pmatrix}
    1, & x_1, & \cdots, & x_n, & x_1^2, &x_1 x_2, & \cdots, & x_n^2, &\cdots, & x_n^{k}
\end{pmatrix}^T \in \mathbb{N}^{n}_k. 
$$
It can be easily shown that the entries of $[x]_k$ are linearly independent over $\mathbb{F} \in \{\mathbb{R}, \mathbb{C}\}$. Note that any refutation $(f, h)$ take the following form
$$
f = \sum_{ I \subseteq [r] } p_I \prod_{i \in I} f_i,~p_I \in SOS_{\mathbb{R}[x_1, \dots, x_n]}; \quad h = \sum_{\ell = 1}^s t_{\ell} h_{\ell} ,~ t_{\ell} \in \mathbb{R}[x_1, \dots, x_n].
$$
Denote $\prod_{i \in I} f_i$ as $f_I$. If we restrict the search to be done on $\mathbb{N}^{n}_{2k}$, by \cite{PabloPhD} each SOS term $p_I$ is associated with an positive semidefinite matrix $P_I$ by $p_I = [x]_{d_I}^T P_I [x]_{d_I}$. Here the degree $d_I$ is chosen such that $\deg (p_I f_I) \leq 2k$, or $d_I = \lfloor k - \frac{1}{2} \deg(f_I)\rfloor$. We consider the reformulation of $f$ on the monomial subspace $\mathbb{N}_{2k}^n$, we denote $x^{\alpha} = x_1^{\alpha_1} x_2^{\alpha_2} \cdots x_n^{\alpha_n}$ for $\alpha \in \mathbb{N}^{n}_{2k}$, it holds that
$$
\begin{aligned}
    \sum_{\alpha \in \mathbb{N}_{2k}^{n} } f_{\alpha} x^{\alpha} = f &= \sum_{I \subseteq [r]} p_I f_I = \sum_{I \subseteq [r]} \left( [x]_{d_I}^T P_I [x]_{d_I} \right) f_I = \sum_{I \subseteq [r]} \left< [x]_{d_I} [x]_{d_I}^T f_I, P_I \right> \\
    &= \sum_{I \subseteq [r]} \left< \sum_{\alpha \in \mathbb{N}_{2k}^n} x^{\alpha} F_{I, \alpha}, P_I  \right> = \sum_{\alpha \in \mathbb{N}_{2k}^n} \sum_{I \subseteq [r]} \left<  F_{I, \alpha}, P_I  \right> x^{\alpha}.
\end{aligned}
$$
By the linear independence among the monomials, $f_{\alpha} = \sum_{I \subseteq [r]} \left< F_{I, \alpha}, P_I \right>$. As for the ideal membership, let $\texttt{coefvec}_{d}: \mathbb{R}[x_1, \dots, x_n] \to \mathbb{N}^{n}_{d}$ gives the coefficient vector of a polynomial with degree bound $d$ under the basis $[x]_{d}$. For each generator $h_{\ell}$, let $H_{\ell}^{2k}$ be the transformation such that 
$$
\texttt{coefvec}_{2k}(t_{\ell} h_{\ell} ) = H_{\ell}^{2k} \cdot \texttt{coefvec}_{\deg(t_{\ell})}(t_{\ell}), \quad \deg(t_{\ell}) = 2k - \deg(h_{\ell}).
$$
Then the SDP with degree bound $2k$ is presented as follows:
\begin{equation} \label{eqn:CstarSDP}
\textnormal{SDP}_k: \quad 
    \begin{array}{cl}
        \min\limits_{x_{\ell}, P_I} & 0 \\
         & -\delta_{\alpha, (0, \dots, 0)} = \sum_{\ell = 1}^s \left( H_{\ell}^{(2k)} \cdot x_{\ell}  \right)_{\alpha} + \sum_{I \subseteq [r]}\left< F_{I, \alpha}, P_I \right>, \quad \forall \alpha \in \mathbb{N}_{2k}^n; \\
         &x_{\ell} \in \mathbb{R}^{\binom{n + 2k - \deg h_{\ell}}{2k - \deg h_{\ell} }}, \quad P_I \in \mathbb{S}_{+}^{ \binom{n + d_I}{d_I} }.
    \end{array}
\end{equation}

\par On noncommutative polynomial rings and algebraically closed fields, the analog of the assertion in Theorem \ref{thm:psatz} might be more complicated. We refer readers to \cite{NCPsatz} for more detailed discussion. However, the results stated in Theorem \ref{thm:NCPsatzSyncGame} can be encoded in the hierarchical SDP (\ref{eqn:CstarSDP}) discussed above, by setting $\{f_i\}_{i=1}^r = \{1\}$, along with some slight modifications: the monomial basis $[x]_k$ should be replaced by a noncommutative version $[x]_{k}^{nc}$ over $\mathbb{C}^*\left<x_1, \dots, x_n\right>$, defined by
$$
[x]_k^{nc} =  \begin{pmatrix}
        1, & x_1, & \cdots, & x_n, & x_1^*, & \cdots, & x_n^*, & x_1^2, & \cdots, &{x_n^*}^k
    \end{pmatrix}^T,
$$
whence simplification can be applied for the game algebra $\mathcal{A}(\mathcal{G})$ since $x_{i, a}^* = x_{i, a}$ for any generator $x_{i, a}$. Thus, $[x]_{k}^{nc}$ can be replaced by an \textbf{involution-free} monomial basis, and we denote the corresponding subspace as $\mathbb{N}_{k}^{n}(nc)$. The SOS term is encoded by $([x]_{d_I}^{nc})^* P_I [x]_{d_I}^{nc} = \left< P_I, ({[x]_{d_I}^{nc}}^*)^T ([x]_{d_I}^{nc})^T \right>$ subject to $P_I \in \mathbb{S}_{+}^{\dim \mathbb{N}_{k}^{n}(nc)}$.
Each generator $h_{\ell}^*$ of the right ideal $\mathcal{I}(\mathcal{G})^* = \{ \sum_{\ell} h_{\ell}^* s_{\ell}:~s_{\ell} \in \mathcal{A}(\mathcal{G}) \}$ is associated with matrix ${H_{\ell}^{2k}}^*$ and a decision variable $x_{\ell}^*$ such that 
$$
\texttt{coefvec}_{2k}( h_{\ell}^* s_{\ell} ) = {H_{\ell}^{2k}}^* \cdot \texttt{coefvec}_{\deg(s_{\ell})}(s_{\ell}): \quad \deg(s_{\ell}) = 2k - \deg(h_{\ell}^*),
$$
which is analogous to the left ideal case. Further simplification can be made by applying the Gr\"obner basis method to contract the dimension of decision variables, as shown in the example in section 8.3.2. in \cite{NCPsatz}. Analogous to the commutative case, searching for a refutation in $\mathbb{N}_{k}^{n}(nc)$ takes polynomial time with respect to the size of the input/output space of $\mathcal{G}$ and the degree bound $k$.
\par We implemented the computation using Mathematica package NCAlgebra, and Python toolkit cvxpy and MOSEK. Codes are available in the GitHub repository\footnote{ \url{https://github.com/HeEntong/BCS/tree/main/mathematica_codes}.} in program files \texttt{PerfectSynGame.nb} and \texttt{NCPsatz\_SDP\_solver.py}.


\begin{problem}
    In light of the above computation using the first assertion in Theorem \ref{thm:NCPsatzSyncGame}, can we encode the second assertion concerning the existence of $qc$-strategy into the hierarchical SDP?
\end{problem}

\section{Hereditary synchronous subalgebra, and its equivalence to the $C^*$ subalgebra}

\label{sec:heredEqualsCstar}

Recall that a $*$-algebra $\mathcal{A}$ endowed with a SOS cone $SOS_{\mathcal{A}}$ possesses a partial ordering $\leq$, where $a \leq b$ if $b - a \in SOS_{\mathcal{A}}$ for self-adjoint $a, b \in \mathcal{A}$. Given a synchronous game $\mathcal{G} = (I, O, \lambda)$, a vector subspace $V \subseteq \mathbb{C}[\mathcal{F}(|I|, |O|)]$ is \textbf{hereditary} provided that any $f, h \in \mathbb{C}[\mathcal{F}(|I|, |O|)]$ subject to $0 \leq f \leq h$ and $h \in V$ implies $f \in V$. By Helton \etal's construction in \cite{helton2019algebras}, introducing the $*$-positive cone
$$
\mathcal{A}(\mathcal{G})^+ = SOS_{\mathbb{C}[\mathcal{F}(|I|, |O|)]} / \mathcal{I}(\mathcal{G}),
$$
and generalize the ordering by $a \leq b$ if $b - a \in \mathcal{A}(\mathcal{G})^+$. This makes $\mathcal{A}(\mathcal{G})$ into a \textbf{semi-pre-$C^*$-algeba} \cite{Ozawa2013}, and $\mathcal{A}(\mathcal{G})$ satisfies the Archimedean property since it is the quotient of a group algebra. We present the definition of the hereditary $*$-algebra of synchronous game $\mathcal{G}$, the perfect $hered$-strategy, and a reformulation of the perfect $C^*$-strategy.
\begin{definition}
    Given a synchronous game $\mathcal{G} = (I, O, \lambda)$, let $\mathcal{I}(\mathcal{G})$ be the $*$-closed, two-sided ideal with generating relations specified by (\ref{eqn:NonlocalGameRelations}). \cite{helton2019algebras} defined two subspaces $\mathcal{I}^h(\mathcal{G})$, $\mathcal{I}^c(\mathcal{G})$ containing $\mathcal{I}(\mathcal{G})$, as 
    \begin{equation}
    \mathcal{I}^h(\mathcal{G}) = \bigcap_{\substack{\mathcal{I}(\mathcal{G}) \subseteq V 
    \\ V \text{ is a hereditary subspace} 
    }} V; \quad \mathcal{I}^c(\mathcal{G}) = \bigcap_{ 
    \substack{
    \mathcal{I}(\mathcal{G}) \subseteq \ker \pi \\
    \mathbb{C}[\mathcal{F}(|I|, |O|)] \overset{\pi}{\longrightarrow} \mathcal{B}(\mathcal{H})
    }
    } \ker \pi.
    \end{equation}
    It is straightforward to verify that these two subspaces are ideals, which allows the definition of two subalgebras of $\mathcal{A}(\mathcal{G})$: \textbf{hereditary subalgebra} $\mathcal{A}^h(\mathcal{G}) = \mathbb{C}[\mathcal{F}(|I|, |O|)] / \mathcal{I}^h(\mathcal{G}) $, \textbf{$\bm{C^*}$ subalgebra} $\mathcal{A}^c(\mathcal{G}) = \mathbb{C}[\mathcal{F}(|I|, |O|)] / \mathcal{I}^c(\mathcal{G})$. We say that game $\mathcal{G}$ has a perfect $hered$-strategy if $\mathcal{A}^h(\mathcal{G}) \neq (0)$.
\end{definition}

\begin{remark}
\label{remark:CstarAndHered}
    Note that $\mathcal{G}$ has a perfect $C^*$-strategy if and only if $\mathcal{A}^c(\mathcal{G}) \neq (0)$. To explain why this is so, recall that the $C^*$-model is induced by a tuple $(\pi, \psi)$ where $\pi: \mathbb{C}[\mathcal{F}(|I|, |O|)] \to \mathcal{B}(\mathcal{H})$ is a unital $*$-homomorphism and $\psi \in \mathcal{H}$ is a state, by $\pp(a, b | i, j) = \psi^* \pi(x_{i, a}) \pi(x_{j, b}) \psi$. If $\mathcal{A}^c(\mathcal{G}) = (0)$, then by $\mathcal{A}^c(\mathcal{G}) \cong \pi\left(\mathbb{C}[\mathcal{F}(|I|, |O|)]\right)$, the only feasible unital $*$-homomorphism $\pi$ vanishing on $\mathcal{I}(\mathcal{G})$ would vanish on the whole free algebra. Thus either $\psi = 0$ or the underlying Hilbert space $\mathcal{H}$ admits dimension zero, then no perfect $C^*$-strategy exists. 
    For the converse, $\mathcal{A}^c(\mathcal{G}) \neq (0)$ indicates the existence of a unital $*$-homomorphism $\pi$ and a non-trivial state $\psi$ vanishing on $\mathcal{I}(\mathcal{G})$.
    \par Moreover, since $\ker \pi$ is itself a hereditary subspace, by the minimality of $\mathcal{I}^h(\mathcal{G})$, it follows that $\mathcal{I}^h(\mathcal{G}) \subseteq \mathcal{I}^c(\mathcal{G})$, or $\mathcal{A}^c(\mathcal{G}) \subseteq \mathcal{A}^h(\mathcal{G})$. We call $\mathcal{I}^h(\mathcal{G})$ the \textbf{hereditary closure} of $\mathcal{I}(\mathcal{G})$. Introducing the positive cone $\mathcal{A}^h(\mathcal{G})^+  = SOS_{\mathbb{C}[\mathcal{F}(|I|, |O|)]} / \mathcal{I}^h(\mathcal{G}) $ yields that $\mathcal{A}^h(\mathcal{G})$ is also a semi-pre-$C^*$-algebra.
\end{remark}

The hereditary closure of the $*$-closed ideal $\mathcal{I}(\mathcal{G})$ when $\mathcal{G}$ is a synchronous game is also $*$-closed. Thus, the $*$-closed assumption of $\mathcal{I}^h(\mathcal{G})$ is redundant, as stated in the definitions of \cite{helton2019algebras, harris2023universalitygraphhomomorphismgames}.

\begin{theorem}
    For a synchronous game $\mathcal{G}$, $\mathcal{I}^h(\mathcal{G})^* = \mathcal{I}^h(\mathcal{G})$.
\end{theorem}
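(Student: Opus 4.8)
The plan is to exhibit the hereditary closure as an intersection over a family of subspaces that is stable under the involution $V \mapsto V^*$, and then use that $*$ commutes with intersections. Write $\mathcal{R} = \mathbb{C}[\mathcal{F}(|I|,|O|)]$ for the group algebra and let $\mathcal{V}$ denote the family of hereditary subspaces of $\mathcal{R}$ that contain $\mathcal{I}(\mathcal{G})$, so that by definition $\mathcal{I}^h(\mathcal{G}) = \bigcap_{V \in \mathcal{V}} V$. Since $x \in (\bigcap_{V} V)^*$ iff $x^* \in V$ for every $V \in \mathcal{V}$ iff $x \in V^*$ for every $V \in \mathcal{V}$, we get $\mathcal{I}^h(\mathcal{G})^* = \bigcap_{V \in \mathcal{V}} V^*$. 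Hence it suffices to show that $V \mapsto V^*$ carries $\mathcal{V}$ bijectively onto itself, for then $\bigcap_{V \in \mathcal{V}} V^* = \bigcap_{V \in \mathcal{V}} V = \mathcal{I}^h(\mathcal{G})$.

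The one observation that makes everything go through is that any two elements comparable under $\leq$ are self-adjoint: if $0 \leq f \leq h$ in $\mathcal{R}$ then $f$ and $h-f$ both lie in $SOS_{\mathcal{R}}$, and every element $\sum_i a_i^* a_i$ of that cone is hermitian, so $f^* = f$ and $h^* = h$. Granting this, I would verify that $V \in \mathcal{V}$ implies $V^* \in \mathcal{V}$: the set $V^*$ is a $\mathbb{C}$-subspace because the involution is conjugate-linear and bijective; it contains $\mathcal{I}(\mathcal{G})$ because the defining ideal is $*$-closed, so $\mathcal{I}(\mathcal{G}) = \mathcal{I}(\mathcal{G})^* \subseteq V^*$; and it is hereditary because whenever $0 \le f \le h$ with $h \in V^*$, self-adjointness gives $h = h^* \in V$, heredity of $V$ gives $f \in V$, and then $f = f^* \in V$ yields $f \in V^*$. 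Since $(V^*)^* = V$, the assignment $V \mapsto V^*$ is an involution on $\mathcal{V}$, in particular a bijection, which closes the argument.

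I do not expect a genuine obstacle here: the whole proof is bookkeeping once the ``comparable implies hermitian'' observation is isolated, and that observation is immediate from the definition of the SOS cone. The only point deserving care is to check that the involution preserves \emph{membership} in $\mathcal{V}$ — both heredity of $V^*$ and the inclusion $\mathcal{I}(\mathcal{G}) \subseteq V^*$ — and not merely the vector-space structure; this is exactly where $*$-closedness of $\mathcal{I}(\mathcal{G})$ (which holds for any game by construction, so synchronicity is not really needed) and self-adjointness of SOS elements are used. An equivalent route would be to note that in the definition of ``hereditary'' nothing is lost by quantifying only over self-adjoint $f,h$, after which invariance of the defining condition under $*$ is visible by inspection.
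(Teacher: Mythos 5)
Your proof is correct, and it takes a genuinely different---and more robust---route than the paper's. The paper argues by attempting to describe the hereditary closure explicitly: it claims that $\mathcal{I}^h(\mathcal{G})$ differs from $\mathcal{I}(\mathcal{G})$ only by hermitian SOS elements (pairs $f, g \in SOS_{\mathbb{C}[\mathcal{F}(|I|,|O|)]} \setminus \mathcal{I}(\mathcal{G})$ with $f+g \in \mathcal{I}(\mathcal{G})$), so that the non-SOS parts of $\mathcal{I}^h(\mathcal{G})$ and $\mathcal{I}(\mathcal{G})$ coincide, and then takes adjoints piecewise, using that SOS elements are self-adjoint and that $\mathcal{I}(\mathcal{G})$ is $*$-closed. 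You never touch the internal structure of the closure: you isolate the observation that elements comparable in the SOS order are automatically hermitian, deduce that $V \mapsto V^*$ is an involution on the family $\mathcal{V}$ of hereditary subspaces containing $\mathcal{I}(\mathcal{G})$, and conclude because the involution commutes with the intersection defining $\mathcal{I}^h(\mathcal{G})$. Your version is preferable: the paper's structural claim that the closure adds only SOS terms is asserted rather than proved (and sits uneasily with $\mathcal{I}^h(\mathcal{G})$ being a subspace, hence closed under differences of SOS elements, which need not be SOS), whereas every step of your argument follows directly from the definitions. Your parenthetical that synchronicity plays no role---only $*$-closedness of the defining ideal is used---is also accurate.
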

\begin{proof}
    If suffices to show $\mathcal{I}(\mathcal{G}) \setminus SOS_{\mathbb{C}[\mathcal{F}(|I|, |O|)]} = \mathcal{I}^h(\mathcal{G}) \setminus SOS_{\mathbb{C}[\mathcal{F}(|I|, |O|)]}$. Note that for any hereditary subspace $V \subseteq \mathbb{C}[\mathcal{F}(|I|, |O|)]$, there is no $f, g \in V \cap SOS_{\mathbb{C}[\mathcal{F}(|I|, |O|)]}$ and $f + g \not \in V$. Thus, $\mathcal{I}^h(\mathcal{G})$ expands $\mathcal{I}(\mathcal{G})$ by including all hermitian SOS terms $f, g \in SOS_{\mathbb{C}[\mathcal{F}(|I|, |O|)]} \setminus \mathcal{I}(\mathcal{G})$ subject to $f + g \in \mathcal{I}(\mathcal{G})$. Moreover, any non-SOS term $a \in \left[ \mathcal{I}^h(\mathcal{G}) \setminus \mathcal{I}(\mathcal{G}) \right] \setminus SOS_{\mathbb{C}[\mathcal{F}(|I|, |O|)]}$ will violate the minimality of the hereditary closure, and is thus safe to be excluded from the construction. By the above assertions,
    \begin{equation}
    \begin{aligned}
    \mathcal{I}^h(\mathcal{G})^* 
    &= \left(\mathcal{I}^h(\mathcal{G}) \cap SOS_{\mathbb{C}[\mathcal{F}(|I|, |O|)]} \right)^* \sqcup \left(\mathcal{I}^h(\mathcal{G}) \setminus SOS_{\mathbb{C}[\mathcal{F}(|I|, |O|)]}\right)^* \\
    &= \left(\mathcal{I}^h(\mathcal{G}) \cap SOS_{\mathbb{C}[\mathcal{F}(|I|, |O|)]} \right)^* \sqcup \left(\mathcal{I}(\mathcal{G}) \setminus SOS_{\mathbb{C}[\mathcal{F}(|I|, |O|)]}\right)^* \\
    &= \left(\mathcal{I}^h(\mathcal{G}) \cap SOS_{\mathbb{C}[\mathcal{F}(|I|, |O|)]}\right) \sqcup \left(\mathcal{I}(\mathcal{G}) \setminus SOS_{\mathbb{C}[\mathcal{F}(|I|, |O|)]}\right) \\
    &= \left(\mathcal{I}^h(\mathcal{G}) \cap SOS_{\mathbb{C}[\mathcal{F}(|I|, |O|)]}\right) \sqcup \left(\mathcal{I}^h(\mathcal{G}) \setminus SOS_{\mathbb{C}[\mathcal{F}(|I|, |O|)]}\right) = \mathcal{I}^h(\mathcal{G}).
    \end{aligned}
    \end{equation}
    We conclude that $\mathcal{I}^h(\mathcal{G})$ is automatically a two-sided $*$-closed ideal.
\end{proof}

With all previous preliminaries, we are ready to show that the hereditary subalgebra of a synchronous game is indeed equivalent to the $C^*$ subalgebra. This assertion directly answers Problem 3.14 in \cite{helton2019algebras}, and moreover, it validates the inequality contested by Problem 3.16 in the same literature.

\begin{theorem}
    \label{thm:KakutaniSep} \cite{barvinok2002course}
    \textnormal{(\textbf{Eidelheri-Kakutani separation theorem})}. Let $V$ be a vector space and $A, B \subseteq V$ are disjoint non-empty convex subsets. Suppose $B$ admits a non-empty algebraic interior\footnote{The algebraic interior of $A$ is a subset $\textnormal{aint}_{V}(A) \subseteq A$ defined by $\textnormal{aint}_{V}(A) = \{a \in A:~\forall v \in V, \exists \delta_{v} > 0, \forall \delta \in [0, \delta_v], a + \delta v \in A \}$.}, then there exists a non-zero linear functional $\varphi: V \to \mathbb{R}$ such that 
    $$
    \forall a \in A, \quad \varphi(a) \leq \inf_{b \in B} \varphi(b).
    $$
\end{theorem}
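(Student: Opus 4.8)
The statement is the purely algebraic (topology-free) form of the geometric Hahn--Banach theorem, and the plan is to derive it from the analytic Hahn--Banach extension theorem by way of a Minkowski gauge. First I would reduce separation of $A$ and $B$ to separation of a single convex set from the origin, passing to the difference set $C := A - B = \{a - b : a \in A,\ b \in B\}$. It is convex, and $A \cap B = \varnothing$ gives $0 \notin C$. Moreover $C$ inherits a nonempty algebraic interior: if $b_0 \in \textnormal{aint}_V(B)$ and $a_0 \in A$, then for any $v \in V$ the hypothesis applied in the direction $-v$ supplies $\delta_v > 0$ with $b_0 - \delta v \in B$ for all $\delta \in [0, \delta_v]$, so that $a_0 - b_0 + \delta v = a_0 - (b_0 - \delta v) \in C$; hence $a_0 - b_0 \in \textnormal{aint}_V(C)$. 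Thus it suffices to produce a nonzero linear $\varphi \colon V \to \mathbb{R}$ with $\varphi \le 0$ on $C$, because then $\varphi(a) \le \varphi(b)$ for all $a \in A$, $b \in B$, which is exactly $\varphi(a) \le \inf_{b \in B}\varphi(b)$.

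Second, I would center $C$ at an algebraic interior point. Fix $c_0 \in \textnormal{aint}_V(C)$ and put $D := C - c_0$, so $0 \in \textnormal{aint}_V(D)$ (equivalently $D$ is absorbing) while $-c_0 \notin D$. Define the gauge $p(x) := \inf\{t > 0 : t^{-1}x \in D\}$. Absorbance makes $p$ finite on all of $V$; convexity of $D$ together with $0 \in D$ makes $p$ positively homogeneous and subadditive, hence sublinear, and also yields the implication $p(x) < 1 \Rightarrow x \in D$, so $-c_0 \notin D$ forces $p(-c_0) \ge 1$. On the line $\mathbb{R}c_0$ define $\varphi_0(\lambda c_0) := -\lambda$; checking the cases $\lambda \ge 0$ and $\lambda < 0$ separately shows $\varphi_0 \le p$ there.

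Third, I would invoke the analytic Hahn--Banach theorem (valid over an arbitrary real vector space, via Zorn's lemma) to extend $\varphi_0$ to a linear functional $\varphi$ on $V$ with $\varphi \le p$ everywhere; then $\varphi(c_0) = -1 \ne 0$, so $\varphi$ is nonzero. For $y \in C$ one has $y - c_0 \in D$, so $\varphi(y) - \varphi(c_0) = \varphi(y - c_0) \le p(y - c_0) \le 1$, and since $\varphi(c_0) = -1$ this gives $\varphi(y) \le 0$; unwinding $C = A - B$ as in the first step then finishes the argument. I expect the only real subtlety to be discipline rather than difficulty: every step — finiteness and sublinearity of $p$, the implication $p(x) < 1 \Rightarrow x \in D$, and the behaviour of $\textnormal{aint}_V$ under the operations used — must be argued purely algebraically, using only the absorbing/algebraic-interior hypothesis and never a norm or topology, and the extension step genuinely requires the transfinite form of Hahn--Banach, not its finite-dimensional special case.
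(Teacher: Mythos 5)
Your argument is correct: the reduction to the difference set $C=A-B$, the verification that $a_0-b_0$ lies in the algebraic interior of $C$, the Minkowski gauge of $D=C-c_0$, and the Hahn--Banach extension dominated by that gauge all go through exactly as you describe, and the only care needed is indeed to keep everything algebraic (absorbing sets rather than open sets). Note that the paper offers no proof of this theorem --- it is quoted from Barvinok's \emph{A Course in Convexity} --- so there is nothing internal to compare against; your proof is the standard gauge-plus-Hahn--Banach argument one finds in that reference.
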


\begin{lemma}
\label{lemma:ConvexCones}
    Both $\mathcal{A}^h(\mathcal{G})^+$ and $SOS_{\mathbb{C}[\mathcal{F}(|I|, |O|)]} \setminus \mathcal{I}^h(\mathcal{G})$ are convex cones.
\end{lemma}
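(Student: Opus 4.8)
The plan is to handle the two cones separately; each is an easy consequence of the structure already in place. For $\mathcal{A}^h(\mathcal{G})^+ = SOS_{\mathbb{C}[\mathcal{F}(|I|, |O|)]} / \mathcal{I}^h(\mathcal{G})$, I would first record the elementary fact that $SOS_{\mathcal{B}}$ is a convex cone in any $*$-algebra $\mathcal{B}$: it is closed under addition, since a sum of two sums of hermitian squares is again one, and it is closed under multiplication by $\lambda \geq 0$, since $\lambda \sum_i a_i^* a_i = \sum_i (\sqrt{\lambda}\, a_i)^*(\sqrt{\lambda}\, a_i)$. The quotient map $q \colon \mathbb{C}[\mathcal{F}(|I|, |O|)] \to \mathcal{A}^h(\mathcal{G})$ is $\mathbb{C}$-linear (indeed a $*$-homomorphism), and by definition $\mathcal{A}^h(\mathcal{G})^+ = q\bigl(SOS_{\mathbb{C}[\mathcal{F}(|I|, |O|)]}\bigr)$, i.e.\ the linear image of a convex cone; hence it is a convex cone.

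For the second set, write $\mathcal{S} := SOS_{\mathbb{C}[\mathcal{F}(|I|, |O|)]}$. Closure under scaling by $\lambda > 0$ is immediate: if $f \in \mathcal{S} \setminus \mathcal{I}^h(\mathcal{G})$ then $\lambda f \in \mathcal{S}$, and $\lambda f \notin \mathcal{I}^h(\mathcal{G})$, for otherwise $f = \lambda^{-1}(\lambda f) \in \mathcal{I}^h(\mathcal{G})$ since $\mathcal{I}^h(\mathcal{G})$ is a linear subspace, a contradiction. For closure under addition, take $f, g \in \mathcal{S} \setminus \mathcal{I}^h(\mathcal{G})$; then $f + g \in \mathcal{S}$, and I claim $f + g \notin \mathcal{I}^h(\mathcal{G})$. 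In the ordering induced by $\mathcal{S}$ we have $0 \leq f$ (because $f \in \mathcal{S}$) and $f \leq f + g$ (because $(f+g) - f = g \in \mathcal{S}$); since $\mathcal{I}^h(\mathcal{G})$ is an intersection of hereditary subspaces, it is itself hereditary, so $f + g \in \mathcal{I}^h(\mathcal{G})$ together with $0 \leq f \leq f + g$ would force $f \in \mathcal{I}^h(\mathcal{G})$, contradicting the choice of $f$. Thus $\mathcal{S} \setminus \mathcal{I}^h(\mathcal{G})$ is closed under addition and positive scaling, hence a convex cone.

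There is no substantial obstacle; the only points requiring care are (i) invoking heredity in the ``downward'' direction (from $f+g$ to $f$) rather than attempting to argue directly about sums, and (ii) using that $\mathcal{I}^h(\mathcal{G})$ is a genuine subspace, not merely a cone, to rule out positive scalar multiples landing in it. For the intended application to the Eidelheri--Kakutani separation in Theorem \ref{thm:KakutaniSep} only convexity of these sets is needed; one may additionally note that $0 \in \mathcal{I}^h(\mathcal{G})$, so $\mathcal{S} \setminus \mathcal{I}^h(\mathcal{G})$ automatically excludes the origin, which is the feature that makes it disjoint from the other convex set in the separation argument.
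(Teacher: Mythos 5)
Your proposal is correct and follows essentially the same route as the paper: the first cone is handled as the (linear) image of the SOS cone under the quotient, and the second by assuming $f+g \in \mathcal{I}^h(\mathcal{G})$, noting $0 \le f \le f+g$, and invoking heredity of $\mathcal{I}^h(\mathcal{G})$ (as an intersection of hereditary subspaces) to force $f \in \mathcal{I}^h(\mathcal{G})$, a contradiction. The paper's version is just terser; your added details (linearity of the quotient for the first cone, subspace structure for positive scaling) are exactly the routine points it leaves implicit.
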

\begin{proof}
    The first assertion follows by definition. As for the second assertion, consider $f, g \in SOS_{\mathbb{C}[\mathcal{F}(|I|, |O|)]} \setminus \mathcal{I}^h(\mathcal{G})$, assume that $f + g \in \mathcal{I}^h(\mathcal{G})$. By the property of the hereditary closure, since $0 \leq f, g \leq f + g$, it follows that $f, g \in \mathcal{I}^h(\mathcal{G})$, a contradiction. This implies $f + g \in SOS_{\mathbb{C}[\mathcal{F}(|I|, |O|)]} \setminus \mathcal{I}^h(\mathcal{G})$. It is not hard to show that $SOS_{\mathbb{C}[\mathcal{F}(|I|, |O|)]} \setminus \mathcal{I}^h(\mathcal{G})$ is closed under scalar multiplication in $\mathbb{R}_{>0}$, thus composing above two statements indicates that $SOS_{\mathbb{C}[\mathcal{F}(|I|, |O|)]} \setminus \mathcal{I}^h(\mathcal{G})$ is a (positive) convex cone.
\end{proof}

Note that Theorem \ref{thm:KakutaniSep} is a generalization of the Hahn-Banach separation theorem to an arbitrary vector space. With Lemma \ref{lemma:ConvexCones} we are ready to prove the main result in this section. The idea was proposed by Connor Paddock \cite{ConnorPersonal}, we give a formal proof here.

\begin{theorem}\label{thm:CstarEqualsHered}
    For any synchronous game $\mathcal{G}$, $\mathcal{A}^h(\mathcal{G}) = \mathcal{A}^c(\mathcal{G})$. 
\end{theorem}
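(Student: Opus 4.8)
The plan is to prove $\mathcal{I}^h(\mathcal{G})=\mathcal{I}^c(\mathcal{G})$; write $B:=\mathbb{C}[\mathcal{F}(|I|,|O|)]$. One inclusion is already recorded in Remark~\ref{remark:CstarAndHered}: each kernel $\ker\pi$ of a $*$-representation is a hereditary subspace of $B$ containing $\mathcal{I}(\mathcal{G})$, hence contains its hereditary closure, so $\mathcal{I}^h(\mathcal{G})\subseteq\mathcal{I}^c(\mathcal{G})$. The content is the reverse inclusion, equivalently that the surjection $\mathcal{A}^h(\mathcal{G})\to\mathcal{A}^c(\mathcal{G})$ is injective, equivalently that the universal Hilbert-space $*$-representation of $\mathcal{A}^h(\mathcal{G})$ is faithful. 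By GNS theory this amounts to two facts about the Archimedean semi-pre-$C^*$-algebra $\mathcal{A}^h(\mathcal{G})$: that it has no nonzero ``positive infinitesimal'' ($\bar p\ge 0$ with $\psi(\bar p)=0$ for every state $\psi$), and that $\bar a^*\bar a=0$ forces $\bar a=0$; both follow once one knows that $\mathcal{I}^h(\mathcal{G})$ is Archimedean closed. Since $\psi(\bar p)=0$ for all states is the same as $\bar p\le\varepsilon\cdot 1$ for all $\varepsilon>0$, and since $(1\pm t\bar b)^2\ge 0$ forces $-\tfrac{1}{2t}\cdot 1\le\bar b\le\tfrac{1}{2t}\cdot 1$ whenever $\bar b$ is hermitian with $\bar b^2=0$, the whole statement reduces to the following: \emph{for every $h\in SOS_{B}\setminus\mathcal{I}^h(\mathcal{G})$ there is a state $\psi$ on $B$ vanishing on $\mathcal{I}(\mathcal{G})$ with $\psi(h)>0$}, whose GNS representation then descends to $\mathcal{A}^h(\mathcal{G})$ without killing the class of $h$.

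The ``weak'' half of this --- producing such a state with $\psi(h)\ge 0$ --- is where I would use Lemma~\ref{lemma:ConvexCones} and the Eidelheit--Kakutani separation theorem (Theorem~\ref{thm:KakutaniSep}). The case $\mathcal{A}^h(\mathcal{G})=(0)$ being trivial (then $\mathcal{A}^c(\mathcal{G})=(0)$ too), assume $1\notin\mathcal{I}^h(\mathcal{G})$; then, $B$ being Archimedean, $1$ lies in the algebraic interior of the convex cone $C:=SOS_{B}\setminus\mathcal{I}^h(\mathcal{G})$. Given $h\in C$, the hereditary property makes $C$ disjoint from the affine subspace $A:=-h+\bigl(\mathcal{I}^h(\mathcal{G})\cap\Herm(B)\bigr)$: if $-h+j\in C$ with $j\in\mathcal{I}^h(\mathcal{G})$, then $j=h+(-h+j)\in\mathcal{I}^h(\mathcal{G})$ while $0\le h\le h+(-h+j)$, forcing $h\in\mathcal{I}^h(\mathcal{G})$, a contradiction. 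Separating $A$ from $C$ produces a nonzero real functional $\varphi$ on $\Herm(B)$ with $\sup_A\varphi\le\inf_C\varphi$; since $\mathcal{I}^h(\mathcal{G})\cap\Herm(B)$ is a subspace $\varphi$ kills it, so $\sup_A\varphi=-\varphi(h)$; since $C$ is a cone, $\varphi\ge 0$ on $C$ and $\inf_C\varphi=0$, hence $\varphi\ge 0$ on all of $SOS_{B}$; and since $\varphi\ne 0$ while every hermitian element of $B$ is a difference of two elements of $SOS_{B}$, one gets $\varphi(1)>0$. Normalizing yields a state $\psi$ on $B$ vanishing on $\mathcal{I}^h(\mathcal{G})\supseteq\mathcal{I}(\mathcal{G})$ with $\psi(h)\ge 0$.

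The genuine difficulty --- and what I expect to be the main obstacle --- is the upgrade from $\psi(h)\ge 0$ to $\psi(h)>0$ for some such state, i.e. showing $\mathcal{I}^h(\mathcal{G})$ is Archimedean closed, so that no nonzero element of $SOS_{B}$ represents a positive infinitesimal of $\mathcal{A}^h(\mathcal{G})$. The intended route is to run Eidelheit--Kakutani a second time, now inside $\mathcal{A}^h(\mathcal{G})$ itself (again Archimedean with order unit $1$), separating $\varepsilon\cdot 1-\bar p$ from the cone $\mathcal{A}^h(\mathcal{G})^+$ for a putative infinitesimal $\bar p$: this manufactures a state with $\psi(\bar p)\ge\varepsilon>0$ unless $\varepsilon\cdot 1-\bar p\in\mathcal{A}^h(\mathcal{G})^+$ for \emph{every} $\varepsilon$, so that the only obstruction is precisely that $\bar p$ be infinitesimal. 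Closing this last gap is where the concrete features of the game ideal must enter --- the idempotency relations $x_{i,a}^2=x_{i,a}$, the relations $\sum_a x_{i,a}=1$, finiteness of the generating set, the already-established fact that $\mathcal{I}^h(\mathcal{G})$ agrees with $\mathcal{I}(\mathcal{G})$ off $SOS_{B}$, and faithfulness of the canonical trace on the group algebra $B$ --- and I would organize the argument around lifting $\bar p$ to a representative in $SOS_{B}$ and using the hereditariness of $\mathcal{I}^h(\mathcal{G})$ to force that representative into $\mathcal{I}^h(\mathcal{G})$, contradicting $\bar p\ne 0$.
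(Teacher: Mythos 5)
Your first two paragraphs follow essentially the same route as the paper: the paper also invokes Lemma~\ref{lemma:ConvexCones} and Theorem~\ref{thm:KakutaniSep} to separate $\mathcal{I}^h(\mathcal{G})$ from the cone $SOS_{\mathbb{C}[\mathcal{F}(|I|,|O|)]}\setminus\mathcal{I}^h(\mathcal{G})$ (using the Archimedean property to place $1$ in the algebraic interior), normalizes the resulting functional to a state vanishing on $\mathcal{I}^h(\mathcal{G})$, extends it to the whole $*$-algebra via the hermitian/anti-hermitian decomposition, and runs the GNS construction to obtain a bounded $*$-representation $\pi$ with $\mathcal{I}(\mathcal{G})\subseteq\mathcal{I}^h(\mathcal{G})\subseteq\ker\pi$ and non-trivial cyclic vector (since $\varphi(1)=1$ keeps $1$ out of the null space $\mathcal{N}$). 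At that point the paper concludes by appealing to the ``maximality'' of $\mathcal{A}^c(\mathcal{G})$, i.e.\ it uses only the existence of \emph{one} non-trivial representation killing $\mathcal{I}^h(\mathcal{G})$; this is exactly what is needed for the application $\mathcal{M}_{hered}=\mathcal{M}_{C^*}$ ($\mathcal{A}^h(\mathcal{G})\neq(0)\Rightarrow\mathcal{A}^c(\mathcal{G})\neq(0)$, combined with Remark~\ref{remark:CstarAndHered}).

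The gap you flag in your third paragraph is therefore real but also unresolved in your write-up: to get the literal ideal equality $\mathcal{I}^c(\mathcal{G})\subseteq\mathcal{I}^h(\mathcal{G})$ one needs, as you correctly reduce it, a \emph{separating family} of such representations --- for each $h\in SOS_{\mathbb{C}[\mathcal{F}(|I|,|O|)]}\setminus\mathcal{I}^h(\mathcal{G})$ a state $\psi$ vanishing on $\mathcal{I}(\mathcal{G})$ with $\psi(h)>0$, not merely $\psi(h)\ge 0$ --- and your proposal stops at a plan for this step (``I would organize the argument around\dots'') without carrying it out. Since Eidelheit--Kakutani only ever delivers non-strict inequalities on the boundary, the passage from $\psi(h)\ge 0$ to $\psi(h)>0$ genuinely requires an additional input (Archimedean closedness of $\mathcal{I}^h(\mathcal{G})$, or equivalently absence of positive infinitesimals in $\mathcal{A}^h(\mathcal{G})$), and nothing in your sketch establishes it. As submitted, your argument proves the implication $\mathcal{A}^h(\mathcal{G})\neq(0)\Rightarrow\mathcal{A}^c(\mathcal{G})\neq(0)$ --- matching what the paper's proof actually delivers --- but not the full statement you set out to prove; you should either close the infinitesimal step or weaken your target to the non-triviality equivalence.
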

\begin{proof}
    By Ozawa's remark in \cite{Ozawa2013}, the Archimedean property (or Combes axiom in his language) indicates the identity $1$ is an algebraic interior point of the $*$-positive cone $SOS_{\mathbb{C}[\mathcal{F}(|I|, |O|)]}$, and further of $SOS_{\mathbb{C}[\mathcal{F}(|I|, |O|)]} \setminus \mathcal{I}^h(\mathcal{G})$ when $\mathcal{A}^h(\mathcal{G})$ is non-trivial. By Theorem \ref{thm:KakutaniSep} and convexity condition in Lemma \ref{lemma:ConvexCones}, there exists a linear functional $\varphi: \Herm(\mathbb{C}[\mathcal{F}(|I|, |O|)]) \to \mathbb{C}$, satisfying 
    $$
    \varphi(\mathcal{I}^h(\mathcal{G})) \subseteq \mathbb{R}_{\leq 0}, \quad \varphi(SOS_{\mathbb{C}[\mathcal{F}(|I|, |O|)]} \setminus \mathcal{I}^h(\mathcal{G})) \subseteq \mathbb{R}_{\geq 0}.
    $$
    Without loss of generality, we choose $\varphi$ such that $\varphi(1) = 1$, which is consistent since $1 \in \varphi(SOS_{\mathbb{C}[\mathcal{F}(|I|, |O|)]} \setminus \mathcal{I}^h(\mathcal{G}))$. The image of $\varphi$ should vanish on $\mathcal{I}^h(\mathcal{G})$ since $\varphi$ is linear and $\mathcal{I}^h(\mathcal{G})$ is a $*$-closed subspace, thus $\varphi(\mathcal{I}^h(\mathcal{G})) = \{0\}$. As for hermitian element $h$, by $\Herm(\mathbb{C}[\mathcal{F}(|I|, |O|)]) = \{s - s':~s, s' \in SOS_{\mathbb{C}[\mathcal{F}(|I|, |O|)]} \}$ \cite{Ozawa2013}, we conclude that $\varphi(h) = \varphi(s) - \varphi(s')$ for some hermitian SOS terms $s, s'$. Thus, $\varphi(\Herm(\mathbb{C}[\mathcal{F}(|I|, |O|)])) \subseteq \mathbb{R}$. It is natural to extend $\varphi$ to the whole $*$-algebra, by defining the image of $a \in \mathbb{C}[\mathcal{F}(|I|, |O|)]$ as 
    $$
    \varphi(a) := \varphi\left(\frac{a + a^*}{2}\right) + i \varphi\left(\frac{a - a^*}{2i}\right).
    $$
    This shows that $\varphi(a^*) = \varphi(a)^*$ on $\mathbb{C}[\mathcal{F}(|I|, |O|)]$.
    \par To apply a variant of the Gelfand-Naimark-Segal (GNS) construction \cite{Arveson1976} on the $*$-algebra $\mathcal{A}^h(\mathcal{G}) = \mathbb{C}[\mathcal{F}(|I|, |O|)] / \mathcal{I}^h(\mathcal{G})$, we set $\varphi(a + \mathcal{I}^h(\mathcal{G})) := \varphi(a)$ for $a \in \mathbb{C}[\mathcal{F}(|I|, |O|)]$. Consequently, $\varphi$ separates $\{0\}$ and $\mathcal{A}^h(\mathcal{G})^+$. The definition is well-defined by the above discussion. We start by defining the sesquilinear form $[a, b] = \varphi(a^* b)$ for $a, b \in \mathcal{A}^h(\mathcal{G})$. The subset $\mathcal{N} = \{\ell \in \mathcal{A}^h(\mathcal{G}):~\varphi(\ell^* \ell) = 0\}$ is a linear subspace. Moreover, by the Cauchy-Schwarz inequality, for any $a \in \mathcal{A}^h(\mathcal{G}), \ell \in \mathcal{N}$, due to inequality
    $$
    0 \leq |\varphi((a\ell)^* a\ell)|^2 \leq \varphi(\ell^* \ell) \varphi((a^* a\ell)^* a^*a\ell) = 0,
    $$
    we have $a \mathcal{N} = \mathcal{N}$, that is, $\mathcal{N}$ is a left ideal. As $1 \not \in \mathcal{N}$, we can define the non-trivial quotient $\widetilde{\mathcal{H}} = \mathcal{A}^h(\mathcal{G}) / \mathcal{N}$ equipped with the inner product $\left<\cdot, \cdot\right>$, where
    $$
    \forall a, b \in \mathcal{A}^h(\mathcal{G}), \quad \left<a + \mathcal{N}, b + \mathcal{N}\right> := [a, b] = \varphi(a^*b),
    $$
    we obtain a pre-Hilbert space $\widetilde{\mathcal{H}}$, which can be routinely completed into a Hilbert space $\mathcal{H}$ \cite{Arveson1976}. A $*$-representation can be induced on $\mathcal{H}$ by setting $\pi(a)(x + \mathcal{N}) = ax + \mathcal{N}$ for any $a \in \mathcal{A}^h(\mathcal{G}), x + \mathcal{N} \in \mathcal{A}^h(\mathcal{G}) / \mathcal{N}$. The image $\pi(\mathcal{A}^h(\mathcal{G}) / \mathcal{N})$ is indeed bounded, since for any $a \in \mathcal{A}^h(\mathcal{G})$, by the Archimedean property, there exists $\lambda_a \in \mathbb{R}_{>0}$ and $s_a \in \mathcal{A}^h(\mathcal{G})^+$, such that $\lambda_a - a^* a = s_a$, then
    $$
    \begin{aligned}
    \left< \pi(a)(x + \mathcal{N}), \pi(a)(x + \mathcal{N}) \right> &= \left<ax + \mathcal{N}, ax + \mathcal{N}\right> = \varphi(x^* a^* a x) \\
    &= \varphi(x^*(\lambda_a - s_a) x) \leq \lambda_a \cdot \varphi(x^* x).
    \end{aligned}
    $$
    Again by the Archimedean property, $\varphi(x^* x)$ is bounded, indicating that $ = \|\pi(a)\| < \infty$. For any $a \in \mathcal{I}^h(\mathcal{G})$, $\left<\pi(a)(x + \mathcal{N}), \pi(a)(x + \mathcal{N})\right>$ evaluates to $\varphi((ax)^* (ax)) = 0$, by $ax \in \mathcal{I}^h(\mathcal{G})$ for any $x \in \mathcal{A}^h(\mathcal{G})$ and $\mathcal{I}^h(\mathcal{G})^* \mathcal{I}^h(\mathcal{G}) \subseteq \mathcal{I}^h(\mathcal{G})$. Thus $\pi(\mathcal{I}^h(\mathcal{G}) = \{0\}$.
    We end up with a valid $*$-homomorphism $\pi: \mathbb{C}[\mathcal{F}(|I|, |O|)] \to \mathcal{B}(\mathcal{H})$ vanishing on $\mathcal{I}^h(\mathcal{G})$, such that $\mathcal{I}(\mathcal{G}) \subseteq \mathcal{I}^h(\mathcal{G}) \subseteq \ker \pi$.
    \par Note that $\mathcal{A}^c(\mathcal{G})$ is the maximal subalgebra of $\mathcal{A}(\mathcal{G})$ such that a $*$-homomorphism $\pi: \mathbb{C}[\mathcal{F}(|I|, |O|)] \to \mathcal{B}(\mathcal{H})$ subject to $\mathcal{I}(\mathcal{G}) \subseteq \ker \pi$ exists. Due to the presence of such map $\pi$ on any non-trivial subalgebra $\mathcal{A}^h(\mathcal{G})$, by the maximality of $\mathcal{A}^c(\mathcal{G})$, we claim that $\mathcal{A}^h (\mathcal{G}) \subseteq \mathcal{A}^c(\mathcal{G})$. Combined with the reverse inclusion in Remark \ref{remark:CstarAndHered}, we have $\mathcal{A}^h(\mathcal{G}) = \mathcal{A}^c(\mathcal{G})$, as desired.
\end{proof}

\begin{proposition}
    $\omega_{hered}(G) \leq \vartheta(\overline{G}) \leq \chi_{hered}(G)$.
\end{proposition}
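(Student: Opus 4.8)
The plan is to reduce this to the $C^*$-version $\omega_{C^*}(G) \le \vartheta(\overline{G}) \le \chi_{C^*}(G)$ of Ortiz and Paulsen \cite{Ortiz2016}, by showing that the hereditary clique and chromatic numbers agree with their $C^*$-analogues. Recall that both hereditary parameters are defined through (non-)triviality of the hereditary subalgebra $\mathcal{A}^h$ of an appropriate graph homomorphism game --- namely $\omega_{hered}(G) = \max\{n : \mathcal{A}^h(\Hom(K_n, G)) \ne (0)\}$ and $\chi_{hered}(G) = \min\{n : \mathcal{A}^h(\Hom(G, K_n)) \ne (0)\}$ --- and likewise with $\mathcal{A}^c$ in place of $\mathcal{A}^h$ for the $C^*$-parameters.

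The first step is to observe that a graph homomorphism game $\Hom(G, H)$ is always a synchronous game, so Theorem~\ref{thm:CstarEqualsHered} applies directly and gives $\mathcal{A}^h(\Hom(G, H)) = \mathcal{A}^c(\Hom(G, H))$; in particular one of these vanishes precisely when the other does. Specializing once to the games $\Hom(K_n, G)$ and once to $\Hom(G, K_n)$ yields $K_n \overset{hered}{\longrightarrow} G \Leftrightarrow K_n \overset{C^*}{\longrightarrow} G$ and $G \overset{hered}{\longrightarrow} K_n \Leftrightarrow G \overset{C^*}{\longrightarrow} K_n$. The second step is then purely formal: comparing the defining extremal problems term by term gives $\omega_{hered}(G) = \omega_{C^*}(G)$ and $\chi_{hered}(G) = \chi_{C^*}(G)$. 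Substituting these two equalities into the Ortiz--Paulsen sandwich inequality completes the argument.

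I do not anticipate a genuine obstacle here: the entire analytic content --- the Eidelheit--Kakutani separation (Theorem~\ref{thm:KakutaniSep}) together with the GNS-type construction --- has already been absorbed into the proof of Theorem~\ref{thm:CstarEqualsHered}. The only point deserving care is the bookkeeping: one must confirm that the notion of perfect $hered$-strategy entering $\omega_{hered}$ and $\chi_{hered}$ is exactly ``$\mathcal{A}^h \ne (0)$'' for the relevant homomorphism games, so that Theorem~\ref{thm:CstarEqualsHered} can legitimately be invoked game by game. It is also worth remarking that, in conjunction with Proposition~\ref{prop:lovasz}, this settles the $t = hered$ half of the question of \cite{helton2019algebras} (Problem 3.16 there).
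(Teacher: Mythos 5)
Your proposal is correct and is exactly the paper's argument: invoke Theorem~\ref{thm:CstarEqualsHered} for the synchronous games $\Hom(K_n, G)$ and $\Hom(G, K_n)$ to get $\omega_{hered}(G) = \omega_{C^*}(G)$ and $\chi_{hered}(G) = \chi_{C^*}(G)$, then substitute into the Ortiz--Paulsen inequality. No further comment is needed.
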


\begin{proof}
    As a direct consequence of Theorem \ref{thm:CstarEqualsHered}, $\omega_{hered}(G) = \omega_{C^*}(G)$ and $\chi_{hered}(G) = \chi_{C^*}(G)$. Combining the inequality $\omega_{C^*}(G) \leq \vartheta(\overline{G}) \leq \chi_{C^*}(G)$ the result follows.
\end{proof}

\section{Quantum version of Boolean constraint systems and NP-hardness reductions}
We now turn to explore the quantum version of general Boolean constraint systems (BCS). BCS arises in the study of the two-player non-local games with shared quantum states and measurement operators \cite{Cleve2014}. The transformation of BCS from the classical setting to its quantum counterpart is natural, as follows:

\begin{definition}
    A BCS is described by a tuple $\mathscr{B} = (X, \{U_i, p_i \}_{i=1}^m )$, where $X = \{x_j\}_{j=1}^n$ is the \textbf{universe} of Boolean variables taking values in $\{0, 1\}$. For each $i \in [m]$, $U_i \subseteq X$ is the \textbf{context} of constraint $i$, and $p_i: \{0,1\}^{|U_i|} \to \{0, 1\}$ is a polynomial associated with $U_i$, which evaluates to $1$ if and only if the Boolean assignment of $U_i$ is satisfying. $\mathscr{B}$ has a classical satisfying assignment if there exists a Boolean value assignment of $X$ such that all constraints $p_i: i \in [m]$ evaluate to $1$.
    \par Following Ji \cite{Ji2013}, $\mathscr{B}$ has \textbf{quantum (operator) satisfying assignment} if there exists a Hilbert space $\mathcal{H}$ with $\dim \mathcal{H} < \infty$, and a unital $*$-homomorphism $\pi: X \to \mathcal{B}(\mathcal{H})$ satisfying $p_i(\pi(x):~x \in U_i) = I_{\mathcal{H}}$. Moreover, it is required that operators in the same context pairwise commute, \ie, for any $i \in [m]$, $x, y \in U_i$, $[\pi(x), \pi(y)] = 0$. Such satisfiability is notably equivalent to the perfect $q$-strategy defined in Section \ref{sec:AlgebraDefinition}.
\end{definition}

\begin{remark}
    For a certain class of BCS problems $\textnormal{\texttt{BCS}}$, we use $\textnormal{\texttt{BCS}}^*$ to denote its quantum version. In the latter context, we use $\textnormal{\texttt{A}} \leq_p \textnormal{\texttt{B}}$ to specify that the problem class $\textnormal{\texttt{A}}$ is \textbf{polynomial-time (or Karp) reducible} to class $\textnormal{\texttt{B}}$, which indicates that solving $\textnormal{\texttt{A}}$ is not harder than solving $\textnormal{\texttt{B}}$.
\end{remark}

\begin{theorem} 
\label{thm:reductions}
\cite{Ji2013}
    $\prob{3-SAT}^* \leq_p \prob{3-Coloring}^*$, $ \prob{3-SAT}^* \leq_p \prob{1-in-3-SAT}^*$, $\prob{k-SAT}^* \leq_p  \prob{3-SAT}^*$, and $\prob{3-SAT} \leq_p \prob{3-SAT}^*$.
\end{theorem}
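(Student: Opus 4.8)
Theorem~\ref{thm:reductions} bundles together four separate reductions, all originally due to Ji~\cite{Ji2013}; the strategy is to treat each as a gadget-based Karp reduction and verify that the gadget which works classically also respects the quantum/operator semantics, \emph{i.e.} preserves commutation within contexts and the existence of a finite-dimensional $*$-representation solving every constraint. For $\prob{3\text{-}SAT} \leq_p \prob{3\text{-}SAT}^*$ the content is essentially vacuous: any classical satisfying assignment $x_j \mapsto b_j \in \{0,1\}$ yields the one-dimensional $*$-homomorphism $\pi\colon x_j \mapsto b_j \in \mathcal{B}(\mathbb{C}) = \mathbb{C}$, which trivially has pairwise commuting images, so I would just record that the identity map on instances is the required reduction and that soundness is immediate.

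For the three genuine reductions I would build each classical gadget and then check the operator lift. For $\prob{k\text{-}SAT}^* \leq_p \prob{3\text{-}SAT}^*$, take the standard clause-splitting gadget: a clause $(\ell_1 \vee \dots \vee \ell_k)$ is replaced by $(\ell_1 \vee \ell_2 \vee y_1) \wedge (\bar y_1 \vee \ell_3 \vee y_2) \wedge \dots$, introducing fresh auxiliary variables $y_1,\dots,y_{k-3}$. The point to verify is that the auxiliary variables appear only in the new small clauses, so within each new context the operators are among $\{\pi(\ell_i),\pi(y_j),\pi(\bar y_j)\}$; given an operator solution of the $k$-SAT instance one defines the $\pi(y_j)$ as the obvious polynomial (a projection) in the commuting operators $\pi(\ell_1),\dots,\pi(\ell_i)$ from a common context chain, so commutativity within each new context is inherited, and conversely restricting an operator solution of the $3$-SAT instance recovers one of the original. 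For $\prob{3\text{-}SAT}^* \leq_p \prob{1\text{-}in\text{-}3\text{-}SAT}^*$ I would use Schaefer's classical gadget replacing each clause by a bounded set of \prob{1-in-3} clauses over fresh variables, and again check that (i) each new clause's context has all pairwise commuting operator images when the originals do, using that the gadget variables are functionally determined, and (ii) the polynomial identities $p_i(\pi(x)) = I$ transfer in both directions. For $\prob{3\text{-}SAT}^* \leq_p \prob{3\text{-}Coloring}^*$ I would reuse Ji's graph gadget (palette triangle, variable gadgets with two output nodes forced to opposite colors, and OR-gadgets per clause), translate a proper operator $3$-colouring into operator variable assignments by composing with the projections onto the two non-palette colours, and verify the commutation bookkeeping edge-by-edge.

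The structural template for all three is: (1) define the instance map and note it is polynomial-time; (2) \textbf{completeness} --- from a finite-dimensional operator solution of the source instance, construct one for the target by extending $\pi$ to the fresh variables via explicit projection-valued formulas in already-present operators, so no dimension blow-up and automatic membership in $\mathcal{B}(\mathcal{H})$; (3) \textbf{soundness} --- from an operator solution of the target, restrict/marginalize to recover a source solution, discarding the gadget variables; (4) in both directions discharge the within-context commutation requirement by tracking which variables share a context and invoking that a projection which is a noncommutative polynomial in a commuting family still commutes with that family. Equivalently, one can phrase the whole argument through the synchronous-algebra picture of Section~\ref{sec:AlgebraDefinition}: exhibit mutually inverse (up to the relevant quotients) $*$-homomorphisms between the game $*$-algebras $\mathcal{A}(\mathcal{G}_{\text{source}})$ and $\mathcal{A}(\mathcal{G}_{\text{target}})$, which is often cleaner than juggling representations.

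The main obstacle is not any single algebraic manipulation but the commutation bookkeeping in the completeness direction: one must choose the defining formula for each auxiliary/gadget variable so that its image lands in the commutant of every other operator sharing one of its contexts, and this requires the gadget contexts to be laid out so that the needed operators all come from a single commuting context of the source instance. For \prob{3-Coloring} in particular, Ji's OR-gadget introduces several internal vertices, and one has to check that the colouring-projection translation keeps all of a clause-gadget's internal operators inside one mutually commuting set --- this is where I expect the argument to require the most care, and where I would lean directly on the explicit gadget of~\cite{Ji2013} rather than reconstruct it.
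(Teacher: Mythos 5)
The paper does not actually prove Theorem \ref{thm:reductions}; it quotes it from \cite{Ji2013}. Measured against Ji's arguments (which the surrounding discussion in the paper summarizes), your sketch has two genuine gaps. First, the claim that $\prob{3-SAT} \leq_p \prob{3-SAT}^*$ is ``essentially vacuous'' via the identity map is wrong. The identity map gives completeness (a classical assignment is a one-dimensional operator assignment), but soundness fails: there are $\prob{3-SAT}$ instances that are operator-satisfiable yet classically unsatisfiable. For example, each parity constraint of the Mermin--Peres magic square BCS can be rewritten as four OR-clauses over the same three variables without changing the contexts; the resulting $\prob{3-SAT}$ instance inherits the quantum satisfying assignment while remaining classically unsatisfiable. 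Establishing $\prob{3-SAT} \leq_p \prob{3-SAT}^*$ is therefore a substantive step in \cite{Ji2013}, requiring a construction that forces any operator solution of the produced instance to collapse to a classical one; the identity map does not do this.

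Second, your template for the three gadget reductions is to ``reuse'' the classical gadgets and ``verify the commutation bookkeeping.'' The content of Ji's result is that this verification fails and the classical gadgets must be \emph{augmented} with commutativity-forcing structure. The failure is in the soundness direction: after clause-splitting, two literals $\ell_1,\ell_k$ of an original $k$-clause no longer share any context of the target instance, so an operator solution of the $\prob{3-SAT}^*$ instance gives no reason for $\pi(\ell_1)$ and $\pi(\ell_k)$ to commute, yet the source instance demands it; likewise, in the $3$-colouring gadget an operator colouring does not force the projections attached to distinct variables of a single clause to commute. Ji repairs this by inserting, e.g., triangular prisms between the vertices corresponding to variables sharing a clause (exactly the point the present paper emphasizes right after the theorem statement, and the reason Theorem \ref{thm:3satToClique} needs its own modified gadget $G_\phi^*$). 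Your proposal names the right difficulty but treats it as bookkeeping on the unmodified classical gadgets rather than as the place where new gadgetry must be introduced, so as written the soundness halves of all four reductions are unproved.
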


Among the many propositions of Theorem \ref{thm:reductions}, we are particularly concerned with the first one. Ji's reduction utilizes the classical reduction gadget graph with extra structures that ensure the local commutativity of the operator assignment. More specifically, it is necessary to add triangular prisms between vertices corresponding to variables in the same clause (same context) to force commutativity in the gadget. Harris exploits this prism in \cite{harris2023universalitygraphhomomorphismgames} to deduce a so-called weak $*$-equivalence between a synchronous game $\mathcal{G}$ and a $3$-coloring game on its induced graph. Ji's remark right before Lemma 3 in the literature suggests that such construction is rather specific and does not apply to other quantum versions of NP-hardness reductions. In light of Ji's 
comment, we provide another reduction that preserves the quantum satisfiability criteria, which is the main result of this section.

\begin{theorem}
\label{thm:3satToClique}
    $\prob{3-SAT}^* \leq_p \prob{Clique}^*$.
\end{theorem}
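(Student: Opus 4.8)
\emph{Proof proposal.} The plan is to adapt the classical Karp reduction $\prob{3-SAT}\le_p\prob{Clique}$, but with a twist suited to the operator setting: instead of the usual one-vertex-per-literal gadgets, each clause of $\phi$ gets a gadget whose vertices are the satisfying \emph{partial} assignments of that clause. Bundling a clause's variables into a single ``joint-assignment'' vertex is exactly what makes local commutativity come for free, so---answering the worry in Ji's remark---no prism-type commutation gadget is needed. After routine preprocessing we may assume each clause of $\phi$ involves two or three distinct variables and every variable occurs in some clause; let $C_1,\dots,C_m$ and $x_1,\dots,x_n$ be the clauses and variables. For each $i$ let $Z_i$ be the set of assignments of $\mathrm{vars}(C_i)$ satisfying $C_i$ (so $|Z_i|\in\{3,7\}$), and build $G_\phi$ on the disjoint union $\bigsqcup_i Z_i$ by making each $Z_i$ an independent set and joining $z\in Z_i$ to $z'\in Z_j$, for $i\ne j$, exactly when $z$ and $z'$ agree on every variable of $\mathrm{vars}(C_i)\cap\mathrm{vars}(C_j)$. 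The reduction outputs $(G_\phi,m)$, which is plainly polynomial-time; as a classical sanity check, an $m$-clique of $G_\phi$ must take exactly one vertex per gadget and the edges then stitch these into a globally consistent satisfying assignment, recovering the usual reduction. It remains to prove that $\phi\in\prob{3-SAT}^*$ if and only if $\Hom(K_m,G_\phi)$ has a perfect $q$-strategy.

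For the forward implication, let $\pi$ be a quantum operator assignment for $\phi$ on a finite-dimensional $\mathcal H$, so the $\pi(x_k)$ are projections and, within each clause, the operators on its variables pairwise commute. Given $C_i$ with $\mathrm{vars}(C_i)=\{y_1,y_2,y_3\}$ and $z\in\{0,1\}^3$, put $E^i_z:=\pi(y_1)^{z_1}\pi(y_2)^{z_2}\pi(y_3)^{z_3}$ with $P^1:=P$, $P^0:=I-P$; the commuting factors make the $E^i_z$ mutually orthogonal projections summing to $I$, and for the unique falsifying assignment $z^\ast$ one gets $E^i_{z^\ast}=\prod_r(I-\pi(\ell^i_r))=I-p_i(\pi)=0$, so $\{E^i_z\}_{z\in Z_i}$ is a PVM. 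Define a strategy for $\Hom(K_m,G_\phi)$ by $e_{i,z}:=E^i_z$ for $z\in Z_i$ and $e_{i,v}:=0$ otherwise; for each position $i$ the family $\{e_{i,v}\}_v$ is a PVM, and the only relations not already built into that structure are $e_{i,z}e_{j,z'}=0$ for $i\ne j$ and $z\not\sim_{G_\phi}z'$. But then $z$ and $z'$ disagree on some shared variable $x_k$, and from $E^i_z\le\pi(x_k)^a$ and $E^j_{z'}\le\pi(x_k)^{1-a}$ one gets $E^i_zE^j_{z'}=E^i_z\,\pi(x_k)^a\pi(x_k)^{1-a}\,E^j_{z'}=0$. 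Hence $\Hom(K_m,G_\phi)$ has a perfect $q$-strategy.

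For the converse, let $\{e_{p,v}\}_{p\in[m],\,v\in V(G_\phi)}$ be a perfect $q$-strategy on a finite-dimensional $\mathcal H$ and set $\tilde e_v:=\sum_p e_{p,v}$. Since each $Z_i$ is independent, $G^{(p)}_i:=\sum_{v\in Z_i}e_{p,v}$ satisfies $G^{(p)}_iG^{(q)}_i=0$ for $p\ne q$ while $\sum_i G^{(p)}_i=I$; so each $\sum_p G^{(p)}_i$ is a projection, their sum over $i$ is $mI$, and there are exactly $m$ of them, which forces $\sum_p G^{(p)}_i=I$ for every $i$. From this and the game relations, $\{\tilde e_z\}_{z\in Z_i}$ is a PVM for each $i$ and $\tilde e_z\tilde e_{z'}=0$ whenever $z\not\sim_{G_\phi}z'$ (within a gadget by the single-position PVM relations, across gadgets by the $p\ne q$ relations). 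For each variable $x_k$, pick any clause $C_i\ni x_k$ and define $\pi(x_k):=\sum_{z\in Z_i,\,z(x_k)=1}\tilde e_z$. If $x_k\in C_i\cap C_j$, every $z\in Z_i$ with $z(x_k)=1$ is non-adjacent to every $z'\in Z_j$ with $z'(x_k)=0$, so the ``$x_k=1$'' marginal read from $Z_i$ is dominated by the one read from $Z_j$; by symmetry they coincide, so $\pi(x_k)$ is independent of the chosen clause. For $x_k,x_{k'}\in C_i$ the operators $\pi(x_k),\pi(x_{k'})$ are both marginals of the single PVM $\{\tilde e_z\}_{z\in Z_i}$ and hence commute; and since $Z_i$ contains no falsifying assignment, $\prod_r(I-\pi(\ell^i_r))$, being the intersection event of sub-sums of that PVM, equals $0$, i.e.\ $p_i(\pi)=I$. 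Thus $\pi$ is a quantum operator assignment for $\phi$, finishing the equivalence.

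The only genuinely quantum content sits in the converse and is where I would focus the write-up: that the cross-gadget consistency edges force the ``$x_k=1$'' operator marginals coming from different clauses to agree (the dominate-both-ways step), together with the observation that reading a variable off a joint-assignment gadget automatically produces operators that commute inside a clause. It is exactly this gadget choice---joint partial assignments rather than literals---that dissolves the obstruction Ji flagged and removes any need for prism gadgets, so I do not expect a serious technical hurdle beyond it. The rest is routine bookkeeping: the preprocessing normalizing $\phi$, and the counting argument pinning $\{\tilde e_z\}_{z\in Z_i}$ down to a PVM.
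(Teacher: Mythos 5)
Your proof is correct, but it takes a genuinely different route from the paper's. The paper keeps the textbook literal-based gadget (one vertex $v_{i,\alpha}$ per literal occurrence, clusters $S_i$ of size $3$) and then surgically deletes edges so that equal literals get equal closed neighborhoods and negated literals get complementary ones; this forces it to prove two auxiliary lemmas in the converse direction (that $\sum_{v\in S_i}\sum_j X_{j,v}=I$, via a combinatorial shortest-path argument in the game algebra, and that neighborhood-equal vertices receive equal operator sums, via a $*$-automorphism argument), and in the forward direction to invoke a ``compression of projections'' to turn $\sum_\alpha\pi(x_{i,\alpha})\succeq I$ into an exact resolution of identity. Your choice of gadget --- one vertex per satisfying partial assignment of each clause, with consistency edges --- dissolves both difficulties at once: the within-clause commutativity hypothesis hands you the joint PVM $\{E^i_z\}_{z\in Z_i}$ for free in the forward direction (no compression needed, since the falsifying cell is exactly $\prod_r(I-\pi(\ell^i_r))=0$), and in the converse your counting argument ($m$ projections $\sum_p G^{(p)}_i$ summing to $mI$, each $\le I$, hence each $=I$) replaces the paper's Lemma 6.1, while the two-sided domination of marginals replaces its Lemma 6.2. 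The trade-offs: your gadget is slightly larger (up to $7m$ vertices versus $3m$) and your counting step uses positivity of Hilbert-space operators rather than being a purely algebraic identity in $\mathcal{A}(\Hom(K_m,G_\phi))$ --- which is all that is needed for $q$-satisfiability, and the paper itself notes a similar positivity shortcut in the remark following its Lemma 6.2 --- whereas the paper's construction stays closer to Ji's $3$-coloring reduction and yields cluster identities valid at the level of the synchronous algebra. Both arguments are sound; yours is shorter and, I think, more transparent about why no prism-type commutation gadget is required.
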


\begin{proof}
    Given any $\prob{3-SAT}^*$ instance $\phi = \bigwedge_{i=1}^m C_{i}$, where $C_i = \bigvee_{\alpha = 1}^{3} x_{i, \alpha}$. We denote the support of underlying variables as $x$, and each literal satisfies $x_{i, \alpha} \in x \cup \neg x$. For conciseness, we assume that within each clause there is no $x_{i, \alpha} = x_{i, \alpha'}$ or $x_{i, \alpha} = \neg x_{i, \alpha'}$ for $\alpha \neq \alpha' \in [3]$.
    \par The undirected gadget graph $G_{\phi}$ is built from $\phi$ as follows: for each clause $C_i$, create a cluster of vertices $S_i = \{v_{i, \alpha}\}_{\alpha=1}^3$ (\ie, $v_{i, \alpha}$ corresponds to literal $x_{i, \alpha}$), and set $V(G_{\phi}) = \bigsqcup_{i=1}^m S_i$. The edge set $E(G_{\phi}) = \left\{ \{ v_{i, \alpha}, v_{j, \alpha'} \}:~\forall i \neq j \in [m],~\alpha, \alpha' \in [3],~x_{i, \alpha} \neq \neg x_{j, \alpha'}     \right\}$. Classically, $\phi$ is satisfiable if and only if the gadget graph $G_{\phi}$ admits a $m$-clique.
    \par Like in \cite{Ji2013}, we need to modify the classical gadget accordingly to validate the reduction in the quantum framework. For each vertex pair $(v_{i, \alpha}, v_{j, \alpha'}) \in V(G_{\phi}) \times V(G_{\phi})$, we delete edges in $G_{\phi}$ to ensure that \textbf{(i)} if $x_{i, \alpha} = x_{j, \alpha'}$, then $N_{v_{i, \alpha}} \cup \{v_{i, \alpha}\} = N_{v_{j, \alpha'}} \cup \{v_{j, \alpha'}\}$; \textbf{(ii)} if $x_{i, \alpha} = \neg x_{j, \alpha'}$, then $N_{v_{i, \alpha}} \cup \{v_{i, \alpha}\} = N_{S_j \setminus \{ v_{j, \alpha'} \} } \cup ( N_{v_{i, \alpha}} \cap S_{j} \setminus \{v_{j, \alpha'}\} )$ (or equivalently, $N_{v_{j, \alpha'}} \cup \{v_{j, \alpha'}\} = N_{S_i \setminus \{ v_{i, \alpha} \} } \cup ( N_{v_{j, \alpha'}} \cap S_i \setminus \{v_{i, \alpha}\}  )$). This operation is applied recursively until there is no further graph contraction. We denote the resulting graph as $G_{\phi}^*$, \ie, the quantum gadget.
    \par If the $\prob{3-SAT}^*$ instance $\phi$ admits a satisfying idempotent operator assignment $\pi: x \cup \neg x \to \mathcal{B}(\mathcal{H})$ which sends $\neg x_{t}$ to $I - \pi(x_t)$ where $x_t$ is the positive variable in $x$, then at each clause $C_i$, $\prod_{\alpha \in [3]} (I - \pi(x_{i, \alpha})) = 0$. By the locally commutativity, this is equivalent to $\sum_{\alpha \in [3]} \pi(x_{i, \alpha}) \succeq I$. Without loss of generality, there exists a compression of projections $\iota \in \mathcal{B}(\mathcal{H})$ subject to $0 \preceq \iota \preceq I$ on $\pi(x \cup \neg x)$, such that $\image\left\{\iota(\pi(x_{i, \alpha}))\right\} \subseteq \image\left\{ \pi(x_{i, \alpha})  \right\}$, and $\sum_{\alpha \in [3]} \iota(\pi(x_{i, \alpha})) = I$. Define the compression map $\tau = \iota \circ \pi$. The quantum strategy of $\Hom(K_m, G_{\phi}^*)$ is constructed as follows: assign $X_{i, v_{i, \alpha}} = \tau(x_{i, \alpha})$, and $X_{i, v_{j, \alpha'}} = 0$ for any $\alpha' \in [3]$ and $j \neq i$. Such construction is valid: whenever $v_{i, \alpha}$ and $v_{j, \alpha'}$ are disconnected in $G_{\phi}^*$, there are following cases \textbf{(i)} if $i = j$ and $\alpha \neq \alpha'$, then by $\sum_{\alpha \in [3]} \tau(x_{i, \alpha}) = I$ and operators are idempotents, equality $X_{i, v_{i, \alpha}} X_{i, v_{i, \alpha'}} = 0$ is automatic; \textbf{(ii)} if $i \neq j$ and $x_{i, \alpha} = \neg x_{j, \alpha'}$ then by $\image(\tau(x_{i, \alpha}) \tau(x_{j, \alpha'})) \subseteq \image(\pi(x_{i, \alpha}) \pi(x_{j, \alpha'})) = \image(\pi(x_{i, \alpha}) - \pi(x_i, \alpha)^2) = \{0\}$, and thus $X_{i, \alpha} X_{j, \alpha'} = 0$; \textbf{(iii)} there exists $v_{k, \alpha''}$ such that $x_{i, \alpha} = x_{k, \alpha''}$, and $v_{i, \alpha} \sim_{G_{\phi}} v_{k, \alpha''}$ but $v_{j, \alpha'} \not \sim_{G_{\phi}} v_{k, \alpha''}$. Then it holds that $j = k$ and $\image(\tau(x_{i, \alpha}) \tau(x_{j, \alpha'})) = \image(\tau(x_{k, \alpha''}) \tau(x_{j, \alpha'})) = \{0\}$, implying $X_{i, v_{i, \alpha}} X_{j, v_{j, \alpha'}} = 0$; \textbf{(iv)} $i \neq j$ and there exists $v_{k, \alpha''}$ such that $x_{i, \alpha} = \neg x_{k, \alpha''}$, and $v_{i, \alpha} \in N_{S_k \setminus \{v_{k, \alpha''}\}}$ but $v_{j, \alpha'} \not \in N_{S_k \setminus \{v_{k, \alpha''}\}}$ in $G_{\phi}$. In $G_{\phi}^*$, we have $\tau\left( \sum_{\beta \neq \alpha'' } x_{k, \beta} \right) = I - \tau(x_{k, \alpha''})$, and by $\pi(x_{i, \alpha}) = I - \pi(x_{k, \alpha''})$ we have $\image(\tau(x_{i, \alpha})) \subseteq \image(I - \pi(x_{k, \alpha''}))$. Since in $G_{\phi}$, $v_{j, \alpha'} \not \in N_{S_k \setminus \{v_{k, \alpha''}\}}$, we have $ \pi\left( \sum_{\beta \neq \alpha'' } x_{k, \beta} \right) \pi(x_{j, \alpha}) = 0$ (due to an analogous argument to \textbf{(i)} and \textbf{(ii)}, in our construction, operators assigned to non-adjacent vertices in $G_{\phi}$ always yield product zero). Then in $G_{\phi}^*$, it holds that $\image( \tau(x_{i, \alpha}) \tau(x_{j, \alpha'})  ) \subseteq \image (\pi(x_{i, \alpha}) \tau(x_{j, \alpha'}) ) = \image( (I - \pi(x_{k, \alpha''})) \tau(x_{j, \alpha'})  ) \subseteq \image ( (I - \tau(x_{k, \alpha''})) \tau(x_{j, \alpha'}) ) = \image\left( \tau\left( \sum_{\beta \neq \alpha'' } x_{k, \beta} \right) \tau(x_{j, \alpha'}) \right) \subseteq \image\left( \pi\left( \sum_{\beta \neq \alpha'' } x_{k, \beta} \right) \pi(x_{j, \alpha})\right) = \{0\}$, and again we have $X_{i, v_{i, \alpha}} X_{j, v_{j, \alpha'}} = 0$.
    \par By the above construction, the synchronous condition is fulfilled by $X_{i, v_{i, \alpha}} X_{i, v_{j, \alpha'}} = \delta_{i, j} \delta_{\alpha, \alpha'} X_{i, v_{i, \alpha}}$, and the invalid responses vanish since whenever $v_{i, \alpha} \not \sim_{G_{\phi}^*} v_{j, \alpha'}$ with $i \neq j$, $X_{i', v_{i, \alpha}} X_{j', v_{j, \alpha'}} = \delta_{i, i'} \delta_{j, j'} X_{i, v_{i, \alpha}} X_{j, v_{j, \alpha'}} = 0$ for any $i' \neq j'$. Finally, for any $i \in [m]$, $\sum_{v \in V(G_{\phi}^*)} X_{i, v} = \sum_{v \in S_i} X_{i, v} = I$, and hence $\{X_{i, v}\}_{i \in [m], v \in V(G_{\phi}^*)}$ is a perfect quantum strategy of the graph homomorphism game $\Hom(K_m, G_{\phi}^*)$.
    \par If the $\prob{Clique}^*$ instance $G_{\phi}^*$ admits a perfect quantum strategy $\{X_{i, v}\}_{i \in [m], v \in V(G_{\phi}^*)}$, we assign operator $X_v := \sum_{i \in [m]} X_{i, v}$ to the literal corresponds to each $v \in V(G_{\phi}^*)$. By Lemma \ref{lemma:equivelnceWithSameNeighborhood} the structure of $G_{\phi}^*$ ensures that \textbf{(i)} the corresponding literals with the same underlying variable will be assigned the same operator; \textbf{(ii)} two literals which are mutually negated underlying variables will be assigned operators that sum up to $I$, since if $x_{i, \alpha} = \neg x_{j, \alpha'}$ then $X_{v_{i, \alpha}} = \sum_{v \in S_j} X_{v} - X_{v_{j, \alpha'}} = I - X_{v_{j, \alpha'}}$ by Lemma \ref{lemma:sumToIdentityCluster}.
    

     \par Since $X_v^2 = \sum_{i, j \in [m]} X_{i, v} X_{j, v} = \sum_{i \in [m]} X_{i, v}^2 = \sum_{i \in [m]} X_{i, v} = X_{v}$, variables $x$ are assigned to idempotents, and each clause $C_i$ is satisfied since $\sum_{v \in S_i} X_v = I$. Locally commutativity is naturally deduced, by $u \not \sim_{G_{\phi}^*} v$ for any $u \neq v \in S_i$, $[X_{u}, X_{v}] = \sum_{i, j \in [m]} [X_{i, u} X_{j, v} - X_{j, v} X_{i, u}] = 0$. Therefore, we obtain a valid quantum satisfying assignment for $\phi$ from $\{X_{i, v}\}_{i \in [m], v \in V(G_{\phi}^*)}$.
\end{proof}

\begin{lemma}\footnote{This property was initially verified through computer-assisted computation of Gr\"obner basis, for details see \url{https://github.com/HeEntong/BCS/blob/main/mathematica_codes/3SAT_Clique_Gadget.nb}.}
    \label{lemma:sumToIdentityCluster}
    For every $i \in [m]$, $\sum_{v \in S_i} \sum_{j \in [m]} X_{j, v} = I$.
    \end{lemma}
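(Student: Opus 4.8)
The plan is to recast the statement in terms of cluster-aggregated projections and then close it with a one-line positivity argument. Notably, the recursive edge-deletion conditions (i)--(ii) defining the quantum gadget are not needed here; only the clause-cluster partition of $V(G_\phi^*)$ matters.

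First I would set $A_{j,k}:=\sum_{v\in S_k}X_{j,v}$ for $j,k\in[m]$, so that the assertion becomes $B_i:=\sum_{j\in[m]}A_{j,i}=I$. Several facts are immediate from the game relations of $\Hom(K_m,G_\phi^*)$ and the fact that $\{X_{j,v}\}$ is a perfect quantum strategy: each $X_{j,v}$ is a self-adjoint idempotent, i.e. an orthogonal projection, and $\sum_{v\in V(G_\phi^*)}X_{j,v}=I$ for every $j$. Since the clusters $S_1,\dots,S_m$ partition $V(G_\phi^*)$, summing this resolution of the identity cluster by cluster gives $\sum_{k=1}^m A_{j,k}=I$ for each fixed $j$. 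Moreover each $S_k$ is an independent set of $G_\phi^*$ (no edge inside a clause was ever created, and the construction only deletes edges), so the relations $x_{j,u}x_{j',v}=0$ for $j\neq j'$ and $u\not\sim_{G_\phi^*}v$ (in particular for $u=v$), together with $x_{j,v}^2=x_{j,v}$ and $x_{j,u}x_{j,v}=0$ for $u\neq v$, yield $A_{j,k}^2=A_{j,k}$ and $A_{j,k}A_{j',k}=A_{j',k}A_{j,k}=0$ for $j\neq j'$. Thus each $A_{j,k}$ is a projection and, for each fixed $k$, the projections $A_{1,k},\dots,A_{m,k}$ are mutually orthogonal.

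The main step is then short. For each $k$, $B_k=\sum_{j}A_{j,k}$ is a sum of mutually orthogonal projections, hence itself a projection, so $0\preceq B_k\preceq I$, i.e. $I-B_k\succeq 0$. Summing over $k$ and interchanging the two finite sums, $\sum_{k=1}^m B_k=\sum_{j=1}^m\left(\sum_{k=1}^m A_{j,k}\right)=\sum_{j=1}^m I=mI$, so $\sum_{k=1}^m(I-B_k)=0$. A finite sum of positive semidefinite operators can vanish only if every summand vanishes; hence $I-B_k=0$ for all $k$, and in particular $B_i=I$, which is exactly $\sum_{v\in S_i}\sum_{j\in[m]}X_{j,v}=I$.

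This argument carries no serious obstacle; the only points needing a little care are (a) noting that in a perfect quantum strategy the generators are genuine orthogonal projections, so the operator inequality $I-B_k\succeq 0$ is legitimate --- this is where the $*$-structure, rather than the bare game algebra, is used; for a purely algebraic analogue one would instead work inside the positive cone $\mathcal{A}(\mathcal{G})^+$ and invoke Archimedeanity --- and (b) the elementary verifications $A_{j,k}^2=A_{j,k}$ and $A_{j,k}A_{j',k}=0$. It may be worth flagging that combinatorially this is the ``set'' analogue of Lemma~\ref{lemma:sumToIdentity}: there a single vertex lying in every clique forces $\sum_i x_{i,v}=1$, while here the cluster $S_i$ meeting every $m$-clique exactly once forces $\sum_j A_{j,i}=I$, with operator positivity playing the role local commutativity plays there; and the Gr\"obner-basis computation mentioned in the footnote is merely an independent check via the algorithm behind Theorem~\ref{thm:algCliqueNum}, heavier than necessary.
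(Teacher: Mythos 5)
Your proof is correct, and it takes a genuinely different and substantially shorter route than the paper's. The paper works inside the game algebra: it first proves an auxiliary claim that $\sum_{(v_\ell)\in\mathcal{V}_t} x_{1,v_1}\cdots x_{m,v_m}=0$ by a pigeonhole argument on tuples avoiding $S_t$ together with a decomposition by shortest paths, and then extracts the identity by expanding $\prod_j\sum_v x_{j,v}=1$ and subtracting; the operator statement is obtained only at the end by applying the representation. You instead aggregate into cluster operators $A_{j,k}=\sum_{v\in S_k}X_{j,v}$, use the independence of each $S_k$ (which does hold: $E(G_\phi)$ only joins vertices of distinct clusters and the gadget only deletes edges) to see that $A_{1,k},\dots,A_{m,k}$ are mutually orthogonal projections, and close with the counting identity $\sum_k B_k=mI$ plus the fact that a vanishing sum of positive semidefinite operators has vanishing summands. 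All the elementary verifications you flag do go through from the relations of $\Hom(K_m,\cdot)$, including $X_{j,u}X_{j',u}=0$ for $j\neq j'$ via irreflexivity of adjacency. The trade-off is the one you already identify: your argument uses positivity in $\mathcal{B}(\mathcal{H})$ and therefore proves the identity only in a $C^*$-representation, whereas the paper's combinatorial proof establishes $\sum_{v\in S_i}\sum_j x_{j,v}=1$ in $\mathcal{A}(\Hom(K_m,G_\phi^*))$ itself, which is the stronger fact the footnote's Gr\"obner-basis computation certifies. Since the lemma is stated for the operators $X_{j,v}$ of a perfect quantum strategy and is used only in that setting in Theorem \ref{thm:3satToClique}, your proof is sufficient as written; note also that the paper's own remark after the lemma sketches a positivity-based shortcut (via traces) in the same spirit as yours, though yours dispenses with the auxiliary claim entirely.
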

    \begin{proof}
        We present an auxiliary claim: suppose $\mathcal{V}_t = \left\{ (v_{1}, \dots, v_m):~\forall \ell \in [m],~v_\ell \not \in S_t \right\}$ is the set of vertex sets such that none of its members is from the vertex cluster $S_t$, then it holds that 
        \begin{equation}
        \label{ass:auxToReduction}
        \sum_{(v_i)_{i=1}^m \in \mathcal{V}_t} x_{1, v_1} x_{2, v_2} \cdots x_{m, v_m} = \sum_{\substack{(v_i)_{i=1}^m \in \mathcal{V}_t
        \\ v_i \sim_{G_{\phi}^*} v_{i+1} 
        }} x_{1, v_1} x_{2, v_2} \cdots x_{m, v_m} = 0.
        \end{equation}
        By the pigeonhole principle, since none of $(v_1, \dots, v_m)$ is from $S_t$, there must be a pair of vertices $v_i, v_j$ where $i, j \in [m]$, such that $v_i, v_j$ falls into the same cluster $S_k$ where $k \in [m] \setminus \{t\}$, with either $v_i = v_j$ or $v_i \neq v_j$. There might be multiple such pairs, and without loss of generality, it suffices to examine those connected by the shortest vertex disjoint path $v_i \leadsto v_j$ (we herein assume $i < j$), and if there are several candidates, we focus our analysis on a single selected pair. According to the length of such shortest paths, we decompose $\mathcal{V}_t$ as follows: $\mathcal{V}_t = \bigsqcup_{\substack{n \geq 2 \\ \exists k \in [m] \setminus \{t\} }} \left\{ (v_1, \dots, v_m):~v_i \leadsto v_j \text{ a shortest path},~v_i, v_j \in S_k,~j - i = n  \right\} $. If we fix any $v_i, v_j$ that mark the start and end of the shortest path,
        \begin{align*}
        \begin{aligned}
            \sum_{\substack{(v_\ell)_{\ell=1}^m \in \mathcal{V}_t: \\ v_i, v_j \in S_k
            , v_i \leadsto v_j} } x_{1, v_1} x_{2, v_2} \cdots x_{m, v_m} &= \sum_{\substack{(v_\ell)_{\ell=1}^m \in \mathcal{V}_t:  \\ v_i, v_j \in S_k, v_i \leadsto v_j
            } } \prod_{\ell < i} x_{\ell, v_\ell} \cdot ( x_{i, v_i} \cdots x_{j, v_{j}}) \cdot \prod_{\ell > j} x_{\ell, v_\ell} \\
            &= \sum_{\substack{(v_\ell)_{\ell < i \lor \ell > j}:  \\ (v_{\ell})_{\ell=1}^m \in \mathcal{V}_t
            } } \prod_{\ell < i} x_{\ell, v_\ell}  x_{i, v_i} \prod_{\ell=i+1}^{j-1} \left[\sum_{w \in V(G_{\phi}^*)} x_{\ell, w}\right]   x_{j, v_j} \prod_{\ell > j} x_{\ell, v_\ell} \\
            &= \prod_{\ell < i} x_{\ell, v_\ell} \left( x_{i, v_i} x_{j, v_j} \right) \prod_{\ell > j} x_{\ell, v_\ell} = 0.
        \end{aligned}
        \end{align*}
        The proof of the claim in (\ref{ass:auxToReduction}) is done by noting that $\sum_{(v_i)_{i=1}^m \in \mathcal{V}_t} x_{1, v_1} x_{2, v_2} \cdots x_{m, v_m} = \sum_{n \geq 2} \sum_{\substack{v_i \leadsto v_j \text{ shortest} \\v_i, v_j \in S_k, k \in [m]\setminus \{t\}, j - i = n}} x_{1, v_1} x_{2, v_2} \cdots x_{m, v_m} = 0$. Utilizing this identity, we can deduce the result by the following equality: fix any $i \in [m]$,
        \begin{equation}
        \begin{aligned}
        \prod_{j=1}^m \sum_{v \in V(G_{\phi}^*)} x_{j, v} &- \sum_{j=1}^{m} \left(\prod_{\ell < j} \sum_{w \in V(G_{\phi}^*)} x_{\ell, w}\right) \cdot \left( \sum_{v \in S_i} x_{j, v}  \right) \cdot \left( \prod_{\ell > j} \sum_{w \in V(G_{\phi}^*)} x_{j, w} \right) \\
        &= \sum_{\substack{(v_\ell)_{\ell=1}^m: \\
        \forall j \in [m],~v_j \not \in S_i
        }} x_{1, v_1} x_{2, v_2} \cdots x_{m, v_m} = \sum_{(v_\ell)_{\ell=1}^m \in \mathcal{V}_i} x_{1, v_1} x_{2, v_2} \cdots x_{m, v_m} = 0,
        \end{aligned}
        \end{equation}
        which is equivalent to $1 = \sum_{j=1}^m \sum_{v \in S_i} x_{j, v}$. To obtain the assertion stated in the lemma, we take the unital $*$-homomorphism $\pi: \mathcal{A}(\Hom(K_m, G_{\phi}^*)) \to \mathcal{B}(\mathcal{H})$ with $x_{i, v} \mapsto X_{i, v}$ and $1 \mapsto I$.
     \end{proof}

\begin{lemma}
     \label{lemma:equivelnceWithSameNeighborhood}
            If $N_u \cup \{u\} = N_v \cup \{v\}$ holds for two vertices $u \sim_{G_{\phi}^*} v $, $\{x_{i, v}\}_{i \in [m], v \in V(G_{\phi}^*)}$ are the generators of $\mathcal{A}(\Hom(K_m, G_{\phi}^*))$ then $\sum_{i \in [m]} x_{i, u} = \sum_{i \in [m]} x_{i, v}$. More generally, given vertex subsets $\mathscr{I} \subseteq S_i$ and $\mathscr{J} \subseteq S_j$ with $i \neq j$ subject to $\mathscr{I} \cap N_{\mathscr{J}} \neq \varnothing$ (or equivalently, $\mathscr{J} \cap N_{\mathscr{I}} \neq \varnothing$), and $N_{\mathscr{I}} \cup ( \mathscr{I} \cap N_{\mathscr{J}}  ) = N_{\mathscr{J}} \cup ( \mathscr{J} \cap N_{\mathscr{I}} )$, we have $\sum_{u \in \mathscr{I}} \sum_{i \in [m]} x_{i, u} = \sum_{v \in \mathscr{J}} \sum_{j \in [m]} x_{j, v} $.
     \end{lemma}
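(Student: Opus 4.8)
The plan is to prove the general statement and recover the special case as its instance $\mathscr{I}=\{u\}$, $\mathscr{J}=\{v\}$ --- note that $u\sim_{G_{\phi}^*}v$ forces $u$ and $v$ into distinct clusters, since every $S_\ell$ is an independent set of $G_{\phi}^*$. Throughout I abbreviate $X_w:=\sum_{k\in[m]}x_{k,w}$ for a vertex $w$, and set $Y_{\mathscr{I}}:=\sum_{u\in\mathscr{I}}X_u$ and $Y_{\mathscr{J}}:=\sum_{v\in\mathscr{J}}X_v$, so the goal is $Y_{\mathscr{I}}=Y_{\mathscr{J}}$. The one structural fact I will lean on at the end is that the generators $x_{k,w}$ of $\mathcal{A}(\Hom(K_m,G_{\phi}^*))$ are self-adjoint (and in the application inside the proof of Theorem~\ref{thm:3satToClique} they are honest projections), so each $X_w$, and hence each of $Y_{\mathscr{I}},Y_{\mathscr{J}}$, is self-adjoint.

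First I would establish a symmetric expansion of $Y_{\mathscr{I}}$ and $Y_{\mathscr{J}}$ into products of the $X_w$. Applying Lemma~\ref{lemma:sumToIdentityCluster} to the cluster $S_j$ gives $1=\sum_{w\in S_j}\sum_{c\in[m]}x_{c,w}$, hence $Y_{\mathscr{I}}=Y_{\mathscr{I}}\cdot 1=\sum_{u\in\mathscr{I}}\sum_{w\in S_j}\sum_{k,c}x_{k,u}x_{c,w}$. In this triple sum every $c=k$ term vanishes because $x_{k,u}x_{k,w}=0$ when $u\neq w$, which holds here since $u\in\mathscr{I}\subseteq S_i$, $w\in S_j$ and $i\neq j$; every $c\neq k$ term with $u\not\sim_{G_{\phi}^*}w$ vanishes by the defining relation $x_{k,u}x_{c,w}=0$ for non-adjacent outputs. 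Collapsing $\sum_{c\neq k}x_{c,w}=X_w-x_{k,w}$ and using $x_{k,u}x_{k,w}=0$ once more, I expect to reach
\[
Y_{\mathscr{I}}=\sum_{\substack{u\in\mathscr{I},\ w\in S_j\\ u\sim_{G_{\phi}^*}w}}X_uX_w,
\qquad
Y_{\mathscr{J}}=\sum_{\substack{u\in S_i,\ w\in\mathscr{J}\\ u\sim_{G_{\phi}^*}w}}X_wX_u ,
\]
the second being the mirror computation with $S_i$ in place of $S_j$.

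Next I would extract from the hypothesis $N_{\mathscr{I}}\cup(\mathscr{I}\cap N_{\mathscr{J}})=N_{\mathscr{J}}\cup(\mathscr{J}\cap N_{\mathscr{I}})$ two set containments by intersecting with the clusters. Intersecting with $S_j$: the term $\mathscr{I}\cap N_{\mathscr{J}}\subseteq\mathscr{I}\subseteq S_i$ is disjoint from $S_j$, and $N_{\mathscr{J}}\cap S_j=\varnothing$ because $\mathscr{J}\subseteq S_j$ is independent, so the identity reduces to $N_{\mathscr{I}}\cap S_j=\mathscr{J}\cap N_{\mathscr{I}}$, i.e.\ $N_{\mathscr{I}}\cap S_j\subseteq\mathscr{J}$; intersecting with $S_i$ gives symmetrically $N_{\mathscr{J}}\cap S_i\subseteq\mathscr{I}$. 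These two containments say precisely that the index set $\{(u,w):u\in\mathscr{I},\,w\in S_j,\,u\sim_{G_{\phi}^*}w\}$ of the $Y_{\mathscr{I}}$-sum equals the index set $\{(u,w):u\in S_i,\,w\in\mathscr{J},\,u\sim_{G_{\phi}^*}w\}$ of the $Y_{\mathscr{J}}$-sum. Taking adjoints in the $Y_{\mathscr{I}}$-expansion and using $X_w^*=X_w$ then gives $Y_{\mathscr{I}}^*=\sum_{u\in\mathscr{I},\,w\in S_j,\,u\sim w}X_wX_u=Y_{\mathscr{J}}$, and since $Y_{\mathscr{I}}=Y_{\mathscr{I}}^*$ the lemma follows. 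In the special case the same intersection computation reads $N_u\cap S_b=\{v\}$ (with $v\in S_b$), so the expansions become $X_u=X_uX_v$ and $X_v=X_vX_u$, and self-adjointness forces $X_u=X_v$.

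The main obstacle is the first step. One must be sure that, after inserting the cluster-wise resolution of the identity, the only surviving monomials are those indexed by an edge between $\mathscr{I}$ and $S_j$, and that the leftover sum over $k$ genuinely telescopes to a single product $X_uX_w$. This is exactly where Lemma~\ref{lemma:sumToIdentityCluster} is indispensable: it is the cluster-wise identity $1=\sum_{w\in S_j}\sum_{c}x_{c,w}$ --- not the much weaker $\sum_{c}x_{c,w}$ alone --- that makes the telescoping go through. Once that expansion is secured, the remainder is light bookkeeping: read off the two containments above, and transpose the factors of each product using self-adjointness of the $X_w$.
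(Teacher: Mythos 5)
Your argument is correct, but it takes a genuinely different route from the paper's. The paper proves the lemma by a symmetry argument: the hypothesis $N_u\cup\{u\}=N_v\cup\{v\}$ guarantees that the swap $x_{i,u}\leftrightarrow x_{i,v}$ (fixing all other generators) permutes the generating relations of $\mathcal{I}(\Hom(K_m,G_{\phi}^*))$, hence descends to a $*$-automorphism of the quotient; applying it to the cluster identity $\sum_{w\in S}\sum_i x_{i,w}=1$ of Lemma~\ref{lemma:sumToIdentityCluster} and cancelling the common terms gives $\sum_i x_{i,u}=\sum_i x_{i,v}$, and the general case is handled by ``contracting'' $\mathscr{I}$ and $\mathscr{J}$ to single vertices. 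You instead expand $Y_{\mathscr{I}}$ against the cluster-wise resolution of the identity over the opposite cluster, kill the diagonal and non-edge terms using the synchronicity and nullity relations, read off from the neighborhood hypothesis (intersected with $S_i$ and $S_j$) that the two edge index sets coincide, and close with self-adjointness of the $X_w$; I checked the telescoping $\sum_{k}\sum_{c\neq k}x_{k,u}x_{c,w}=X_uX_w$ and the set-theoretic reduction, and both are sound. Both proofs lean on Lemma~\ref{lemma:sumToIdentityCluster} as the essential input, but your computation treats the general case of subsets $\mathscr{I},\mathscr{J}$ directly and uniformly, whereas the paper's contraction step is only sketched; on the other hand, the paper's automorphism argument is shorter for the single-vertex case and makes the underlying graph symmetry more transparent. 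One small presentational point: you should state explicitly that each cluster $S_\ell$ is an independent set of $G_{\phi}^*$ (edges of $G_{\phi}$ only join distinct clusters and $G_{\phi}^*$ only deletes edges), since this is used both to get $N_{\mathscr{J}}\cap S_j=\varnothing$ in the intersection computation and to place $u$ and $v$ in distinct clusters in the special case.
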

     \begin{proof}
         We start by the following statement: suppose $\mathcal{I} = \left< h_1, \dots, h_\ell \right>$ is a $*$-closed two-sided ideal of a $*$-algebra $\mathcal{A}$ where $h_{\ell} \in \mathcal{A}$, and there exists a $*$-automorphism $\gamma$ on $\mathcal{A}$ such that $\left\{ h_1, \dots, h_{\ell} \right\} = \left\{ \gamma(h_1), \dots, \gamma(h_\ell)  \right\}$, then for any $f \in \mathcal{I}$, we have $\gamma(f) \in \mathcal{I}$. The proof is straightforward: $f \in \mathcal{I}$ implies that there exists $\{p_i \}_{i=1}^{\ell}, \{ q_i \}_{i=1}^{\ell} \subseteq \mathcal{A}$ such that $f = \sum_{i=1}^{\ell} p_i h_i q_i$. Then $\gamma(f) = \sum_{i=1}^{\ell} \gamma(p_i) \gamma(h_i) \gamma( q_i) \in \mathcal{I}$ since each $\gamma(h_i)$ is in $\{h_1, \dots, h_\ell\}$. 
         \par It can be checked if $u \sim_{G_{\phi}^*} v$ (thus $u, v$ are in distinct clusters) with $N_u \cup \{u\} = N_v \cup \{v\}$, then the generating relations of $\mathcal{I}(\Hom(K_m, G_{\phi}^*))$ is invariant under the unital $*$-automorphism $\gamma$ such that $\gamma(x_{i, u}) = x_{i, v}$, $\gamma(x_{i, v}) = x_{i, u}$ for all $i \in [m]$, and $\gamma|_{\{X_{i, w}:~w \neq u, v\}}$ is the identity map. Suppose $u$ is in cluster $S$, then $\sum_{w \in S} \sum_{i \in [m]} x_{i, w} = 1$, by Lemma \ref{lemma:sumToIdentityCluster} we have $\sum_{w \in S} \sum_{i \in [m]} \gamma(x_{i, w}) = \sum_{i \in [m]}  \gamma(x_{i, u}) + \sum_{w \in S \setminus \{u\} } \sum_{i \in [m]} \gamma(x_{i, w}) = \sum_{i \in [m]} x_{i, v} + \sum_{w \in S \setminus \{u\} } \sum_{i \in [m]} x_{i, w} = 1 = \sum_{i \in [m]} x_{i, u} + \sum_{w \in S \setminus \{u\} } \sum_{i \in [m]} x_{i, w}$, indicating that $\sum_{i \in [m]} x_{i, u} = \sum_{i \in [m]} x_{i, v}$. The generalization to the multiple vertices case is done by contracting $\mathscr{I}$ and $\mathscr{J}$ into two single vertices $v_{\mathscr{I}}$ and $v_{\mathscr{J}}$ respectively, and assign $x_{i, v_{\mathscr{I}}} \gets \sum_{u \in \mathscr{I}} x_{i, u}$, $x_{i, v_{\mathscr{J}}} \gets \sum_{u \in \mathscr{J}} x_{i, u}$.By an analogous argument one can show that $\sum_{i \in [m]} x_{i, v_{\mathscr{I}}} = \sum_{i \in [m]} x_{i, v_{\mathscr{J}}}$, as desired.
     \end{proof}

\begin{remark}
    The derivation of Lemma \ref{lemma:sumToIdentityCluster} replies purely on the algebraic properties of members of $\mathcal{A}(\Hom(K_m, G_{\phi}^*))$. Given that $X_{i, v}$ are positive operators, the product $\prod_{\ell=1}^m X_{\ell, v_{\ell}}$ with $v_i, v_j \in S_k$ is annihilated by noting 
    $$
    \begin{aligned}
    \Tr\left( X_{i, v_i} X_{i+1, v_{i+1}} \cdots X_{j, v_j}  \right) =  \Tr\left(X_{i+1, v_{i+1}} \cdots X_{j, v_j} X_{i, v_i}  \right) = 0 
    \implies \prod_{\ell=i}^{j} X_{\ell, v_{\ell}} \bigg|_{v_i, v_j \in S_k} = 0.
    \end{aligned}
    $$
     Consequently we have $\prod_{\ell=1}^m X_{\ell, v_{\ell}} = \prod_{\ell < i} X_{\ell, v_{\ell}} \cdot \prod_{\ell=i}^{j} X_{\ell, v_{\ell}} \cdot \prod_{\ell > j} X_{\ell, v_{\ell}} = 0$, which directly implies that the sum $\sum_{(v_{\ell})_{\ell=1}^m \in \mathcal{V}_{i}}  \prod_{\ell=1}^m X_{\ell, v_{\ell}}$ vanishes. This provides a more concise proof of the auxiliary claim.
\end{remark}
It is not hard to see the gadget utilized in the proof of Theorem \ref{thm:3satToClique} can also be applied to prove the polynomial reduction from $\prob{SAT}^*$ to $\prob{Clique}^*$, which is a more general analog. The structure of the reduction remains almost unchanged, and we state the result without proof here.

\begin{proposition}
    \label{prop:satToClique}
    $\prob{SAT}^* \leq_p \prob{Clique}^*$.
\end{proposition}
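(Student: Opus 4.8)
The plan is to run the construction of Theorem~\ref{thm:3satToClique} essentially verbatim, replacing the fixed clause width $3$ by the width $k_i$ of $C_i = \bigvee_{\alpha=1}^{k_i} x_{i,\alpha}$, and to check that no step actually used the number $3$. Given $\phi = \bigwedge_{i=1}^m C_i$ I would build the cluster graph $G_\phi$ with $V(G_\phi) = \bigsqcup_{i=1}^m S_i$, $S_i = \{v_{i,\alpha}\}_{\alpha=1}^{k_i}$, and edge set $\{\{v_{i,\alpha}, v_{j,\alpha'}\} : i \ne j,\ x_{i,\alpha} \ne \neg x_{j,\alpha'}\}$, then apply the same recursive edge-deletion enforcing conditions \textbf{(i)} $N_{v_{i,\alpha}}\cup\{v_{i,\alpha}\} = N_{v_{j,\alpha'}}\cup\{v_{j,\alpha'}\}$ when $x_{i,\alpha}=x_{j,\alpha'}$ and \textbf{(ii)} the corresponding neighborhood identity when $x_{i,\alpha} = \neg x_{j,\alpha'}$, obtaining the quantum gadget $G_\phi^*$. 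Each deletion removes at least one edge and $|V(G_\phi)| = \sum_i k_i$ is linear in $|\phi|$, so the gadget is produced in polynomial time; the classical correspondence ``$\phi$ satisfiable $\iff$ $G_\phi$ has an $m$-clique'' is unchanged, since each $S_i$ is an independent set, so an $m$-clique must pick exactly one vertex per cluster, i.e. one true literal per clause with no two of them negations of each other.

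For the forward direction, a quantum satisfying assignment $\pi : x \cup \neg x \to \mathcal{B}(\mathcal{H})$ gives $\prod_{\alpha=1}^{k_i}\bigl(I - \pi(x_{i,\alpha})\bigr) = 0$ at each clause; since the $\pi(x_{i,\alpha})$ within a context are commuting projections, this is equivalent to $\sum_{\alpha=1}^{k_i} \pi(x_{i,\alpha}) \succeq I$, using $\sum_\alpha P_\alpha \succeq \bigvee_\alpha P_\alpha$ for commuting projections. Exactly as in Theorem~\ref{thm:3satToClique} I would compress to $\tau = \iota\circ\pi$ with $\sum_{\alpha=1}^{k_i}\tau(x_{i,\alpha}) = I$ and $\image\bigl(\iota\pi(\cdot)\bigr)\subseteq\image\bigl(\pi(\cdot)\bigr)$, set $X_{i,v_{i,\alpha}} = \tau(x_{i,\alpha})$ and $X_{i,v_{j,\alpha'}} = 0$ for $j\ne i$, and verify $X_{i,u}X_{j,v}=0$ whenever $u\not\sim_{G_\phi^*} v$. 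Cases \textbf{(i)}–\textbf{(iv)} of the original proof transcribe word for word, with ``$\sum_{\alpha\in[3]}$'' replaced by ``$\sum_{\alpha\in[k_i]}$'' in case \textbf{(i)}; hence $\{X_{i,v}\}$ is a perfect $q$-strategy of $\Hom(K_m, G_\phi^*)$.

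For the converse, given a perfect $q$-strategy $\{X_{i,v}\}$ of $\Hom(K_m, G_\phi^*)$, put $X_v = \sum_{i\in[m]} X_{i,v}$. The two supporting lemmas carry over unchanged: Lemma~\ref{lemma:sumToIdentityCluster}, asserting $\sum_{v\in S_i}\sum_j X_{j,v} = I$, uses only that the $S_k$ are independent sets and the pigeonhole fact that a tuple $(v_1,\dots,v_m)$ avoiding one cluster must repeat a cluster, both insensitive to the $k_i$; and Lemma~\ref{lemma:equivelnceWithSameNeighborhood} is a purely graph-theoretic $*$-automorphism argument. Consequently literals over the same variable receive the same $X_v$, mutually negated literals receive $X_v$ and $I - X_v$, each $X_v$ is idempotent, and $\sum_{v\in S_i} X_v = I$ for each clause; multiplying this relation by a commuting projection $X_{v_{i,\alpha}}$ forces $X_{v_{i,\alpha}}X_{v_{i,\alpha'}} = 0$ for $\alpha\ne\alpha'$, hence $\prod_{\alpha=1}^{k_i}\bigl(I - X_{v_{i,\alpha}}\bigr) = 0$, i.e. $C_i$ is satisfied, and local commutativity within a context follows as before. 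This yields a quantum satisfying assignment for $\phi$ and completes $\prob{SAT}^* \leq_p \prob{Clique}^*$.

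The main obstacle is not conceptual but a matter of careful bookkeeping in the inhomogeneous setting: one must confirm that the recursive contraction still terminates and that conditions \textbf{(i)}–\textbf{(ii)} can be enforced simultaneously when clauses have widely differing widths (so that, for instance, a vertex of a long clause and a vertex of a short clause can be given compatible closed neighborhoods), and one must re-inspect the compression step $\iota$ to ensure a single compression works uniformly across all contexts even though the constraints $\sum_\alpha \pi(x_{i,\alpha}) \succeq I$ now involve varying numbers of projections. Everything else is a direct transcription of the $\prob{3-SAT}^*$ argument.
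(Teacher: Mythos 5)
Your proposal is correct and coincides with the paper's intended argument: the paper states this proposition without proof, remarking only that the gadget and structure of the $\prob{3\text{-}SAT}^*\leq_p\prob{Clique}^*$ reduction carry over essentially unchanged, which is exactly the transcription you carry out (replacing clause width $3$ by $k_i$ and observing that the pigeonhole and $*$-automorphism lemmas are insensitive to cluster sizes). No further comparison is needed.
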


\section*{Acknowledgements}
I thank Connor Paddock for his valuable introduction to non-local games and insightful discussions, which inspired the major topics of this report. I thank Prof. Anne Broadbent and Prof. Chenshu Wu for being my supervisor and co-supervisor during the internship. This work is jointly supported by Mitacs and The University of Hong Kong through the Globalink Research Internship 2024.


\bibliography{refs} \bibliographystyle{alpha}

\end{document}